\pdfoutput=1
\documentclass[paper=letter, fontsize=10pt,leqno]{scrartcl}
\usepackage[body={5.35in,7.5in}]{geometry} 
\usepackage[english]{babel}				% English language/hyphenation	% Better typography
\usepackage{amsmath}				% Math packages
\usepackage{amsfonts}
\usepackage{amsthm}
\usepackage{hyperref}
 \usepackage{verbatim}
\usepackage{graphicx}				% Enable pdflatex
\usepackage{url}
\usepackage{amssymb}
\usepackage{bbm}
\usepackage{amscd}
\usepackage{dutchcal}
\usepackage[format=plain]{caption}
\usepackage{wrapfig}
 \usepackage{float}
\usepackage{sectsty}					% Custom sectioning (see below)
\allsectionsfont{\centering \normalfont\scshape}	% Change font of al section commands
\usepackage{mathtools}
\DeclareCaptionStyle{ruled}{labelfont=bf,labelsep=space}
\usepackage[]{forest}
\usepackage[displaymath, mathlines]{lineno}

\newtheorem{theorem}{Theorem}

\newtheorem{definition}[theorem]{Definition}
\newtheorem{proposition}[theorem]{Proposition}
\newtheorem{lemma}[theorem]{Lemma}
\newtheorem{corollary}[theorem]{Corollary}

\newtheorem*{proposition*}{Proposition}

\title{
		\vspace{-1in} 	
		\normalfont \normalsize \textsc{} \\ [0pt]
		\huge Entropy production in Knudsen thermodynamics of compartmented systems
}

\author{\normalfont \large 
  T. Chumley\footnote{Department of Mathematics and Statistics, Mount Holyoke College, 50 College St, South Hadley, MA 01075}, 
\ R. Feres\footnote{Department of Mathematics, Washington University, Campus Box 1146, St. Louis, MO 63130},
\ Z. Zhang\footnotemark[2]
}

\begin{document}
%\linenumbers

%\date{}

\maketitle

\begin{abstract}
\begin{center}
Abstract 
\end{center}

\medskip

{\small 
We investigate entropy production and nonequilibrium transport in a class of random dynamical systems modeling a
 Knudsen gas confined to a compartmented container. The system consists of  a single particle undergoing  random billiard motion, with collisions leading to random reflection or transmission through semi-reflecting, compartment-separating walls.
 
The stationary entropy production rate is first expressed  as the relative entropy between forward and time‑reversed path measures. Under a {\em reciprocity} assumption, used to introduce temperature into the general random billiard system, it is shown how this information-theoretic definition reduces to the classical thermodynamic formula: the mean energy transferred to the walls divided by their local temperatures\----a stochastic Clausius relation. We then develop a modular analysis of compartmented systems. Each open compartment is characterized by a compartment scattering operator and its sojourn statistics (expected time, collision number, and entropy produced). The sequence of compartment entrance states defines a Markov chain whose stationary distribution is used in a renewal‑reward theorem to assemble the compartment contributions into the global entropy production rate.

The framework is illustrated with a series of examples of increasing complexity governed by a generalized Maxwell-Smoluchowski
scattering operator and amenable to detailed and explicit analysis. Such operators are defined by a few parameters: temperature, a partial thermal accommodation,
the  height of potential barriers, and a porosity coefficient.
The central example is a cyclic three-compartment system consisting of two thermal walls at different temperatures and a potential barrier. For full thermal accommodation we obtain closed-form expressions for the entropy production rate and   net probability circulation around the cycle, revealing a thermal ratchet effect analogous to thermal transpiration. 
%The methods developed here provide a systematic route from microscopic boundary scattering laws to macroscopic entropy production in rarefied gases.
}
\end{abstract}

\section{Introduction}
In kinetic theory of gases, a {\em Knudsen gas} refers to a regime  where the gas is so rarefied that the molecular mean free path is much larger than the characteristic dimensions of the receptacle containing it. In such a regime,  transport properties are determined to a much greater extent by collisions 
between  the container's walls and  individual gas molecules than by intermolecular collisions. The assumption of local equilibrium\----traditionally made in continuous models of out-of-equilibrium thermodynamical systems\----is not valid for Knudsen gases. When different parts of the receptacle's walls
are kept at different temperatures,  the infrequent  intermolecular collisions are not sufficient to drive molecular velocity  to a near-equilibrium Maxwellian distribution in small regions. Hence the notion of a local temperature does not make sense. The lack of interaction between molecules, on the other hand,  makes it meaningful to reduce the study of such systems  to an investigation of the stochastic motion of a single  gas molecule. This is naturally done by employing a {\em random billiard} model, which we do here. Such a model has been widely used. We cannot cite enough papers on this topic to do it justice  but only mention \cite{ChumFer21,E2001,Feres2007,khanin-yarmola,CPSV2009, BruinKT2025, Rudzis2025,FZ2010,FZ2012}, which are a few of the   studies we are  familiar with. In what follows, we refer to the  gas molecule as the {\em particle}.

The present paper is mainly concerned with entropy production in the stationary regime for random billiard models in compartmented domains.  These are regions in space  bounded by   rigid walls and partitioned into compartments  by permeable (semi-reflecting) walls.    Upon reaching a wall,  the particle is assumed to either cross it into another compartment or reflect according to a choice of random scattering law. Reflection and transmission are specified by substochastic (transition) kernels whose sum equals $1$ and satisfy a reciprocity\----or time-reversibility\----condition relative to the Maxwellian distribution of velocities at temperature $T$. It is by means of this condition   that we assign a temperature $T$ to a wall. We note that while the mainstream framework of stochastic thermodynamics traditionally introduces environmental temperature via bulk fluctuation-dissipation relations in continuous Langevin dynamics or through transition rates in Markov jump processes \cite{seifert, qian, sekimoto}, such systems generally must introduce temperature through system parameters for bulk thermal baths. Drawing from classical kinetic theory, our approach defines wall temperatures locally via boundary scattering kernels that satisfy a reciprocity condition relative to a surface Maxwellian. While this is a standard approach in the Boltzmann equation literature, we believe this is relatively unique within the study of stochastic thermodynamics.

 As the particle undergoes  a random flight inside the compartmented container, it mediates the transfer of energy between different parts of the system,  thus promoting heat transfer within the system. Our main goal is to investigate, in the stationary regime of this stochastic process, the resulting  entropy production rate as a function of system parameters. Entropy production rate is  defined in terms of the relative entropy of the laws of finite-length forward and backward trajectories in the stationary regime. Our first main results, stated explicitly in Theorem \ref{second_law} and Corollary \ref{corollary_of_main_theorem} express the entropy production in terms of the stationary average energy flux at wall collisions divided by temperature at those collision points. This can be viewed as a stochastic version of Clausius' second law of thermodynamics: on average heat is transferred from hot source to cold sink. Such a formulation of entropy production establishes that in order to derive explicit values of entropy production it is necessary to understand the stationary distribution of the random billiard Markov chain.

 When illustrating the theory with concrete examples, we employ the following elaboration of the classical Maxwell-Smoluchowski  model of scattering. This is a parametric family of transition probability kernels depending on  parameters $p$, $C$, $\alpha$ and $T$,  where $p$  is the local fraction of the wall surface occupied by gaps or pores through which the particle can move across the wall; $C$ is a potential associated to these gaps that needs to be overcome for actual crossing; $\alpha$ is the Maxwell-Smoluchowski parameter (accommodation coefficient) representing the probability that a particle  undergoes diffuse reflection or transmission; more precisely, the particle's random reflected velocity is distributed according to the surface Maxwell-Boltzmann distribution $\mu_T$ at temperature $T$ (full thermalization), while $1-\alpha$ is the probability that a particle   reflects specularly or transmits without change. A precise analytic expression will be given shortly.  These wall parameters may be position dependent.
 
The second set of main results, stated   throughout Section \ref{sec:entropy-open-compartments}, establishes methods for the analysis of examples on compartmented domains. We begin with the study of entropy production and associated statistical quantities of a single compartment open system. Associated to each single compartment system is a transition probability kernel, the compartment operator, that encodes the probability distribution of outgoing states from the compartment for each given incoming state. By considering a wide class of closed compartmented systems in terms of their constituent open compartment subsystems, we then show in Theorem \ref{entropy_multicompartment} that the entropy production of the  compartmented system can be computed from the entropy production rates of its subsystems. Along the way we introduce an auxiliary Markov chain, the {\em entrance Markov chain}, which describes the sequence of states upon successive transitions between compartments. In addition to simplifying the expression of entropy production (Theorem \ref{entropy_multicompartment}), the entrance Markov chain is used to define the probability flux circulation in systems with multiple open compartments.
 
The work in this paper fits into the inter-related research literature on models of heat conduction and the mathematical analysis of nonequilibrium phenomena through entropy production rate. These lines of research are united by the theme of studying models that are simple enough to be amenable to detailed analytic study while also being rich enough to demonstrate the kinds of transport phenomena seen in real thermodynamic systems. While there has been a steady interest in mechanical models of heat conduction and the study of nonequilibrium behavior in systems with thermostatted boundaries, it is less common to find results that give analytic, functional relationships between thermodynamic quantities like entropy production rate and parameters of the system beyond thermostat temperatures. We note that in the case of anharmonic oscillators driven by stochastic heat baths, which under certain general assumptions can be modeled by stochastic differential equations, there is a fairly complete set of results. Techniques like those in \cite{qian} for stochastic differential equations are used to study entropy production rate in \cite{MR1889227}. However, such results are at the mesoscopic scale, and the picture at the microscopic scale, the category into which our models fall, is much less complete. There is a varied collection of microscopic models based on Hamiltonian mechanical systems coupled to stochastic heat baths \cite{lin, collet-eckmann, larralde, MR2200889, MR3217533}, but the main focus of study in such systems has mainly been related to characterizing stationary states and convergence to stationarity. For the study of entropy production rate, \cite{qian} gives a comprehensive study of countable state Markov chains. Entropy production for general classes of dynamical systems and Markov processes is studied in, for example, \cite{MR3517572, MR1705589, jpr, hongqian, seifert}, but this approach is a bit different than the one we take, where we are concerned with how  explicit, microscopic details of the model affect entropy production rate. We also note that entropy production for random billiard models is studied in \cite{ChumFer21} but for narrower classes of collision models and domains.

Central to this paper is the proposal that mathematical models of complex stochastic thermodynamic systems can be profitably formulated as multi-compartment systems, and that the integration of the open compartment units into the whole proceeds through a systematic method of analysis. One begins with a detailed analysis of the open compartment modules and then connects them through what we refer to as the {\em entrance Markov chain}. In this approach, quantities such as the entropy production rate can be obtained by first computing their expected values during sojourns in each compartment and then, using a renewal-reward argument together with the stationary distribution of the entrance Markov chain, combining these expected values to obtain the corresponding quantity for the whole system. This method is formalized for the entropy production rate in Theorem \ref{entropy_multicompartment}. The examples in Section \ref{sec:examples} serve primarily to illustrate this method of analysis and are therefore kept as simple as possible. Nevertheless, our main example\----the cyclic three-compartment system\----already demonstrates how such simple models can provide clear and explicit insight into physical phenomena such as thermal transpiration.

The rest of the paper is organized as follows. Section \ref{sec:definitions} introduces definitions and notation for random billiards on domains with compartments. Such systems are defined in terms of random scattering operators which are in turn used to define the Markov transition operator which governs the evolution of our random dynamical system. In this section we also introduce the generalized Maxwell-Smoluchowski model of surface scattering. In addition to serving as a model of thermalization, the generalized MS collision model  defines transport between compartments. Section \ref{sec:time-reversal-entropy-production} establishes basic facts about time reversal and entropy production of random billiard Markov chains. Section \ref{sec:introducing_temperature} introduces temperature through a notion of local detailed balance that we call reciprocity. In this section we connect  the information-theoretic formulation of entropy production introduced in Section \ref{sec:time-reversal-entropy-production} to the thermodynamic formulation of classical statistical physics which expresses entropy production in terms of energy flux and temperature. The remainder of the paper focuses on deriving explicit values of entropy production when surface scattering is governed by the MS collision model. Section \ref{sec:entropy-open-compartments} establishes general results computing entropy production and related quantities in compartment systems with the MS collision model. Section \ref{sec:examples} demonstrates the results of Section \ref{sec:entropy-open-compartments} through a series of explicit examples.

\section{General definitions}\label{sec:definitions}
\subsection{Domain with compartments}
Let $M_1, \dots, M_m$ be smooth  $n$-dimensional Riemannian manifolds with corners (as defined in \cite{lee}), whose boundaries are piecewise smooth, each
smooth piece being itself a   submanifold with corners.
We call these $M_j$ {\em compartments}. Each pair  of compartments, $M_j, M_k$,  may (or may not) {\em share} a smooth boundary piece in the following sense: they may 
contain    boundary pieces, $W$ and $W'$, respectively,  which are isometric relative to the restriction of the    Riemannian metrics from $M_j$ and $M_k$ so that,  by gluing the two compartments    along $W$ and $W'$
via this isometry, we obtain a smooth Riemannian metric on the union $M_j\cup M_k$. It may be the case that $M_j$ and $M_k$ share more than one pair of
boundary pieces. After gluing  the compartments    along all shared pairs of boundary pieces according to this scheme we obtain, by assumption, a compact, smooth, oriented, $n$-dimensional,  Riemannian manifold $M$ with corners and piecewise smooth boundary, together with  a collection of smooth, codimension-one submanifolds resulting from the shared pair identifications. These smooth submanifolds will be called {\em separating walls}, while the boundary pieces forming the boundary of $M$ (those which were not paired) will be called {\em boundary walls}. We denote by  $\mathcal{W}$  the union of all the walls and use  $W$ when referring to a  single wall. The 
union of the boundary walls may also be written $\partial M$.

 We think of $M$ as a collection of boxes glued together along some of their faces.   It will represent the configuration manifold of a  mechanical system, such as a molecule, which can be,  at any given time, inside one of those boxes, and can move into other boxes upon colliding with separating walls according to a probabilistic  reflection/transmission law to be introduced shortly.  
 
   The reader who   prefers to think of $M$ as a submanifold of Euclidean coordinate space $\mathbb{R}^n$  will not incur a great loss; indeed,
  the examples we give
 will be in Euclidean space or flat torus. 
The main reason for the more general Riemannian setting is to allow $M$ to contain rotational degrees of freedom, although systems incorporating angular momentum are not explicitly considered here.  

To each wall $W$ we assign an orientation defined by a smooth, unit length  vector field $\mathbf{n}$,  orthogonal to the tangent space $T_qW$ at each regular point $q$ (namely, a point at which  the tangent space is uniquely defined).   On separating walls, we select  $\mathbf{n}$  arbitrarily from the two possible choices of sign 
and, on boundary walls, we choose   $\mathbf{n}$ pointing  into $M$. (This choice is, clearly, arbitrary. In some examples later on, we may use
the opposite convention. In many places, superscripts $\text{i}$ and $\text{o}$, for {\em incoming} and {\em outgoing}, will resolve any possible ambiguity. See further  below.)
  The value of $\mathbf{n}$ at $q\in W$ may be written $\mathbf{n}_q$ or $\mathbf{n}(q)$, depending on notational convenience. This applies to  vector fields more generally.

  Elements of the tangent bundle $TM$ will be referred to as {\em states} of the mechanical system. We write  $x=(q,v)\in TM$,  $v\in T_qM$, and the base-point projection  is denoted by
$\pi:TM\rightarrow M$. By definition,   $\pi(q,v)=q$. Given a point $q$  on a wall $W$, we write
%nonumber
\begin{equation}\label{def:W_q}
W_q:=\{v\in T_qM: \langle \mathbf{n}_q,v\rangle=0\},\ \ 
 W_q^\pm:=\{v\in T_qM: \pm\langle \mathbf{n}_q,v\rangle> 0\}.
\end{equation}
Thus $W_q=T_qW$  and the $W_q^\pm$ are the two half-spaces in $T_qM$. If $W$ is a boundary wall, $W_q^-$ (respectively, $W_q^+$) is the subspace of {\em pre-collision} (respectively, {\em post-collision}) velocities. For separating walls, a particle  (represented by a time-parametrized curve  in $M$ defining  the evolving  configuration of the mechanical system) may impinge on $W$, or be emitted from $W$, at $q$ with velocity 
in either  $W_q^-$ or $W_q^+$. Later on, the superscripts $\text{`i'}$ and $\text{`o'}$ will be used when referring  to {\em incoming} and {\em outgoing} states  at a collision point.   For example, a measure representing the probability distribution of incoming velocities in $W_q^-$, hence approaching $q$ from the side of $W$ to which $\mathbf{n}_q$ is pointing, may be written $\nu_q^{-\text{i}}$. 
 Let us further  denote by $\mathcal{N}$ the restriction of $TM$ to $\mathcal{W}$.
 This is a vector bundle over $\mathcal{W}$ with base-point map $\pi:\mathcal{N}\rightarrow\mathcal{W}$ and fiber $T_qM$ over $q\in \mathcal{W}$; and let $\mathcal{N}^\text{o}\subseteq \mathcal{N}$ be the fiber bundle such that $\mathcal{N}_q^{\text{o}}=T_qM$ whenever
 $q$ lies in a separating wall, and $\mathcal{N}_q^{\text{o}}=W_q^+$ when $q$ lies on a boundary wall.  We similarly define $\mathcal{N}^{\text{i}}$ with  $\mathcal{N}_q^{\text{i}}=W_q^-$  for $q$ in a boundary wall.

The Riemannian volume measure on $\mathcal{W}$ will be denoted $A$. This is the Lebesgue area measure
for surfaces in $\mathbb{R}^3$. Subscripts may be appended when referring to the restriction of $A$ to subsets of individual walls. 

\subsection{Free-flight segments}
We suppose that the Riemannian metric  on $M$ is complete and denote by
 $t\mapsto \varphi_t(x)$  the {\em geodesic flow} in $TM$. If $x=(q,v)$ and $q$ lies in a compartment $M_j$, a {\em free-flight segment} is a segment of geodesic
between consecutive  collisions with  walls. Recall that the geodesic flow is the Hamiltonian flow on $TM$ for the free motion
of a particle with mass parameter $m$ and kinetic energy function 
%keep number
\begin{equation}\label{def:kinetic} 
E(q,v)=\frac12m|v|_q^2,
\end{equation}
where $|v|_q^2=\langle v,v\rangle_q$.   If the  particle represents the configuration of an extended system, $m$ is the total mass, with other constituent masses and moments of inertia incorporated into the Riemannian inner product.  For
$x=(q,v)$ with $q\in \mathcal{W}$ and $v$   in $T_qM$, let
\begin{equation*} 
t(x):=\inf\left\{t>0: \pi(\varphi_t(x))\in \mathcal{W}\right\}.
\end{equation*}
If $q$ belongs to a boundary wall $W$, it is assumed that $v\in W_q^+$. 

\begin{definition}[Return map]
The {\em return (to a wall) map} $\mathcal{T}:\mathcal{N}^{\text{o}}\rightarrow \mathcal{N}^{\text{i}}$ is  defined by 
%keep number
\begin{equation}\label{return_W}
\mathcal{T}(x):=\varphi_{t(x)}(x).
\end{equation}
\end{definition}
Upon reaching a wall, the particle trajectory either reflects off it according to some random distribution of velocities or, in the case of a separating wall, moves 
into an adjacent compartment. The process of reflection or transmission will be described shortly in terms of Markov transition operators. 
An {\em orbit} of the random motion consists of a sequence of free-flight segments separated by the two kinds of wall collision events: reflection or transmission.

If we further introduce  a smooth potential  function  $\Phi:M\rightarrow \mathbb{R}$,  the geodesic flow is replaced with  the Hamiltonian flow 
 associated to the energy function 
 \begin{equation*}
 E = E_0 + \Phi\circ \pi
 \end{equation*}
 where $E_0$ is the kinetic energy function previously defined in (\ref{def:kinetic}). The presence of a non-constant potential function modifies the flight segments but does not affect
 the collision events at walls. If a potential $\Phi$ is present, we further assume that $\mathcal{T}$ is   finite at every regular point in $\mathcal{W}$.

 \subsection{The random dynamical system}\label{rds}
  The space of Borel probability measures on a topological space will be denoted $\mathcal{P}(X)$. Two probability measures are {\em equivalent} if they are mutually
 absolutely continuous. If $\mu\in \mathcal{P}(X)$ and $f$ is a $\mu$-integrable function on $X$, we denote as before the $\mu$-integral of $f$ by
 $\mu(f):=\int_Xf(x)\, d\mu(x)$. A family of  measures $\mu_s$ indexed by the elements $s$ of another measurable  space is said to depend measurably on $s$ if for all bounded continuous $f$ the
 function $s\mapsto \mu_s(f)$ is measurable. When writing measure elements in integration the two notations $d\mu(x)=\mu(dx)$ will be used according to convenience.  Recall that if $F:X\rightarrow Y$ is a measurable map between measurable spaces and $\mu$ is a probability measure on $X$, the {\em push-forward}
 of $\mu$ under $F$ is the probability measure $F_*\mu$ on $Y$ defined by
 \begin{equation*}
 (F_*\mu)(U):=\mu(F^{-1}(U))
 \end{equation*}
 on any measurable subset $U$ of $Y$. 
 From this definition it follows that if $f$ is a bounded continuous (test) function on $Y$, then
 $$\int_Y f\, (F_*\mu) =\int_X f\circ F\, d\mu. $$
 If $F$ is a self-map of $X$, then $\mu$ is said to be {\em invariant} under $F$ if $F_*\mu=\mu$.

For the next definition, recall the return to $\mathcal{W}$ map $\mathcal{T}$ defined in Equation (\ref{return_W}).
 \begin{definition}[General random scattering operator] 
Let $W\subseteq \mathcal{W}$ be a wall, and $q\in W$ a regular point. We define a random {\em scattering operator}   at $q$ to be  
a map that associates to  $v\in T_qM$ a probability measure \begin{equation*}S_{x}\in \mathcal{P}(T_qM),\end{equation*}
assumed  measurable in $x=(q,v)$.  Thus if $f$ is a bounded  continuous function on $TM$,  
\begin{equation*}
x\mapsto S(f)(x):=S_x(f)
\end{equation*}
is measurable. The conditional probability notation
\begin{equation*}
S_q(f|v)=S(f)(q,v)=\int_{T_qM} f(u) S_q(du|v)
\end{equation*}
will also be used.
 If $W$ is a boundary wall, then $S$ is a Markov kernel that to  each $v\in W_q^-$ (necessarily an incoming vector) associates a probability
measure $S_x\in \mathcal{P}(W_q^+)$. If $W$ is a separating wall, then $v$ may lie in either $W_q^-$ or $W_q^+$ and $S_x$ is a probability measure
on the whole of $T_qM$.  
 \end{definition}
 
 The just  introduced $S$  gives rise at each $q$ to a self-map $S_q$ on  $\mathcal{P}(T_qM)$ defined by $\nu\mapsto \nu S_q$ such that
 \begin{equation}\label{nu_S}
 (\nu S_q)(f) = \int_{T_qM} S(f)(q,v)\, d\nu(v),
 \end{equation}
  for each bounded
continuous $f:T_qM\rightarrow \mathbb{R}$. Going forward, we refer to such bounded continuous functions
used to characterize a measure  as {\em test functions}.

\begin{definition}[The one-step random map]\label{one_step_generator}
The map $B:\mathcal{N}^{\text{\em o}}\rightarrow \mathcal{P}(\mathcal{N}^{\text{\em o}})$ defined by 
\begin{equation*}
B(x)=B_x:= S_{\mathcal{T}(x)},
\end{equation*}
where $S_{(q,v)}:=S_q(\cdot|v)$ is the random scattering operator, will be called the {\em one-step random map}. Thus $B(x)$ is 
a probability measure on $\mathcal{N}_{\pi(\mathcal{T}(x))}^{\text{\em o}}$.
Given a probability measure $\nu$ on $\mathcal{N}^\text{\em o}$,   the action of $B$ on $\nu$, denoted $\nu B$, is another probability measure on $\mathcal{N}^\text{\em o}$ defined by
\begin{equation*}
(\nu B)(\cdot) :=\int_{\mathcal{N}^{\text{\em o}}}  B_x (\cdot)\, d\nu(x).
\end{equation*}
 If $f$ is a test function on $\mathcal{N}^{\text{\em o}}$, then the integral of $f$ 
with respect to $B_x$ may be expressed in any one of the following notations:
\begin{equation*}
(Bf)(x) = B_x(f) =(\delta_x B)(f) = S_{\mathcal{T}(x)}(f). 
\end{equation*}
Here $\delta_x$ is the point  mass (or Dirac delta-measure) supported on $x$. 
\end{definition}
The map   on probability measures defined by
 the one-step random map
is the composition of two operations: push-forward by the free-flight map $\mathcal{T}$
followed by the action of the random scattering kernel $S$: 
\begin{equation*}
\nu B = (\mathcal{T}_*\nu) S = \nu \mathcal{T}^* S.
\end{equation*}
Here $\mathcal{T}^*$ is the pull-back of functions: $\mathcal{T}^*f:=f\circ \mathcal{T}$. Thus 
\begin{equation*}
B=\mathcal{T}^*\circ S. 
\end{equation*}
We may write the pull-back operator $\mathcal{T}^*$  (for $\mathcal{T}$ and other maps) simply as $\mathcal{T}f=f\circ\mathcal{T}$ and reserve the superscript $*$ for Hilbert space adjoint. With this in mind, the push-forward of a measure  $\nu$ 
may, occasionally, be written  $\mathcal{T}_*\nu = \nu\mathcal{T}$, in keeping with the convention that operators will act on measures from the right and on functions from the left.

The operator  $B$ generates by iteration a random dynamical system under the composition of random maps:
\begin{equation*}
(B\circ B)_x(\cdot) = \int_{\mathcal{N}^\text{o}} B_y(\cdot) dB_x(y).
\end{equation*}

We further define
\begin{equation*}
\mathcal{D}^\text{o}:=\left\{(x,y)\in \mathcal{N}^{\text{o}}\times \mathcal{N}^\text{o}: \pi(y)=\pi(\mathcal{T}(x))\right\}.
\end{equation*} 
A doubly infinite  orbit of the random dynamical system is a sequence $\dots, x_{-1}, x_0, x_1, \dots$ in which every pair $(x_i, x_{i+1})$ lies in $\mathcal{D}^\text{o}$ and,
for each $x_i$, the state  $x_{i+1}$ is sampled from the probability measure $B_{x_{i}}$. More often, we consider semi-infinite orbits, $x_0, x_1, \dots$, similarly defined. Given an initial state $X_0$ (a random variable with a given initial probability distribution), an orbit of the random dynamical system can also be represented by a sequence of random variables $X_0, X_1, X_2, \dots$
on the state space $\mathcal{N}^{\text{o}}$ such that the  $X_i$ are related to the one-step map by the equation
\begin{equation*}
(Bf)(x)=\mathbb{E}[f(X_{i+1})|X_i=x],
\end{equation*}
 where $\mathbb{E}[\cdot|\cdot]$ indicates conditional expectation. Thus the sequence $X_0, X_1,\dots$ constitutes a discrete time Markov chain 
 with state space $\mathcal{N}^{\text{o}}$ and transition operator $B$.

 \begin{definition}[Stationary measure]
 A probability measure $\nu$ on $\mathcal{N}^{\text{\em o}}$ is said to be stationary for the Markov chain generated by $B$ if $\nu=\nu B$. Equivalently,
 for every test function $f$, 
 \begin{equation*}
 \int_{\mathcal{N}^{\text{\em o}}} S(f)(\mathcal{T}(x))\, d\nu(x) = \int_{\mathcal{N}^{\text{\em o}}} f(x)\, d\nu(x).
 \end{equation*}
 If $\nu$ is a stationary measure, we may also  write $\nu^{\text{\em i}}:=\mathcal{T}_*\nu$ and $\nu^{\text{\em o}}:=\nu$.
 \end{definition}
 
 Let $\mathcal{H}=L^2(\mathcal{N}^\text{o},\nu)$, where the probability measure $\nu$ is stationary for $B$, and denote by $\langle\cdot,\cdot\rangle$ the Hilbert space inner product.
\begin{proposition}
The one-step random map $B$, as an operator in $\mathcal{H}$,  has norm $\|B\|_2=1$. 
\end{proposition}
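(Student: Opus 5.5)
The plan is to bound $\|B\|_2\le 1$ by Jensen's inequality together with stationarity, and then to get the reverse inequality by testing $B$ on constants.

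For the upper bound, fix $f\in\mathcal{H}$. Recall that $(Bf)(x)=B_x(f)=\int f\,dB_x$, where $B_x$ is a probability measure on $\mathcal{N}^{\text{o}}$. Jensen's (or Cauchy--Schwarz) inequality applied to the probability measure $B_x$ gives the pointwise bound
\begin{equation*}
|(Bf)(x)|^2=\Bigl|\int f\,dB_x\Bigr|^2\le \int |f|^2\,dB_x=(B|f|^2)(x).
\end{equation*}
Integrating against $\nu$ and using $\nu B=\nu$, we obtain
\begin{equation*}
\|Bf\|_2^2\le \int_{\mathcal{N}^{\text{o}}} (B|f|^2)\,d\nu=(\nu B)(|f|^2)=\nu(|f|^2)=\|f\|_2^2 .
\end{equation*}
This yields $\|B\|_2\le 1$.

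One technical point must be checked. The stationarity identity is stated for bounded continuous test functions, while here we apply it to $|f|^2$ with $f\in L^2$. I will extend it first to bounded measurable functions by a monotone class argument, and then to nonnegative measurable functions by monotone convergence. The same step shows that $Bf$ is well defined $\nu$-a.e., since $B|f|^2$ is finite $\nu$-a.e. This also shows that $B$ is independent of the choice of representative of $f$: if $f=0$ $\nu$-a.e., then $\nu(B|f|)=\nu(|f|)=0$.

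For the lower bound, take the constant function $\mathbbm{1}$. Since each $B_x$ is a probability measure, $B\mathbbm{1}=\mathbbm{1}$, and $\|\mathbbm{1}\|_2=1$ because $\nu$ is a probability measure. Hence $\|B\|_2\ge 1$, so $\|B\|_2=1$.

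The only step needing care is the measure-theoretic extension of stationarity from test functions to $L^1(\nu)$ functions; everything else is immediate.
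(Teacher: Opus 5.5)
Your proof is correct and matches the paper's argument: pointwise Cauchy--Schwarz against the probability measure $B_x$, integration against $\nu$ with stationarity $\nu B=\nu$ to get $\|B\|_2\le 1$, and $B1=1$ for equality. Your extra step, extending stationarity from test functions to nonnegative measurable functions so that it applies to $|f|^2$, is used silently in the paper, and you handle it correctly.
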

\begin{proof}
Due to the Cauchy-Schwarz inequality and   $\nu$ and $S_x$ being probability measures, we have
$$\left|S_{\mathcal{T}(x)} f\right|^2 =\left|\int_{\mathcal{N}^\text{o}_{\pi(\mathcal{T}(x)})} f(y) S\left(dy|\mathcal{T}(x)\right)\right|^2\leq 
\int_{\mathcal{N}^{\text{o}}_{\pi(\mathcal{T}(x)})} |f(y)|^2 S\left(dy|\mathcal{T}(x)\right).$$
Therefore,
$$ \|Bf\|^2_2=\int_{\mathcal{N}^\text{o}} |S_{\mathcal{T}(x)}f|^2 \, d\nu(x)\leq \int_{\mathcal{N}^\text{o}} \int_{\mathcal{N}^{\text{o}}_{\pi(\mathcal{T}(x)})} |f(y)|^2 \, S\left(dy|{\mathcal{T}(x)}\right)\, d\nu(x)  =(\nu B)\left(|f|^2\right),$$
for any $f\in \mathcal{H}$. Stationarity of $\nu$ implies $(\nu B)(|f|^2)=\nu\left(|f|^2\right)=\|f\|^2_2$, hence $\| B\|_2\leq 1$. Furthermore, 
as $B1=1$ (here $1$ is the constant function identically equal to $1$),  we conclude that $\|B\|_2=1$. 
\end{proof}
\subsection{The  generalized  MS collision model}
 In the examples to be discussed later,
 we employ  a particularly simple and useful mathematical model of surface scattering,  similar to the well-known  Maxwell-Smoluchowski (MS) model of  reflection off a rigid   surface (to be recalled shortly), but modified so as to   allow transmission. In essence, we imagine that, at a small scale, the separating wall contains small 
 pores with an assigned potential value; when the particle collides with  the wall, it has a certain probability of impinging on a pore and, if its kinetic energy
 is sufficient to overcome the potential barrier, the particle can move to the other side. If the particle does not impinge on a pore or,  if it does, but without sufficient
 energy to overcome the potential barrier, it reflects according to the  standard MS model. 
  The parameters characterizing the model at a point $q$  are
 \begin{equation*}
 0\leq \alpha_r, \alpha_t \leq 1, \ \ 0\leq p\leq 1, \ \ C\geq 0, \ \ T>0,
 \end{equation*}
 which are interpreted as: (1) the probabilities  $\alpha_r, \alpha_t$ of diffuse reflection or transmission at temperature $T$, known as   {\em (thermal) accommodation coefficients},  (2) the fraction $p$ of surface area occupied by pores or, more precisely,  the probability $p$ that the particle 
 impinges on a  pore,  (3) the pore  potential $C\geq 0$.

It will be apparent as we describe the  model (that is, the random scattering operator $S$) in detail that it admits many possible variations, corresponding to somewhat different physical interpretations, but we adopt the following for concreteness.    Let the mass of the particle be $m$ and  introduce the speed parameter \begin{equation*}\gamma:=\sqrt{2C/m}.\end{equation*} 
For $\epsilon\in\{-,+\}$, define the {\em reflection} and {\em transmission} subsets of $W_q^\epsilon$ (see Figure \ref{RT}) by
\begin{equation*}
\mathsf{R}_q^\epsilon:=\{v\in W_q^\epsilon: |\langle v,  \mathbf{n}_q\rangle|<\gamma\}, \ \ 
\mathsf{T}_q^\epsilon:=\{v\in W_q^\epsilon:  |\langle v,  \mathbf{n}_q\rangle| >\gamma\}.
\end{equation*}
Thus the velocity vector $v$ belongs to $\mathsf{R}_q^\epsilon$ if the component of $v\in W_q^\epsilon$ perpendicular to the wall $W$ at $q$ does not
have sufficient kinetic energy to overcome the potential barrier:
\begin{equation*}
\frac12 m  \langle v,  \mathbf{n}_q\rangle^2<C.
\end{equation*}
The plane consisting of $v$ such that  $ \langle v,  \mathbf{n}_q\rangle=\epsilon \gamma$ partitions $W_q^\epsilon$ into the reflection and transmission subsets. We
indicate by 
\begin{equation*}\label{def_ref}
\overline{v}:=v-2\langle v,\mathbf{n}_q\rangle \mathbf{n}_q
\end{equation*}
the specular reflection  of $v$ on the tangent plane to $W$ at $q$. 
Let $\mu_T$ be the surface Maxwellian at temperature $T$. See Definition \ref{maxwellian_definition}.

   \begin{figure}[ht]
\begin{center}
\includegraphics[width=3in]{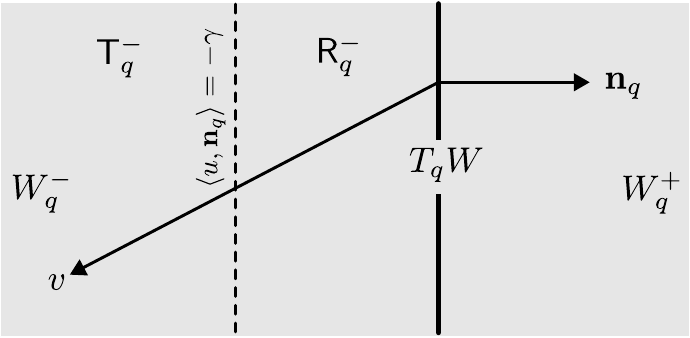}\ \ 
\caption{{\small  Partition of $W_q^-$ into reflection and transmission regions (separated by the dashed line). A particle with velocity $v$ in
$\mathsf{T}_q^-$ crosses the wall with unchanged velocity (thus remaining in $W_q^-$)   with probability $p$.}}
\label{RT}
\end{center}
\end{figure}

We only define the blocks $S_q^{--}$ and $S_q^{-+}$, the definitions of $S_q^{++}$ and $S_q^{+-}$ being similar.
Given a pre-collision velocity $v\in W_q^-$ and a Borel set $U\subseteq W_q^-$, the probability that the particle transmits and  
the transmitted velocity lies in $U$, conditional on the incoming velocity being $v$, is
\begin{equation*}
S_q^{--}(U|v):=p \mathbbm{1}_{\mathsf{T}^-_q}(v)\left[\alpha_t\mu_T(U)+(1-\alpha_t)\delta_{v}(U)\right].
\end{equation*}
If $U\subseteq W_q^+$, the probability of reflection, with reflection velocity in $U$, is
\begin{equation*}
S_q^{-+}(U|v):=\left(1-p\mathbbm{1}_{\mathsf{T}^-_q}(v)\right)\left[\alpha_r\mu_T(U)+(1-\alpha_r)\delta_{\overline{v}}(U)\right].
\end{equation*}
Note that 
\begin{equation*}
S_q^{--}(T_qM|v) + S_q^{-+}(T_qM|v)=1.
\end{equation*}

The {\em accommodation coefficients} $\alpha_r$ and $\alpha_t$ may be interpreted as defining the efficiency with which the
wall (connected to an unspecified heat bath) can exchange heat with the particle. 

A few special cases for $S$ may be useful to keep in mind:
\begin{enumerate}
\item {Boundary wall} (the standard Maxwell-Smoluchowski model): $p=0, C=0$.
\item {Potential barrier} (no particle-wall accommodation): $\alpha_r=\alpha_t=0$, $p=1$, $C>0$.
\item {Sieve} (no particle-wall accommodation in transmission): $p>0$, $C=0$, $\alpha_t=0$.
\end{enumerate}

\section{Time reversal and entropy production}\label{sec:time-reversal-entropy-production}
\subsection{Generator of the time reversed Markov chain}
\begin{proposition}\label{B*-invariant}
Let $B^*$ be the Hilbert space adjoint of $B$ in $\mathcal{H}$. Then the $B$-invariant measure $\nu$ is also $B^*$-invariant.
\end{proposition}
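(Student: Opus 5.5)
The plan is to use the defining property of the Hilbert space adjoint in $\mathcal{H}=L^2(\mathcal{N}^{\text{o}},\nu)$ together with the fact, already used in the proof that $\|B\|_2=1$, that $B1=1$. The key observation is that for a probability measure $\nu$, integration against $\nu$ is the inner product with the constant function: $\nu(f)=\langle f,1\rangle$ for every $f\in\mathcal{H}$. Invariance of $\nu$ under $B^*$ means that $\nu(B^*f)=\nu(f)$ for all $f\in\mathcal{H}$, which is the statement $\nu B^*=\nu$ when $B^*$ acts on measures from the right as in the paper's convention.

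First, since $B$ is a bounded operator on $\mathcal{H}$ (norm $1$ by the preceding proposition), its adjoint $B^*$ exists, is bounded, and satisfies $\langle Bf,g\rangle=\langle f,B^*g\rangle$ for all $f,g\in\mathcal{H}$. Second, take $f\in\mathcal{H}$ and compute
\begin{equation*}
\nu(B^*f)=\langle B^*f,1\rangle=\langle f,B1\rangle=\langle f,1\rangle=\nu(f),
\end{equation*}
using real-valued functions, or conjugating appropriately in the complex case. The middle equality is the adjoint relation. The next uses $B1=1$, which holds because each $B_x=S_{\mathcal{T}(x)}$ is a probability measure. Since this holds for all $f\in\mathcal{H}$, and in particular for all bounded test functions, $\nu$ is $B^*$-invariant.

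I would also record a complementary fact in the same way: $B^*1=1$. Indeed, $\langle B^*1,g\rangle=\langle 1,Bg\rangle=\nu(Bg)=(\nu B)(g)=\nu(g)=\langle 1,g\rangle$ for all $g$, and here the stationarity $\nu B=\nu$ is what is used. This shows $B^*$ is Markovian in the mean-preserving sense, which justifies interpreting it as the generator of the time-reversed chain.

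The only genuine subtlety, and the step I expect to need care, is interpretation rather than computation. The statement should be read in the weak sense $\nu(B^*f)=\nu(f)$ on $\mathcal{H}$, and if one wants $B^*$ to be a Markov kernel, one needs positivity of $B^*$. Positivity follows from positivity of $B$ by duality, and a kernel representation follows by disintegrating the joint stationary law of $(X_0,X_1)$ on $\mathcal{D}^{\text{o}}$ with respect to its second marginal $\nu$. I would note this disintegration but not rely on it for the invariance claim itself.
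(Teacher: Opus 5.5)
Your proposal is correct and is the paper's proof: $\nu(B^*f)=\langle B^*f,1\rangle=\langle f,B1\rangle=\langle f,1\rangle=\nu(f)$, using only the adjoint relation and $B1=1$. Your extra observation that $B^*1=1$ is also correct but not needed for this statement. That is where stationarity of $\nu$ enters directly; otherwise it is used only to make $B$ bounded on $\mathcal{H}$.
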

\begin{proof}
Simply note that
$$ \nu B^*f =\langle B^*f, 1\rangle = \langle f, B1\rangle =  \langle f,1\rangle=\nu(f),$$
for all test functions $f$. Therefore $\nu B^*=\nu$ as claimed.
\end{proof}

\begin{proposition}
Define the involution $\mathcal{J}:=J\mathcal{T}:\mathcal{N}^\text{\em o}\rightarrow \mathcal{N}^\text{\em o}$, where $Jv=-v$ is the velocity {\em flip map}, and suppose
that $\mathcal{J}_*\nu$ is absolutely continuous with respect to $\nu$, so that the Radon-Nikodym  derivative 
$$\rho_\nu:=\frac{d(\mathcal{J}_*\nu)}{d\nu}$$ is defined.  Then 
$$ \rho_\nu(x) \rho_\nu\left(\mathcal{J}(x)\right)=1$$
for $x\in \mathcal{N}^{\text{\em o}}$
and $\nu$ is also absolutely continuous with respect to $\mathcal{J}_*\nu$ with Radon-Nikodym derivative
$$\frac{d\nu}{d(\mathcal{J}_*\nu)}(x) = \rho_\nu\left( \mathcal{J}(x)\right).$$
Furthermore, the adjoint of $\mathcal{J}$ is
$$\mathcal{J}^*=\mathcal{M}_{\rho_\nu}\mathcal{J},$$
in which $\mathcal{M}_{\rho_\nu}$ denotes the multiplication operator by $\rho_\nu$.
\end{proposition}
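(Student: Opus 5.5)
The whole argument rests on one fact: $\mathcal{J}=J\mathcal{T}$ is an involution of $\mathcal{N}^\text{o}$. The plan is to verify this first and then read off the three claims by manipulating push-forwards. I will justify the involution property from time-reversibility of the free flight (geodesic or Hamiltonian with potential $\Phi$). If $y=\mathcal{T}(x)=\varphi_{t(x)}(x)$, then flipping the velocity and flowing forward retraces the segment backwards: $\varphi_s(Jy)=J\varphi_{t(x)-s}(x)$. Hence the first wall hit from $Jy$ occurs at time $t(x)$ at the point $Jx$, so $\mathcal{T}(J\mathcal{T}(x))=Jx$ and $\mathcal{J}^2(x)=x$. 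This holds at regular points, so $\nu$-almost everywhere under the standing assumptions. One should also check that $J\mathcal{T}(x)$ lands in $\mathcal{N}^\text{o}$: at a boundary wall an incoming vector in $W_q^-$ flips to $W_q^+$, and at separating walls there is no restriction. I expect this bookkeeping of the domain to be the only genuinely delicate point, particularly at corners and non-regular points, which I will treat as null sets.

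With the involution in hand, I will compute the Radon--Nikodym derivatives. Since $\mathcal{J}_*\mathcal{J}_*\nu=\nu$, for a test function $f$ we have
\[
\nu(f)=(\mathcal{J}_*\nu)(f\circ\mathcal{J})=\int f(\mathcal{J}x)\rho_\nu(x)\,d\nu(x)=\int f(y)\rho_\nu(\mathcal{J}y)\,d(\mathcal{J}_*\nu)(y),
\]
where the last step substitutes $y=\mathcal{J}x$, i.e.\ pushes forward by $\mathcal{J}$. This shows $\nu\ll\mathcal{J}_*\nu$ with density $\rho_\nu\circ\mathcal{J}$, which is the second claim. Composing the two densities then gives $d\nu/d\nu=\rho_\nu\cdot(\rho_\nu\circ\mathcal{J})$, and this must equal $1$ $\nu$-a.e. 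That yields $\rho_\nu(x)\rho_\nu(\mathcal{J}x)=1$. A consequence is that $\rho_\nu>0$ a.e., so the two measures are equivalent.

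Finally, I will compute the adjoint of $\mathcal{J}$ on $\mathcal{H}$, acting on functions by $\mathcal{J}g=g\circ\mathcal{J}$. For $f,g\in\mathcal{H}$,
\[
\langle \mathcal{J}g,f\rangle=\int g(\mathcal{J}x)\overline{f(x)}\,d\nu(x)=\int g(y)\overline{f(\mathcal{J}y)}\,d(\mathcal{J}_*\nu)(y)=\int g\,\overline{\rho_\nu\,(f\circ\mathcal{J})}\,d\nu .
\]
The middle step again uses $\mathcal{J}^2=\mathrm{id}$, and $\rho_\nu$ is real. This reads $\langle g,\mathcal{M}_{\rho_\nu}\mathcal{J}f\rangle$, so $\mathcal{J}^*=\mathcal{M}_{\rho_\nu}\mathcal{J}$.

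If the operators need to be bounded, I will either restrict to bounded test functions, which form a dense domain, or note that boundedness of $\rho_\nu$ is not required for the formal identity on that domain.
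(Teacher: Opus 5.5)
Your proposal is correct and follows essentially the same route as the paper. You establish that $\mathcal{J}$ is an involution from time reversibility of the free flight, use $\mathcal{J}_*^2\nu=\nu$ to rewrite $\nu(f)$ as $\int f\,(\rho_\nu\circ\mathcal{J})\,d(\mathcal{J}_*\nu)$, and get the adjoint from the same change of variables in $\langle\mathcal{J}f,g\rangle$. The differences are minor: you read off $\nu\ll\mathcal{J}_*\nu$ directly from the exhibited density (the paper cites the general fact that push-forwards preserve absolute continuity), you get $\rho_\nu(\rho_\nu\circ\mathcal{J})=1$ by the chain rule rather than by extending the chain of integrals, and your check of $\mathcal{J}^2=\mathrm{id}$ (equivalently $\mathcal{T}\circ\mathcal{J}=J$) is more explicit than the paper's.
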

\begin{proof}
That $\mathcal{J}$ is an involution is due to   $\mathcal{T}$ being time reversible.  This  property of $\mathcal{T}$ is inherited  from time reversibility of the geodesic flow, or the Hamiltonian system with  potential function $\Phi$. 
As a general fact, if a measure $\nu_1$ is absolutely continuous with respect to $\nu_2$ and $\mathcal{J}$ is a measurable transformation,
then $\mathcal{J}_*\nu_1$ is absolutely continuous with respect to $\mathcal{J}_*\nu_2$. In particular, if $\mathcal{J}$ is an involution and $\mathcal{J}_*\nu$ is absolutely continuous with respect to $\nu$, then $\nu=\mathcal{J}_*^2\nu$ is absolutely continuous with respect to $\mathcal{J}_*\nu$. Furthermore,
\begin{align*}\int_{\mathcal{N}^\text{o}}f\, d\nu &= \int_{\mathcal{N}^\text{o}}f\, d\left(\mathcal{J}^2_*\nu\right)= \int_{\mathcal{N}^\text{o}}f\circ \mathcal{J}\, d(\mathcal{J}_*\nu)= \int_{\mathcal{N}^\text{o}}(f\circ \mathcal{J})\rho_\nu\, d\nu\\
&\ \ \ \ \ \ \ \ \ \ \ \ \ \ \ \ \ \ \ \ \ \ \ \ \ \ \ \ \ \ \ \ \  \ \ \ = \int_{\mathcal{N}^\text{o}}f (\rho_\nu\circ  \mathcal{J})\, d(\mathcal{J}_*\nu)=
\int_{\mathcal{N}^\text{o}}f (\rho_\nu\circ  \mathcal{J}) \rho_\nu\, d\nu.
\end{align*}
for every test  function $f$. Therefore
$ \rho_\nu(\rho_\nu\circ\mathcal{J})=1$,
as claimed. For the adjoint of $\mathcal{J}$, note that
$$ \langle \mathcal{J}f, g\rangle = \int_{\mathcal{N}^{\text{o}}} (f\circ \mathcal{J}) {g}\, d\nu
=  \int_{\mathcal{N}^{\text{o}}}f ({g}\circ \mathcal{J})\, d(\mathcal{J}_*\nu)
= \int_{\mathcal{N}^{\text{o}}}f ({g}\circ \mathcal{J})\rho_\nu\, d\nu=\langle f, \mathcal{M}_{\rho_\nu}\mathcal{J}g\rangle,$$
for all test functions $f, g$. Therefore $\mathcal{J}^*=\mathcal{M}_{\rho_\nu}\mathcal{J}$.
\end{proof}
\begin{proposition}
Let  $X_0, X_1, \dots$ be the stationary Markov chain with transition kernel $B$ and $B$-invariant probability measure $\nu$. Define 
$$\left(\widetilde{B}^\ell f\right)(x):=\mathbb{E}\left[f(\mathcal{J} X_\ell)| \mathcal{J}X_{\ell+1}=x\right],$$
an operator acting on bounded continuous functions.
Then $\widetilde{B}^\ell$ does not depend on $\ell$. Dropping this superscript, 
  the Hilbert space adjoint   of $B$ and $\widetilde{B}$ are related by
$$B^*=\mathcal{J} \widetilde{B} \mathcal{J}, \ \ \widetilde{B}= \mathcal{J} B^* \mathcal{J}.$$
\end{proposition}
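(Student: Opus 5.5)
The plan is to write the conditional expectation defining $\widetilde{B}^\ell$ weakly, by integrating against an arbitrary bounded test function $g$, and then to use stationarity to remove all dependence on $\ell$. Since $\mathcal{J}$ is an involution, conditioning on $\mathcal{J}X_{\ell+1}=x$ is the same as conditioning on $X_{\ell+1}=\mathcal{J}x$. By definition of conditional expectation, $\widetilde{B}^\ell f$ is characterized ($\nu$-a.e., once we know that $\mathcal{J}X_{\ell+1}$ has law $\mathcal{J}_*\nu$) by
\begin{equation*}
\mathbb{E}\left[(\widetilde{B}^\ell f)(\mathcal{J}X_{\ell+1})\, g(\mathcal{J}X_{\ell+1})\right] = \mathbb{E}\left[f(\mathcal{J}X_\ell)\, g(\mathcal{J}X_{\ell+1})\right]
\end{equation*}
for all test functions $g$. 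The right-hand side depends only on the joint law of $(X_\ell,X_{\ell+1})$. By stationarity this law is $\nu(dx)B_x(dy)$ for every $\ell$, so the right-hand side equals $\langle \mathcal{J}f, B(\mathcal{J}g)\rangle$, which is independent of $\ell$. The marginal law of $X_{\ell+1}$ is likewise $\nu$ for every $\ell$. Hence the defining relation, and therefore $\widetilde{B}^\ell$ up to null sets, does not depend on $\ell$.

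Next I identify $\widetilde{B}$. The left-hand side is $\int (\widetilde{B}f)(\mathcal{J}y)\,g(\mathcal{J}y)\,d\nu(y)=\langle \mathcal{J}\widetilde{B}f,\mathcal{J}g\rangle$. Writing $h=\mathcal{J}g$, which ranges over all test functions as $g$ does, the relation becomes
\begin{equation*}
\langle \mathcal{J}\widetilde{B}f, h\rangle = \langle \mathcal{J}f, Bh\rangle = \langle B^*\mathcal{J}f, h\rangle \quad\text{for all } h.
\end{equation*}
Therefore $\mathcal{J}\widetilde{B}f=B^*\mathcal{J}f$. Applying $\mathcal{J}$ on the left and using $\mathcal{J}^2=\mathrm{id}$ gives $\widetilde{B}=\mathcal{J}B^*\mathcal{J}$. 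Conjugating once more by $\mathcal{J}$ gives $B^*=\mathcal{J}\widetilde{B}\mathcal{J}$. Note that $\mathcal{J}$ here acts on functions as the pullback $f\mapsto f\circ\mathcal{J}$, consistent with the paper's convention, and not as its Hilbert adjoint.

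The main obstacle is measure-theoretic rather than algebraic. Both the conditional expectation and the pairing $\langle \mathcal{J}f,\cdot\rangle$ are taken in $L^2(\nu)$, but the natural law of $\mathcal{J}X_{\ell+1}$ is $\mathcal{J}_*\nu$, not $\nu$. In the computation above I sidestep this by expressing everything in terms of $X_{\ell+1}$ itself, whose law is $\nu$. This requires that composing with $\mathcal{J}$ maps $\nu$-null sets to $\nu$-null sets, so that $\mathcal{J}\widetilde{B}f$ is a well-defined element of $\mathcal{H}$. For this I will invoke the standing hypothesis of the previous proposition, that $\mathcal{J}_*\nu\ll\nu$; its conclusion $\nu\ll\mathcal{J}_*\nu$ then gives mutual absolute continuity. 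I also use the involution property of $\mathcal{J}$ established there.
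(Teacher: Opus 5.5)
Your argument is correct and is essentially the paper's. Both proofs evaluate the pairing through the stationary joint law $\eta$ of $(X_\ell,X_{\ell+1})$ and the defining property of conditional expectation given $\mathcal{J}X_{\ell+1}$. The paper runs the chain from $\langle Bf,g\rangle$ to $\langle f,\mathcal{J}\widetilde{B}^\ell\mathcal{J}g\rangle$ and deduces $\ell$-independence from the resulting identity, whereas you obtain $\ell$-independence first, directly from stationarity.

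Your null-set bookkeeping is more careful than the paper's, which leaves it implicit. One small correction: $\mathcal{J}\widetilde{B}f$ is automatically determined $\nu$-a.e. The mutual absolute continuity of $\nu$ and $\mathcal{J}_*\nu$ is really needed only to read $\widetilde{B}=\mathcal{J}B^*\mathcal{J}$ as an identity in $\mathcal{H}$.
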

\begin{proof}
The first identity and the involutive property of $\mathcal{J}$ imply the second identity, which implies  that $\widetilde{B}^\ell$ does not depend on $\ell$. Let us show that   $B^*=\mathcal{J} \widetilde{B}^\ell \mathcal{J}$. The following holds  for all test functions $f, g$ on $\mathcal{N}^{\text{o}}$, and the random variables  $X, X_\ell, X_{\ell +1}$ are $\nu$-distributed:
\begin{equation*}
\begin{aligned}
\langle Bf, g\rangle &= \int_{\mathcal{N}^{\text{o}}} (Bf)(x){g}(x)\, d\nu(x)\\
&=\mathbb{E}_\nu\left[(Bf)(X){g}(X)\right]\\
&=\mathbb{E}_\nu\left[   \mathbb{E}\left[f(X_{\ell+1})|X_\ell=X\right]     {g}(X)\right]\\
&=\mathbb{E}_\nu\left[   \mathbb{E}\left[f(X_{\ell+1}) {g}(X_\ell)|X_\ell=X\right]       \right]\\
&=\mathbb{E}_\nu\left[    f(X_{\ell+1}) {g}(X_\ell)       \right]\\
&=\mathbb{E}_\nu\left[    f(X_{\ell+1}) ({g}\circ \mathcal{J})(\mathcal{J}X_\ell)       \right]\\
&=\mathbb{E}_\nu\left[   \mathbb{E}\left[ ({g}\circ \mathcal{J})(\mathcal{J}X_\ell)|\mathcal{J}X_{\ell+1} \right] f(X_{\ell+1})      \right]\\
&=\mathbb{E}_\nu\left[  \left(\widetilde{B}^{\ell}({g}\circ\mathcal{J})\right)(\mathcal{J}X_{\ell+1})   f(X_{\ell+1})      \right]\\
&=\int_{\mathcal{N}^{\text{o}}}\left(\widetilde{B}^{\ell}(g\circ\mathcal{J})\right)(\mathcal{J}x)   f(x)\, d\nu(x)  \\
&=\left\langle f, \mathcal{J}\widetilde{B}^\ell \mathcal{J} g\right\rangle.
\end{aligned}
\end{equation*}
Therefore $B^*=\mathcal{J}\widetilde{B}^\ell \mathcal{J}$ as claimed.
\end{proof}

\begin{corollary}\label{corollary_B_invariant}
If $\nu$ is $B$-invariant, then $\mathcal{J}_*\nu$ is $\widetilde{B}$-invariant.
\end{corollary}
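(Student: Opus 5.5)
We need to show $(\mathcal{J}_*\nu)\widetilde{B}=\mathcal{J}_*\nu$, that is, $\int (\widetilde{B}f)\,d(\mathcal{J}_*\nu)=\int f\,d(\mathcal{J}_*\nu)$ for every test function $f$. The approach combines the identity $\widetilde{B}=\mathcal{J}B^*\mathcal{J}$ from the preceding proposition with Proposition \ref{B*-invariant}, which says $\nu$ is $B^*$-invariant. Here $\mathcal{J}$ acting on functions means pull-back, $\mathcal{J}f=f\circ\mathcal{J}$. Since $\mathcal{J}$ is an involution, push-forward gives $\int g\, d(\mathcal{J}_*\nu)=\int g\circ\mathcal{J}\,d\nu=\nu(\mathcal{J}g)$.

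The computation is then
\begin{equation*}
(\mathcal{J}_*\nu)(\widetilde{B}f)=\nu\left(\mathcal{J}\widetilde{B}f\right)=\nu\left(\mathcal{J}\mathcal{J}B^*\mathcal{J}f\right)=\nu\left(B^*\mathcal{J}f\right)=\nu(\mathcal{J}f)=(\mathcal{J}_*\nu)(f).
\end{equation*}
The steps are, in order: the push-forward formula; the relation $\widetilde{B}=\mathcal{J}B^*\mathcal{J}$; the involution property $\mathcal{J}^2=\mathrm{id}$; the $B^*$-invariance of $\nu$ applied to $\mathcal{J}f$; and the push-forward formula again.

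The only point needing care is that $B^*$ is defined as a Hilbert space adjoint on $\mathcal{H}=L^2(\nu)$, so $\mathcal{J}f$ must lie in $\mathcal{H}$. For bounded $f$ this holds trivially, since $f\circ\mathcal{J}$ is bounded and measurable. Likewise, $\nu(B^*g)=\langle B^*g,1\rangle$ requires only $g\in L^2(\nu)$, which was the setting of Proposition \ref{B*-invariant}.

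I expect no real obstacle. The one subtlety is that the identity $\widetilde{B}=\mathcal{J}B^*\mathcal{J}$ holds $\nu$-a.e., whereas we integrate $\mathcal{J}\widetilde{B}f$ against $\nu$. This is consistent: equality $\nu$-a.e. of $\mathcal{J}\widetilde{B}f$ and $B^*\mathcal{J}f$ is exactly what we use, and the absolute continuity hypothesis $\mathcal{J}_*\nu\ll\nu$ from the earlier proposition ensures that null sets transfer correctly. Boundedness then extends the identity from bounded continuous test functions to indicator functions, which establishes equality of the measures.
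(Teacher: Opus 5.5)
Your proof is correct and is the same argument as the paper's. The paper writes it in operator notation on measures, $\mathcal{J}_*\nu=\nu\mathcal{J}=\nu B^*\mathcal{J}=\nu(\mathcal{J}\widetilde{B}\mathcal{J})\mathcal{J}=\nu\mathcal{J}\widetilde{B}=(\mathcal{J}_*\nu)\widetilde{B}$, which is your chain evaluated on a test function $f$, using the same ingredients ($\widetilde{B}=\mathcal{J}B^*\mathcal{J}$, $\mathcal{J}^2=I$, and $\nu B^*=\nu$); your appeal to $\mathcal{J}_*\nu\ll\nu$ is harmless but unnecessary, since $\mathcal{J}\widetilde{B}f=B^*\mathcal{J}f$ already holds $\nu$-a.e. without it.
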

\begin{proof}
By Proposition \ref{B*-invariant}, $\nu$ is $B^*$-invariant. Thus
$$\mathcal{J}_*\nu =\nu\mathcal{J}=\nu B^*\mathcal{J}=\nu\left(\mathcal{J}\widetilde{B} \mathcal{J}\right)\mathcal{J}= \nu\mathcal{J}\widetilde{B} 
=(\mathcal{J}_*\nu)\widetilde{B},$$
which is the invariance claimed.
\end{proof}
We may, occasionally, use the notation $\tilde{\nu}=\mathcal{J}_*\nu$.

 \subsection{Time reversal of finite chains}
 We have already used the {\em flip map} $J$. Formally,
at each regular point $q$ in $\mathcal{W}$,  we define  $J: T_qM\rightarrow T_qM$  as
 \begin{equation*}
 J(q,v)= (q,J_qv)=(q,-v)
 \end{equation*}
 and the (pointwise) {\em time reversal map}  $R_q:T_qM\times T_qM\rightarrow T_qM\times T_qM$  as
 \begin{equation*}%\label{time_reversal_first}
 R((q,u),(q,v))=R_q(u,v)=(J_qv,J_qu).
 \end{equation*}
Given a positive integer $k$,  set $[0,k]=\{j\in \mathbb{Z}: 0\leq j\leq k\}$ and define the space of finite chain segments
\begin{equation*}\mathcal{D}^{\text{o}}_{[0,k]}:=\{(x_0, x_{1}, \dots, x_{k}): (x_{j}, x_{j+1})\in \mathcal{D}^{\text{o}} \text{ for } j=0, 1, \dots, k-1\}.
\end{equation*}
Naturally, $\mathcal{D}^{\text{o}}_{[0,1]}=\mathcal{D}^{\text{o}}$.

 Using the same notation, $R$, we  define a time reversal map on finite chain segments.
\begin{definition}[Time reversal map on finite chain segments]
 We define   on the space of finite chain segments the {\em time reversal map}  $\mathcal{R}:\mathcal{D}_{[0,k]}\rightarrow \mathcal{D}_{[0,k]}$  as
\begin{equation*}
\mathcal{R}:(x_0, \dots, x_k)\mapsto (\mathcal{J}x_k, \dots, \mathcal{J}x_0).
\end{equation*}
 Note the distinction between the pointwise time-reversal map $R$ previously defined and the  finite chain reversal map $\mathcal{R}=R\circ (\mathcal{T}\times \cdots \times \mathcal{T})$.
\end{definition}

Let $X_0, X_1, \dots$ be the stationary Markov chain with transition kernel $B$ and $B$-invariant probability distribution $\nu$. 
For any positive integer $k$, set 
$$(Y_0, \dots, Y_k)=\mathcal{R}(X_0, \dots, X_k). $$
Thus $Y_j = \mathcal{J}X_{k-j}$. 
\begin{proposition}
The finite chain $(Y_0, \dots, Y_k)$ is a segment of a stationary Markov chain in state space $\mathcal{N}^{\text{\em o}}$
with transition kernel $\widetilde{B}$ and invariant measure $\tilde{\nu}:=\mathcal{J}_*\nu$, where $\nu$ is stationary for the chain generated by $B$. 
\end{proposition}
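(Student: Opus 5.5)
The plan is to show that the joint law of $(Y_0,\dots,Y_k)$ is the law of a Markov chain started from $\tilde\nu$ and driven by $\widetilde{B}$. Concretely, I would verify that for all test functions $f_0,\dots,f_k$,
\begin{equation*}
\mathbb{E}\left[f_0(Y_0)\cdots f_k(Y_k)\right]=\int f_0(y_0)\,\widetilde{B}\Bigl(f_1\,\widetilde{B}\bigl(f_2\cdots \widetilde{B}(f_k)\bigr)\Bigr)(y_0)\, d\tilde\nu(y_0),
\end{equation*}
since such products of test functions determine the law on $\mathcal{D}^{\text{o}}_{[0,k]}$. The marginal claim is immediate: $Y_0=\mathcal{J}X_k$ and $X_k\sim\nu$ by stationarity, so $Y_0\sim\mathcal{J}_*\nu=\tilde\nu$. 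Invariance of $\tilde\nu$ under $\widetilde{B}$ is Corollary \ref{corollary_B_invariant}, so stationarity of the $Y$-chain follows once the Markov property with kernel $\widetilde{B}$ is established.

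For the Markov property I would argue directly from the definition of $\widetilde{B}^\ell$ in the preceding proposition, which says $\widetilde{B}$ is the conditional law of $\mathcal{J}X_\ell$ given $\mathcal{J}X_{\ell+1}$, independent of $\ell$. The key step is the reversed Markov property. Since $(X_j)$ is Markov, the past $(X_0,\dots,X_\ell)$ and the future $(X_{\ell+2},\dots,X_k)$ are conditionally independent given $X_{\ell+1}$. Hence
\begin{equation*}
\mathbb{E}\left[g(X_\ell)\mid X_{\ell+1},\dots,X_k\right]=\mathbb{E}\left[g(X_\ell)\mid X_{\ell+1}\right].
\end{equation*}
Because $\mathcal{J}$ is a measurable involution, conditioning on $X_j$ is the same as conditioning on $\mathcal{J}X_j$. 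Translating, we get $\mathbb{E}[f(Y_{j+1})\mid Y_0,\dots,Y_j]=(\widetilde{B}f)(Y_j)$ with $\ell=k-j-1$, and this uses that $\widetilde{B}^\ell$ does not depend on $\ell$.

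Iterating the tower property from the last factor inward then gives the displayed product formula. I would check the $k=1$ case against the computation in the proof of the previous proposition, $\langle Bf,g\rangle=\langle f,\mathcal{J}\widetilde{B}\mathcal{J}g\rangle$, which is the identity $\mathbb{E}[f_0(Y_0)f_1(Y_1)]=\tilde\nu(f_0\,\widetilde{B}f_1)$. I would also note that $\mathcal{R}$ maps $\mathcal{D}^{\text{o}}_{[0,k]}$ to itself, so $(Y_j)$ lives on admissible chain segments.

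The main obstacle is measure-theoretic rather than algebraic. $\widetilde{B}$ is defined only as a conditional expectation, that is, up to $\tilde\nu$-null sets, and one needs a regular version that is a genuine Markov kernel. I would handle this either by working with disintegrations of the stationary two-step law on $\mathcal{D}^{\text{o}}$, which exist since the state space is Polish, or by using the identity $\widetilde{B}=\mathcal{J}B^*\mathcal{J}$ together with the fact that $B^*$ is positive and satisfies $B^*1=1$. Either route lets all statements be read $\tilde\nu$-almost everywhere, which suffices for identifying the law.
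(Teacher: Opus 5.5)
Your proposal is correct and follows essentially the paper's argument: reindex with $\ell=k-j-1$, use that $\widetilde{B}^\ell$ does not depend on $\ell$, and cite Corollary \ref{corollary_B_invariant} for the invariance of $\tilde{\nu}=\mathcal{J}_*\nu$. You add one step the paper does not state: using conditional independence of past and future given $X_{\ell+1}$ to condition on the whole history $(Y_0,\dots,Y_j)$, rather than only on $Y_j$, which supplies the Markov property that the paper's one-line computation of $\mathbb{E}[f(Y_{j+1})\mid Y_j=y]$ leaves implicit.
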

\begin{proof}
Simply note that
\begin{equation*}
\begin{aligned}
\mathbb{E}\left[f(Y_{j+1})|Y_{j}=y\right]&= \mathbb{E}\left[f(\mathcal{J}X_{k-j-1})|\mathcal{J}X_{k-j}=y\right] \\
&=\mathbb{E}\left[f(\mathcal{J}X_{\ell})|\mathcal{J}X_{\ell+1}=y\right]\\
&=\left(\widetilde{B}f\right)(y). 
\end{aligned}
\end{equation*}
This shows that $\widetilde{B}$ is the transition kernel of the chain $Y_j$. The chain has stationary measure $\tilde{\nu}$
by Corollary \ref{corollary_B_invariant}.
\end{proof}

\begin{definition}[Two-step forward  measure]
Define a probability measure $\eta$ on $\mathcal{D}^\text{\em o}$ by
$$d\eta(x,y):=d\nu(x)dB_x(y). $$
We call $\eta$ the {\em two-step forward measure}.
\end{definition}
Note that, for all
test functions $f, g$ on $\mathcal{N}^{\text{o}}$,
\begin{equation}\label{eta_fxg}
\begin{aligned}
\eta(f\times g)&=\int_{\mathcal{D}^{\text{o}}} f(x)g(y)\, d\eta(x,y)=\int_{\mathcal{N}^\text{o}} f(x) \left[\int_{\mathcal{N}^\text{o}_{\pi(\mathcal{T}(x))}} g(y) \, dB_x(y)\right]\, d\nu(x)\\
&=\langle f, Bg\rangle=\nu(fBg).
\end{aligned}
 \end{equation}

\begin{definition}[Two-step backward measure]
Define a probability measure $\tilde{\eta}$ on $\mathcal{D}^{\text{\em o}}$ by
$$\tilde{\eta}:=\mathcal{R}_*\eta. $$
We call $\tilde{\eta}$ the {\em two-step backward measure}.
\end{definition}

\begin{proposition}
The two-step backward measure satisfies
\begin{equation}\label{proposition_eta_tilde}
d\tilde{\eta}(x,y)= d(\mathcal{J}_*\nu)(x)\, d\left(\mathcal{J}B^*\mathcal{J}\right)_x(y)=d\nu(x)\, d\left(\mathcal{J}^*B^*\mathcal{J}\right)_x(y).
\end{equation}
Furthermore,
\begin{equation}\label{eta_tilde_inner_product}
\tilde{\eta}(f\times g)=\left\langle f, \mathcal{J}^*B^*\mathcal{J}g\right\rangle.
\end{equation}
for test functions $f, g$ on $\mathcal{N}^\text{\em o}$. 
\end{proposition}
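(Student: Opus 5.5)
The plan is to test $\tilde{\eta}$ against product test functions $f\times g$ and compute directly from the definition $\tilde{\eta}=\mathcal{R}_*\eta$. On $\mathcal{D}^{\text{o}}$ the reversal map is $\mathcal{R}(x,y)=(\mathcal{J}y,\mathcal{J}x)$, so
\begin{equation*}
\tilde{\eta}(f\times g)=\eta\bigl((g\circ\mathcal{J})\times (f\circ\mathcal{J})\bigr)=\langle \mathcal{J}g, B\mathcal{J}f\rangle ,
\end{equation*}
by (\ref{eta_fxg}). Moving operators across the inner product gives $\langle \mathcal{J}g, B\mathcal{J}f\rangle=\langle B^*\mathcal{J}g,\mathcal{J}f\rangle=\langle \mathcal{J}^*B^*\mathcal{J}g, f\rangle$. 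Since everything is real, this equals $\langle f,\mathcal{J}^*B^*\mathcal{J}g\rangle$, which is (\ref{eta_tilde_inner_product}).

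Next I read off the disintegrations. Write $K:=\mathcal{J}^*B^*\mathcal{J}$. By the preceding proposition, $K=\mathcal{M}_{\rho_\nu}\mathcal{J}B^*\mathcal{J}$. Then
\begin{equation*}
\tilde{\eta}(f\times g)=\int f(x)\,(Kg)(x)\,d\nu(x),
\end{equation*}
which matches the second expression in (\ref{proposition_eta_tilde}) once $K$ is viewed as a kernel $x\mapsto K_x$. Substituting $K=\mathcal{M}_{\rho_\nu}\mathcal{J}B^*\mathcal{J}$ and using $\rho_\nu\,d\nu=d(\mathcal{J}_*\nu)$ gives
\begin{equation*}
\int f(x)\,(\mathcal{J}B^*\mathcal{J}g)(x)\,d(\mathcal{J}_*\nu)(x),
\end{equation*}
which is the first expression. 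Product functions determine a measure on $\mathcal{D}^{\text{o}}$ (a monotone class argument), so the identities hold as measures.

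The main obstacle is justifying that $B^*$, defined only as a Hilbert space adjoint, is actually given by a Markov kernel $(B^*)_x$, so that $(\mathcal{J}B^*\mathcal{J})_x$ makes sense as a measure. I would get this from the previous proposition: $B^*=\mathcal{J}\widetilde{B}\mathcal{J}$, and $\widetilde{B}$ is a conditional expectation. That expectation is realized by a regular conditional distribution, which exists because the state space is Polish, and this furnishes the kernel. One consistency check is that $K1=\rho_\nu$, not $1$. Hence $dK_x$ is only a sub- or super-probability kernel weighted by $\rho_\nu$, while $\mathcal{J}B^*\mathcal{J}$ is genuinely Markov. This is why the first form pairs $\mathcal{J}B^*\mathcal{J}$ with $\mathcal{J}_*\nu$, and the $x$-marginal of $\tilde{\eta}$ is $\mathcal{J}_*\nu=\tilde{\nu}$, as it should be.
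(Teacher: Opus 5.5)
Your proof is correct and follows the paper's argument: the same pushforward computation $\tilde{\eta}(f\times g)=\langle \mathcal{J}g, B\mathcal{J}f\rangle$, then taking adjoints, then reading off the disintegrations from product test functions. The only difference is cosmetic. For the first form you invoke $\mathcal{J}^*=\mathcal{M}_{\rho_\nu}\mathcal{J}$ from the preceding proposition, while the paper performs the change of variables $x\mapsto\mathcal{J}x$ in $\langle \mathcal{J}f, B^*\mathcal{J}g\rangle$ directly, which is the same computation; your remarks that $B^*$ is realized by a kernel and that $\mathcal{J}^*B^*\mathcal{J}1=\rho_\nu$ are correct details the paper leaves implicit.
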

\begin{proof}
Using Equation (\ref{eta_fxg}), we obtain
\begin{equation*}
\tilde{\eta}(f\times g)=\eta((f\times g)\circ \mathcal{R})=\eta((g\circ \mathcal{J})\times (f\circ\mathcal{J}))=\left\langle g\circ\mathcal{J}, B(f\circ\mathcal{J})\right\rangle=\left\langle \mathcal{J}g, B\mathcal{J}f\right\rangle.
\end{equation*}
Therefore
$
\tilde{\eta}(f\times g)=\left\langle f, \mathcal{J}^* B^* \mathcal{J}g\right\rangle$ as claimed.
Writing out the inner product explicitly,
$$\int_{\mathcal{D}^{\text{o}}} f(x)g(y)\, d\tilde{\eta}(x,y)=\tilde{\eta}(f\times g)=\int_{\mathcal{N}^\text{o}}f(x)\left[\int_{\mathcal{N}^{\text{o}}_{\pi(\mathcal{T}(x))}} g(y)\, d\left(\mathcal{J}^* B^* \mathcal{J}\right)_x(y) \right]\, d\nu(x),$$
yields
$d\tilde{\eta}(x,y)= d\nu(x) \left(\mathcal{J}^* B^* \mathcal{J}\right)_x(y).$
It remains to check the first equality in Equation (\ref{proposition_eta_tilde}). Starting from Equation (\ref{eta_tilde_inner_product}),
\begin{equation*}
\begin{aligned}
\tilde{\eta}(f\times g)&=\left\langle \mathcal{J}f,B^*\mathcal{J}g\right\rangle\\
& = \int_{\mathcal{N}^{\text{o}}} f(\mathcal{J}(x))\left(B^*\mathcal{J}g\right)(x)\, d\nu(x)\\
&=  \int_{\mathcal{N}^{\text{o}}} f(x)\left(B^*\mathcal{J}g\right)(\mathcal{J}(x))\, d(\mathcal{J}_*\nu)(x)\\
&=  \int_{\mathcal{N}^{\text{o}}} f(x)\left(\mathcal{J}B^*\mathcal{J}g\right)(x)\, d(\mathcal{J}_*\nu)(x)\\
&=  \int_{\mathcal{N}^{\text{o}}} 
f(x)\left[
\int_{
\mathcal{N}^\text{o}_{\pi(\mathcal{T}(x))}
}
g(y)\, d\left(\mathcal{J}B^*\mathcal{J}\right)_x(y)
\right]\, d(\mathcal{J}_*\nu)(x)
\end{aligned}
\end{equation*}
Consequently, $d\tilde{\eta}(x,y)=d(\mathcal{J}_*\nu)(x)\,  d\left(\mathcal{J}B^*\mathcal{J}\right)_x(y).$
\end{proof}

\begin{proposition}\label{explicit_eta_tilde}
The two-step backward measure, expressed in terms of the random scattering operator $S$, has the form
\begin{equation}\label{eta_tilde_disintegration2}
d\tilde{\eta}(x,y)=d\tilde{\nu}(y) d\widetilde{S}_y(x),
\end{equation}
in which $d\widetilde{S}_y(x):=d(\mathcal{J}_*S_{{J}(y)})(x)$ and 
 $d\tilde{\nu}(y):=d(\mathcal{J}_*\nu)(y)$.
\end{proposition}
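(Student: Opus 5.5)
We argue directly from the definition $\tilde{\eta}=\mathcal{R}_*\eta$, rather than through the adjoint $B^*$, and test against product functions $f\times g$. As in the proof of Equation (\ref{eta_tilde_inner_product}),
\begin{equation*}
\tilde{\eta}(f\times g)=\eta\bigl((g\circ\mathcal{J})\times(f\circ\mathcal{J})\bigr)=\int_{\mathcal{N}^\text{o}} g(\mathcal{J}x)\left[\int f(\mathcal{J}z)\, dB_x(z)\right] d\nu(x).
\end{equation*}
The first step is to unfold $B_x=S_{\mathcal{T}(x)}$. Write $w=\mathcal{T}(x)$, which is the incoming state at the collision point. Then $\mathcal{J}x=J\mathcal{T}x=Jw$, so the outer integrand involves $g(Jw)$, while the inner integral is $\int f(\mathcal{J}z)\, dS_{w}(z)=(\mathcal{J}_*S_w)(f)$.

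Next we change variables in the outer integral from $x$ to $y:=\mathcal{J}x$. Since $\mathcal{J}$ is an involution, $x=\mathcal{J}y$ and $w=\mathcal{T}(x)=J y$, using $J^2=\mathrm{id}$ and $\mathcal{T}\mathcal{J}y=\mathcal{T}J\mathcal{T}x$. The outer measure $d\nu(x)$ pushes forward to $d(\mathcal{J}_*\nu)(y)=d\tilde{\nu}(y)$. This gives
\begin{equation*}
\tilde{\eta}(f\times g)=\int g(y)\left[\int f(x')\, d(\mathcal{J}_*S_{J(y)})(x')\right] d\tilde{\nu}(y),
\end{equation*}
which is exactly the claimed disintegration with $\widetilde{S}_y=\mathcal{J}_*S_{J(y)}$. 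The first variable of the pair is distributed according to $\widetilde{S}_y$, conditioned on the second variable $y$.

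The step that needs care is the bookkeeping of base points and domains. We must check that $S_{J(y)}$ is defined, i.e.\ that $J(y)=\mathcal{T}(\mathcal{J}y)$ lies in $\mathcal{N}^{\text{i}}$ at the correct wall point. For boundary walls $y\in W_q^+$, so $Jy\in W_q^-$ as required. For separating walls there is no restriction on the sign of the velocity.

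We also need the pair $(x',y)$ to lie in $\mathcal{D}^\text{o}$, that is, $\pi(y)=\pi(\mathcal{T}(x'))$. For $z$ in the support of $S_{Jy}$ we have $\pi(z)=\pi(y)$ and $x'=\mathcal{J}z$, so $\mathcal{T}(x')=\mathcal{T}J\mathcal{T}z$. Time reversibility of the flow gives $\mathcal{T}J\mathcal{T}=J$, hence $\mathcal{T}(x')=Jz$, which sits over $\pi(y)$.

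Finally, products $f\times g$ determine the measure on $\mathcal{D}^\text{o}$ by a standard monotone class argument, so the identity on product test functions suffices to conclude Equation (\ref{eta_tilde_disintegration2}).
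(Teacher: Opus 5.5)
Your proof is correct and follows the paper's own argument: write $\eta$ as $d\nu(x)\,dS_{\mathcal{T}(x)}(y)$, pull back along $\mathcal{R}$, substitute $y=\mathcal{J}x$ in the outer integral, and use $\mathcal{T}\circ\mathcal{J}=J$. The paper does this with a general test function $f(x,y)$ on $\mathcal{D}^{\text{o}}$, so it needs no monotone class step, and it omits your base-point bookkeeping; one small slip is that your parenthetical identity should read $\mathcal{T}\mathcal{J}y=\mathcal{T}J\mathcal{T}y=Jy$ (with $y$, not $x$), though $w=Jy$ already follows from $y=Jw$ and $J^2=\mathrm{id}$.
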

\begin{proof}
Let $f(x,y)$ be a test function on $\mathcal{D}^{\text{o}}$. Then
\begin{equation*}
\begin{aligned}
\tilde{\eta}(f)&=\eta(f\circ\mathcal{R})\\ 
&= \int_{\mathcal{D}^\text{o}}f(\mathcal{J}(y),\mathcal{J}(x))\, dS_{\mathcal{T}(x)}(y)\, d\nu(x)\\
&=
\int_{\mathcal{D}^\text{o}}f(x,y)\, d(\mathcal{J}_*S_{\mathcal{T}(\mathcal{J}(y))})(x)\, d(\mathcal{J}_*\nu)(y).
\end{aligned}
\end{equation*}
The claimed result now follows from $\mathcal{T}\circ\mathcal{J} = J$.
\end{proof}
Note that (\ref{eta_tilde_disintegration2}) is simply the backward disintegration of the same joint measure $\tilde{\eta}$; the roles of the coordinates $x,y$ are interchanged compared to the forward disintegration (\ref{proposition_eta_tilde}).

Let us now consider   the probability distribution on finite segments of the stationary chain with transition kernel $B$ and stationary measure $\nu$,  and the probability distribution of their time reversal. We define the probability measure $P_{[0,k]}$ on $\mathcal{D}^{\text{o}}_{[0,k]}$ as
\begin{equation*}%\label{definition_Pk}
dP_{[0,k]}(x_0, \dots, x_k):=d\nu(x_0)\, dB_{x_0}(x_1)\cdots\, dB_{x_{k-1}}(x_k).
\end{equation*}
Note that if $f_0, \dots, f_k$ are bounded continuous functions on $\mathcal{N}^\text{o}$, then
\begin{equation}\label{inner_product_form_of_Pk}
\begin{aligned}
P_{[0,k]}(f_0\times \cdots \times f_k)&=\left\langle 1, f_0B\left(f_1B\left( \cdots    f_{k-2}B\left(f_{k-1}Bf_k\right)   \cdots \right)\right) \right\rangle\\
&=\nu\left(f_0B\left(f_1B\left( \cdots    f_{k-2}B\left(f_{k-1}Bf_k\right)   \cdots \right)\right) \right).
\end{aligned}
\end{equation}
We define the time reversal of $P_{[0,k]}$
as the probability measure $\widetilde{P}_{[0,k]}$ on $\mathcal{D}^{\text{o}}_{[0,k]}$ given by
\begin{equation}\label{time_reversal_measure}
\widetilde{P}_{[0,k]}=\mathcal{R}_*P_{[0,k]}.
\end{equation}
\begin{proposition}
The   measure $\widetilde{P}_{[0,k]}$ on $\mathcal{D}_{[0,k]}^{\text{\em o}}$ gives the  probability distribution  on finite chains of length $k$
of the time-reversal Markov chain, with transition kernel $\widetilde{B}$ and stationary probability $\tilde{\nu}$. Explicitly,
\begin{equation}\label{proposition_equation_P_tilde_k}
d\widetilde{P}_{[0,k]}(x_0, \dots, x_k)=d\tilde{\nu}(x_0)\, d\widetilde{B}_{x_0}(x_1)\cdots\, d\widetilde{B}_{x_{k-1}}(x_k).
\end{equation}
\end{proposition}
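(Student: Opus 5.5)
We prove the identity on product test functions $f_0\times\cdots\times f_k$; these determine a probability measure on $\mathcal{D}^{\text{o}}_{[0,k]}$. There are two natural routes, and we take the probabilistic one as primary, with the operator computation as a check.

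\emph{Probabilistic route.} Let $X_0,\dots,X_k$ be the stationary chain with kernel $B$ and law $\nu$, so that $(X_0,\dots,X_k)$ has law $P_{[0,k]}$. Set $(Y_0,\dots,Y_k)=\mathcal{R}(X_0,\dots,X_k)$, that is, $Y_j=\mathcal{J}X_{k-j}$. By the definition (\ref{time_reversal_measure}), the law of $(Y_0,\dots,Y_k)$ is exactly $\widetilde{P}_{[0,k]}$.

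The preceding proposition states that $(Y_j)$ is a segment of a stationary Markov chain with kernel $\widetilde{B}$ and invariant measure $\tilde\nu$. Its initial law is $Y_0=\mathcal{J}X_k\sim\mathcal{J}_*\nu=\tilde\nu$, because $X_k\sim\nu$ by stationarity.

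The finite-dimensional distributions of a Markov chain are determined by its initial law and transition kernel. This yields (\ref{proposition_equation_P_tilde_k}) directly.

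The one point needing care is the Markov property itself. The preceding proposition computed only the one-step conditional expectation $\mathbb{E}[f(Y_{j+1})\mid Y_j]$, not the conditional expectation given the whole past $Y_0,\dots,Y_j$. So we must check that conditioning $Y_{j+1}=\mathcal{J}X_{k-j-1}$ on $Y_0,\dots,Y_j$ is the same as conditioning on $Y_j$ alone. In terms of $X$, this says $X_{k-j-1}$ given $(X_{k-j},\dots,X_k)$ depends only on $X_{k-j}$. That is the standard fact that the time reversal of a Markov chain is Markov: past and future are conditionally independent given the present, and this statement is symmetric in time.

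\emph{Operator route.} Alternatively, verify directly via (\ref{inner_product_form_of_Pk}) that
\[
\widetilde{P}_{[0,k]}(f_0\times\cdots\times f_k)=\nu\big(\mathcal{J}f_k\,B(\mathcal{J}f_{k-1}\,B(\cdots B\,\mathcal{J}f_0))\big).
\]
The steps are as follows.
\begin{itemize}
\item Move operators across the inner product with adjoints, using $B^*=\mathcal{J}\widetilde{B}\mathcal{J}$ and $\mathcal{J}^*=\mathcal{M}_{\rho_\nu}\mathcal{J}$.
\item Note that multiplication operators are self-adjoint, and that $\mathcal{J}(gh)=(\mathcal{J}g)(\mathcal{J}h)$.
\item Peel off one factor at a time by induction on $k$.
\end{itemize}
The target form is $\tilde\nu\big(f_0\,\widetilde{B}(f_1\cdots\widetilde{B}f_k)\big)$. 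The case $k=1$ is (\ref{eta_tilde_inner_product}), which takes the form $\tilde\nu(f_0\widetilde{B}f_1)$ once one applies $\nu\mathcal{J}^*=\nu\mathcal{M}_{\rho_\nu}\mathcal{J}$, whose effect is to integrate against $\tilde\nu$.

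The main obstacle in this route is bookkeeping: the nested adjoints and the factors of $\rho_\nu$ have to cancel correctly, using $\rho_\nu\cdot(\rho_\nu\circ\mathcal{J})=1$. For that reason the probabilistic route is the one to present, with this computation serving as confirmation.
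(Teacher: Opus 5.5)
Your argument is correct, but your primary route is not the one the paper takes. The paper proves the statement by the operator computation you relegate to a check. It writes $\widetilde P_{[0,k]}(f_0\times\cdots\times f_k)=P_{[0,k]}(\mathcal{J}f_k\times\cdots\times\mathcal{J}f_0)$ and expands this with the nested inner-product formula for $P_{[0,k]}$. It then moves $B$ and the (self-adjoint) multiplication operators across the inner product to get $\langle 1,(\mathcal{J}f_0)B^*((\mathcal{J}f_1)B^*(\cdots(\mathcal{J}f_{k-1})B^*\mathcal{J}f_k))\rangle$. Finally it collapses this using $(\mathcal{J}f)B^*(\mathcal{J}g)=\mathcal{J}(f\,\widetilde{B}g)$ and $\langle 1,\mathcal{J}g\rangle=\tilde\nu(g)$.

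No $\rho_\nu$ ever appears in that computation. $\mathcal{J}$ is never moved across the inner product, so $\mathcal{J}^*$ is not needed. It is only pulled outward as a multiplicative involution on functions, and the passage from $\nu$ to $\tilde\nu$ happens once, at the very end. The bookkeeping obstacle you anticipate therefore does not arise.

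As for what each approach buys: your probabilistic argument is shorter and more conceptual. However, it relies on the preceding proposition together with an outside fact, namely that the time reversal of a Markov chain is Markov. You are right that the preceding proposition, whose proof computes only $\mathbb{E}[f(Y_{j+1})\mid Y_j]$, does not supply this by itself, and your conditional-independence argument fills that gap correctly.

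The operator computation, by contrast, identifies the full finite-dimensional law directly on product test functions. It therefore does not need the Markov property of $(Y_j)$ as input; it proves it, since the resulting law has the product-kernel form. In that sense, the paper's proof of this proposition is what actually closes the gap you noticed in the previous one.
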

\begin{proof}
It is sufficient to show (\ref{time_reversal_measure}) when integrating a product function $f_0(x_0)\cdots f_k(x_k)$ with respect to $\widetilde{P}_{[0,k]}$.
\begin{equation*}
\begin{aligned}
\widetilde{P}_{[0,k]}(f_0\times \cdots \times f_k)&=P_{[0,k]}(\mathcal{J}f_k\times\cdots \times \mathcal{J}f_0)\\
&=\left\langle 1, (\mathcal{J}f_k)B\left((\mathcal{J}f_{k-1})B\left( \cdots    (\mathcal{J}f_{2})B\left((\mathcal{J}f_{1})B\mathcal{J}f_0\right)   \cdots \right)\right) \right\rangle,
\end{aligned}
\end{equation*}
where we have used Equation (\ref{inner_product_form_of_Pk}). Taking adjoints,
\begin{equation*}
\widetilde{P}_{[0,k]}(f_0\times \cdots \times f_k)=\left\langle 1, (\mathcal{J}f_0)B^*\left((\mathcal{J}f_{1})B^*\left( \cdots    (\mathcal{J}f_{k-2})B^*\left((\mathcal{J}f_{k-1})B^*\mathcal{J}f_k\right)   \cdots \right)\right) \right\rangle,
\end{equation*}
and using the identity 
\begin{equation*}
(\mathcal{J}f)B^*(\mathcal{J}g)=\mathcal{J}\left(f \mathcal{J}B^*\mathcal{J}g\right)=\mathcal{J}\left(f\widetilde{B}g\right),
\end{equation*}
we arrive at
\begin{equation*}
\widetilde{P}_{[0,k]}(f_0\times \cdots \times f_k)=\left\langle 1,\mathcal{J}\left( f_0\widetilde{B}\left(f_1\widetilde{B}\left( \cdots    f_{k-2}\widetilde{B}\left(f_{k-1}\widetilde{B}f_k\right)   \cdots \right)\right) \right)\right\rangle.
\end{equation*}
Note that $\langle 1, \mathcal{J}g\rangle = \tilde{\nu}(g)$. This expression implies Equation (\ref{proposition_equation_P_tilde_k}).
\end{proof}

\begin{proposition}
The following identities involving Radon-Nikodym derivatives hold:
\begin{equation*}
\frac{d\tilde{\nu}}{d\nu}(x) \frac{d\widetilde{B}_x}{dB_x}(y)= \frac{d\tilde{\eta}}{d\eta}(x,y)
\end{equation*}
and 
\begin{equation*}
\frac{d\widetilde{P}_{[0,k]}}{dP_{[0,k]}}(x_0, \dots, x_k) = \frac{d\tilde{\nu}}{d\nu}(x_0) \frac{d\widetilde{B}_{x_0}}{dB_{x_0}}(x_1)\cdots
\frac{d\widetilde{B}_{x_{k-1}}}{dB_{x_{k-1}}}(x_k).
\end{equation*}
From these it follows that
\begin{equation}\label{radon_nikodym_P_tilde_P}
\frac{d\widetilde{P}_{[0,k]}}{dP_{[0,k]}}(x_0, \dots, x_k) = \left(\prod_{j=1}^{k-1} \rho_{\nu}(x_j)\right)^{-1}  \frac{d\tilde{\eta}}{d\eta}(x_0,x_1)\cdots
 \frac{d\tilde{\eta}}{d\eta}(x_{k-1},x_k).
\end{equation}
\end{proposition}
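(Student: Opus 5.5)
The plan is to read off each identity from the disintegrations established above, using that Radon--Nikodym derivatives of product-type measures factor. I will assume throughout the absolute continuity hypotheses under which the stated derivatives exist: $\tilde\nu\ll\nu$ (so $\rho_\nu=d\tilde\nu/d\nu$) and $\widetilde B_x\ll B_x$ for $\nu$-a.e.\ $x$.

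\emph{First identity.} By Corollary~\ref{corollary_B_invariant} and the proposition on the time-reversed chain, $\tilde\eta$ is the law of two consecutive states of the reversed chain, i.e.
\[
d\tilde\eta(x,y)=d\tilde\nu(x)\,d\widetilde B_x(y).
\]
This also follows from Equation~(\ref{proposition_eta_tilde}), since $\widetilde B=\mathcal J B^*\mathcal J$. Meanwhile $d\eta(x,y)=d\nu(x)\,dB_x(y)$ by definition. Substituting $d\tilde\nu=\frac{d\tilde\nu}{d\nu}\,d\nu$ and $d\widetilde B_x=\frac{d\widetilde B_x}{dB_x}\,dB_x$, and integrating a test function $f(x,y)$, gives
\[
\tilde\eta(f)=\int f(x,y)\,\frac{d\tilde\nu}{d\nu}(x)\,\frac{d\widetilde B_x}{dB_x}(y)\,dB_x(y)\,d\nu(x).
\]
Hence the product is a version of $d\tilde\eta/d\eta$. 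One technical point must be checked: $(x,y)\mapsto \frac{d\widetilde B_x}{dB_x}(y)$ needs a jointly measurable version. This follows from measurable dependence of the kernels on $x$, for instance via a standard martingale or countably generated $\sigma$-algebra argument.

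\emph{Second identity.} By Equation~(\ref{proposition_equation_P_tilde_k}), $\widetilde P_{[0,k]}$ has the same iterated-kernel form as $P_{[0,k]}$, with $(\tilde\nu,\widetilde B)$ in place of $(\nu,B)$. Substituting the densities factor by factor and applying Fubini along the disintegration proves the product formula, exactly as in the $k=1$ case. Equivalently, one can induct on $k$ using the Markov property.

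\emph{Third identity.} Using the first identity, write
\[
\frac{d\widetilde B_{x_{j-1}}}{dB_{x_{j-1}}}(x_j)=\frac{d\tilde\eta}{d\eta}(x_{j-1},x_j)\Big/\rho_\nu(x_{j-1})\qquad(j=1,\dots,k).
\]
Substituting into the second identity, the leading factor $\frac{d\tilde\nu}{d\nu}(x_0)=\rho_\nu(x_0)$ cancels the denominator $\rho_\nu(x_0)$ from $j=1$. What remains are the denominators $\rho_\nu(x_1),\dots,\rho_\nu(x_{k-1})$, which is exactly Equation~(\ref{radon_nikodym_P_tilde_P}).

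The only delicate step is division by $\rho_\nu$. We need $\rho_\nu>0$ $\nu$-a.e.; this follows from the earlier identity $\rho_\nu(x)\rho_\nu(\mathcal J x)=1$. With that, the identities hold $P_{[0,k]}$-a.e., and everything else is bookkeeping. The main obstacle is therefore the measurability and a.e.\ issues, not the algebra.
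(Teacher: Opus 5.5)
Your proposal is correct and follows essentially the paper's route. The paper says only that the first two identities follow by integrating product test functions against both sides, i.e.\ from the disintegrations $d\tilde\eta(x,y)=d\tilde\nu(x)\,d\widetilde B_x(y)$ and the iterated-kernel form of $\widetilde P_{[0,k]}$, and that the third follows from these and the definition of $\rho_\nu$. That is exactly your substitution-and-cancellation argument, and your remarks on a jointly measurable version of $d\widetilde B_x/dB_x$ and on $\rho_\nu>0$ $\nu$-a.e.\ supply details the paper leaves implicit.
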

\begin{proof}
The first two identities are obtained  straightforwardly by integrating product test functions against the measures on both sides.
The third identity is an immediate consequence of the first two and the definition of $\rho_\nu$. 
\end{proof}
\begin{proposition}\label{proposition_collapsing}
Let $f$ be a test function   on $\mathcal{D}^{\text{\em o}}_{[0,1]}$.
Then
\begin{equation*}
\int_{\mathcal{D}_{[0,k]}^{\text{\em o}}} f(x_i, x_{i+1})\, dP_{[0,k]}(x_0, \dots, x_k) = \int_{\mathcal{D}^{\text{\em o}}_{[0,1]}}f(x,y)\, d\eta(x,y) = \eta(f).
\end{equation*}
If $f$ is a test function on $\mathcal{N}^{\text{\em o}}$, then
\begin{equation*}
\int_{\mathcal{D}_{[0,k]}^{\text{\em o}}} f(x_i)\, dP_{[0,k]}(x_0, \dots, x_k) = \int_{\mathcal{N}^{\text{\em o}}}f(x)\, d\nu(x) = \nu(f).
\end{equation*}
\end{proposition}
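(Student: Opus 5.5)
The plan is to prove the second identity first, as it is really a special case of the first. We then obtain the first by integrating out all coordinates other than $x_i$ and $x_{i+1}$, using the Markov structure of $P_{[0,k]}$ and the stationarity $\nu B=\nu$. Since both sides are linear and continuous in $f$, it suffices to prove the statements for bounded continuous $f$. For the two-variable statement we may even reduce to product functions $f(x,y)=g(x)h(y)$ by a monotone class argument, although the direct integration argument below works for general $f$.

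First, we integrate out the variables $x_{i+2},\dots,x_k$ from the right. Each kernel $B_{x_{j-1}}$ is a probability measure, so
\begin{equation*}
\int dB_{x_{k-1}}(x_k)=1,
\end{equation*}
and inductively the last $k-i-1$ integrations contribute the factor $1$. This leaves
\begin{equation*}
\int f(x_i,x_{i+1})\, d\nu(x_0)\, dB_{x_0}(x_1)\cdots dB_{x_i}(x_{i+1}).
\end{equation*}
In operator language this is the expression (\ref{inner_product_form_of_Pk}) with $f_j=1$ for $j>i+1$, combined with $B1=1$.

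Next, we integrate out $x_0,\dots,x_{i-1}$ from the left. The marginal of $x_1$ under $d\nu(x_0)\,dB_{x_0}(x_1)$ is $\nu B=\nu$. Iterating this $i$ times shows that the joint law of $(x_i,x_{i+1})$ is
\begin{equation*}
d\nu(x_i)\,dB_{x_i}(x_{i+1}) = d\eta(x_i,x_{i+1}).
\end{equation*}
Formally, for a test function $F(x_i,x_{i+1})$ we set $G(x_i):=\int F(x_i,y)\,dB_{x_i}(y)$. The integral then equals $(\nu B^i)(G)=\nu(G)=\eta(F)$. The same computation with $F$ depending only on $x_i$ gives the second identity, with the convention that for $i=k$ no forward integration is needed.

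No step is a genuine obstacle. The only point requiring care is measurability: $G$ must be measurable so that $\nu B^i(G)=\nu(G)$ makes sense. This follows from the measurability assumption on $x\mapsto S_x$ in the definition of the scattering operator, together with Fubini's theorem for the kernel-product measure. One also needs stationarity in the form $\nu B^i=\nu$, which follows immediately by induction from $\nu B=\nu$.
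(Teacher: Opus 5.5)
Your proposal is correct and follows essentially the same route as the paper: you integrate out the coordinates after $x_{i+1}$ using $B1=1$ and those before $x_i$ using stationarity $\nu B=\nu$, which leaves $\nu(G)=\eta(F)$. The only difference is in presentation. The paper reduces to product test functions and the operator form of $P_{[0,k]}$ to get $\nu(f_iBf_{i+1})=\eta(f_i\times f_{i+1})$, whereas you handle a general $f$ directly through $G(x)=\int F(x,y)\,dB_x(y)$, which avoids the implicit density argument for product functions.
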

\begin{proof}
We establish the first identity. The second is proved in a similar manner.  As in other such identities, it is sufficient to 
consider product functions of the form 
\begin{equation*}f(x_i,x_{i+1})={1}(x_0)\cdots {1}(x_{i-1}) f_i(x_i)f_{i+1}(x_{i+1}) {1}(x_{i+2})\cdots {1}(x_k),\end{equation*}
where $1(x)$ indicates the constant function in the variable $x$.
Using Equation (\ref{inner_product_form_of_Pk}) and noting that $B{1}={1}$ and $\nu B=\nu$, we obtain
\begin{equation*}
\int_{\mathcal{D}_{[0,k]}^{\text{\em o}}} f(x_i, x_{i+1})\, dP_{[0,k]}(x_0, \dots, x_k) = \nu(f_iBf_{i+1})=\eta(f_i\times f_{i+1}).
\end{equation*}
At the last step  we used Equation (\ref{eta_fxg}).
\end{proof}

  \subsection{Entropy production in the stationary process}
The following definitions are taken from \cite{qian}.
\begin{definition}[Relative entropy] Suppose $P_1$ and $P_2$ are two probability measures on a measurable space $(\mathcal{D},\mathcal{F})$. The {\em relative entropy} of $P_1$ with respect to $P_2$ is defined as
\begin{equation*}
H(P_1,P_2):=\begin{cases}
\int_{\mathcal{D}}\log\left(\frac{dP_1}{dP_2}\right)\, dP_1 & \text{if } P_1 \ll P_2 \text{ and } \log\left(\frac{dP_1}{dP_2}\right)\in L^1(\mathcal{D},P_1)\\
+\infty & \text{otherwise}.
\end{cases}
\end{equation*}
\end{definition}

\begin{definition}[Entropy production rate] The {\em entropy production rate} of the stationary Markov chain $X_0, X_1, \dots$ defined by the transition kernel  $B$ and stationary probability measure $\nu$ is defined by
\begin{equation*}
e_p:=\lim_{n\rightarrow \infty} \frac1n H\left(P_{[0,n]},\widetilde{P}_{[0,n]}\right),
\end{equation*}
where $H\left(P_{[0,n]},\widetilde{P}_{[0,n]}\right)$ is the relative entropy of $P_{[0,n]}$ with respect to $\widetilde{P}_{[0,n]}$ restricted to
the $\sigma$-algebra generated by $X_0, \dots, X_n$. 
\end{definition}

Let us return to the setting of the previous section. We assume here absolute continuity between measures whose Radon-Nikodym derivatives 
are being used. 
\begin{theorem}[Entropy production rate formula]\label{theorem_main_ep}
The entropy production rate for the stationary random dynamical system defined by the transition operator $B$ and stationary probability measure $\nu$, under the assumption that the pairs $\nu, \tilde{\nu}$ and $\eta, \tilde{\eta}$ define equivalent measure classes, has the form
\begin{equation}\label{theorem_ep}
e_p=\int_{\mathcal{N}^\text{\em o}} 
\log\rho_\nu\, d\nu - 
\int_{\mathcal{D}^{\text{\em o}}} \log\left(\frac{d\tilde{\eta}}{d\eta}\right)\, 
d\eta=\int_{\mathcal{D}^\text{o}}\log\left(\rho_\nu(x)\frac{d\eta}{d\tilde{\eta}}(x,y)\right)\, d\eta(x,y). 
\end{equation}  
\end{theorem}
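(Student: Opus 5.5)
The plan is to compute $H(P_{[0,n]},\widetilde{P}_{[0,n]})$ exactly for each $n$ using the factorization (\ref{radon_nikodym_P_tilde_P}), and then divide by $n$ and let $n\to\infty$. Since $\nu,\tilde\nu$ and $\eta,\tilde\eta$ are assumed equivalent, the first two identities of the preceding proposition show that $P_{[0,n]}$ and $\widetilde{P}_{[0,n]}$ are equivalent. Hence
$$\frac{dP_{[0,n]}}{d\widetilde{P}_{[0,n]}}=\left(\frac{d\widetilde{P}_{[0,n]}}{dP_{[0,n]}}\right)^{-1}.$$
Taking logarithms in (\ref{radon_nikodym_P_tilde_P}) gives
$$\log\frac{dP_{[0,n]}}{d\widetilde{P}_{[0,n]}}(x_0,\dots,x_n)=\sum_{j=1}^{n-1}\log\rho_\nu(x_j)-\sum_{j=0}^{n-1}\log\frac{d\tilde\eta}{d\eta}(x_j,x_{j+1}).$$

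Next I would integrate this identity term by term against $P_{[0,n]}$ and apply Proposition \ref{proposition_collapsing}. Each single-site term integrates to $\nu(\log\rho_\nu)$, and each pair term integrates to $\eta(\log(d\tilde\eta/d\eta))$. This yields the exact finite-$n$ formula
$$H(P_{[0,n]},\widetilde{P}_{[0,n]})=(n-1)\int\log\rho_\nu\,d\nu-n\int\log\frac{d\tilde\eta}{d\eta}\,d\eta.$$
Dividing by $n$ and letting $n\to\infty$ gives the first expression in (\ref{theorem_ep}).

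For the second expression I would use that the first marginal of $\eta$ is $\nu$, which follows from $\eta(f\times 1)=\nu(fB1)=\nu(f)$. Therefore $\int\log\rho_\nu\,d\nu=\int\log\rho_\nu(x)\,d\eta(x,y)$, and combining the two integrals over $\eta$ gives the stated integrand $\log\bigl(\rho_\nu(x)\,\frac{d\eta}{d\tilde\eta}(x,y)\bigr)$.

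The main obstacle is integrability. Proposition \ref{proposition_collapsing} is stated for bounded continuous test functions, while $\log\rho_\nu$ and $\log(d\tilde\eta/d\eta)$ may be unbounded. I would extend the collapsing identities to all functions in $L^1(\nu)$, respectively $L^1(\eta)$, by monotone class and truncation arguments: the identities express equality of marginals, and equal marginals give equal integrals for every integrable function. The theorem implicitly requires these two logarithms to be integrable; otherwise the relative entropy is infinite, or the difference of integrals is undefined. I would state this integrability as part of the hypothesis. With it, the log-Radon–Nikodym derivative lies in $L^1(P_{[0,n]})$, so the relative entropy falls under the first case of its definition and the computation above is justified.
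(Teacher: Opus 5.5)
Your proposal is correct and follows the paper's proof. The paper also takes logarithms in \eqref{radon_nikodym_P_tilde_P} and applies Proposition~\ref{proposition_collapsing} to get the exact identity $H(P_{[0,n]},\widetilde{P}_{[0,n]})=(n-1)\,\nu(\log\rho_\nu)-n\,\eta\bigl(\log\frac{d\tilde\eta}{d\eta}\bigr)$, then divides by $n$ and lets $n\to\infty$. Your additions are the observation that the first marginal of $\eta$ is $\nu$ (giving the second expression) and the extension of the collapsing identities to integrable functions under an integrability hypothesis; both are details the paper leaves implicit.
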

\begin{proof}
A direct consequence of Equation (\ref{radon_nikodym_P_tilde_P}) and the identities in Proposition \ref{proposition_collapsing} is
\begin{equation*}
H\left(P_{[0,n]},\widetilde{P}_{[0,n]}\right)=(n-1)\int_{\mathcal{N}^\text{o}} \log\rho_\nu\, d\nu -n\int_{\mathcal{D}^\text{o}} \log\left(\frac{d\tilde{\eta}}{d\eta}\right)\, d\eta.
\end{equation*}
Dividing by $n$ and passing to the limit $n\rightarrow \infty$ yields Equation (\ref{theorem_ep}).
\end{proof}

\begin{corollary} The entropy production rate $e_p$ for the stationary system defined by $\nu$ and $B$ as in Theorem \ref{theorem_main_ep}, Equation (\ref{theorem_ep}) satisfies
\begin{equation*}
0\leq e_p = H(\eta\| \tilde{\eta}) -H(\nu\|\tilde{\nu}),
\end{equation*}
where $H(\mu_1\| \mu_2)$ is the Kullback-Leibler divergence (relative entropy) of the probability measures $\mu_1, \mu_2$.
\end{corollary}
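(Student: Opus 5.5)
The plan is to rewrite the two terms of Equation (\ref{theorem_ep}) as Kullback--Leibler divergences. Then nonnegativity will follow from the defining relative-entropy limit, not from a direct inequality between the two terms.

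\emph{The first term.} Since $\tilde\nu=\mathcal{J}_*\nu$, we have $\rho_\nu = d\tilde\nu/d\nu$. By the earlier proposition on $\rho_\nu$, $d\nu/d\tilde\nu(x)=\rho_\nu(\mathcal{J}x)=\rho_\nu(x)^{-1}$. Hence
\[
\int_{\mathcal{N}^\text{o}}\log\rho_\nu\,d\nu=-\int_{\mathcal{N}^\text{o}}\log\frac{d\nu}{d\tilde\nu}\,d\nu=-H(\nu\|\tilde\nu).
\]

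\emph{The second term.} In the same way, $-\int_{\mathcal{D}^\text{o}}\log(d\tilde\eta/d\eta)\,d\eta=\int_{\mathcal{D}^\text{o}}\log(d\eta/d\tilde\eta)\,d\eta=H(\eta\|\tilde\eta)$. This uses the equivalence of $\eta$ and $\tilde\eta$ assumed in Theorem \ref{theorem_main_ep}, so that $d\eta/d\tilde\eta=(d\tilde\eta/d\eta)^{-1}$ holds $\eta$-a.e. Substituting both identities into (\ref{theorem_ep}) gives $e_p=H(\eta\|\tilde\eta)-H(\nu\|\tilde\nu)$.

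\emph{Nonnegativity.} We have $e_p=\lim_n \frac1n H(P_{[0,n]},\widetilde P_{[0,n]})$, and each relative entropy is $\ge 0$ by Jensen's inequality, since $-\log$ is convex. A limit of nonnegative numbers is nonnegative. The proof of Theorem \ref{theorem_main_ep} also gives the explicit finite-$n$ expression $H(P_{[0,n]},\widetilde P_{[0,n]})=nH(\eta\|\tilde\eta)-(n-1)H(\nu\|\tilde\nu)$. Taking $n=1$ gives $H(\eta\|\tilde\eta)\ge0$. More to the point, nonnegativity for every $n$, divided by $n$, yields $e_p\ge0$ directly.

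An alternative check that $H(\eta\|\tilde\eta)\ge H(\nu\|\tilde\nu)$ is the data-processing inequality. The first-coordinate marginals of $\eta$ and $\tilde\eta$ are $\nu$ and $\tilde\nu$ respectively (Proposition \ref{explicit_eta_tilde} and (\ref{proposition_eta_tilde})). Relative entropy does not increase under marginalization, which gives the same inequality.

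\emph{Main obstacle.} The only delicate point is integrability: the logarithms must lie in $L^1$, so that splitting the integral in (\ref{theorem_ep}) into two finite pieces is legitimate. I would handle this by assuming, as the theorem implicitly does, that both divergences are finite. Otherwise the identity is interpreted with the convention that $e_p=+\infty$ whenever $H(\eta\|\tilde\eta)=\infty$ and $H(\nu\|\tilde\nu)<\infty$. Note that $H(\nu\|\tilde\nu)\le H(\eta\|\tilde\eta)$ by the marginalization argument, which rules out an $\infty-\infty$ ambiguity only when $H(\eta\|\tilde\eta)<\infty$.
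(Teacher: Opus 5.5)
Your proposal is correct and follows the paper's proof. The paper's proof consists only of the remark that the identity is "a simple matter of definitions" (writing $\rho_\nu=d\tilde\nu/d\nu$ and using equivalence of the measure classes turns the two terms of (\ref{theorem_ep}) into $-H(\nu\|\tilde\nu)$ and $H(\eta\|\tilde\eta)$) plus Gibbs's inequality for nonnegativity; you have supplied those details, and your finite-$n$ identity $H(P_{[0,n]},\widetilde P_{[0,n]})=nH(\eta\|\tilde\eta)-(n-1)H(\nu\|\tilde\nu)$ is exactly what the proof of Theorem \ref{theorem_main_ep} produces. Your data-processing check is also valid: both marginals of $\eta$ are $\nu$ by stationarity and both marginals of $\tilde\eta$ are $\tilde\nu$, so marginalizing on either coordinate works, even though Proposition \ref{explicit_eta_tilde} actually gives the second-coordinate marginal rather than the first.
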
 
\begin{proof}
This is a simple matter of definitions. That $e_p$ is non-negative is a standard property of the relative entropy known as Gibbs's inequality. 
\end{proof}
Note that the non-negative quantity $e_p$ is the difference of two non-negative quantities. 

The above expressions for $e_p$ may be written in a more symmetric form that  makes the non-negativity of each term more explicit:
\begin{equation*}
e_p=\frac12 \int_{\mathcal{D}^{\text{o}}}\log\left(\frac{d\tilde{\eta}}{d\eta}\right)\left(d\tilde{\eta} -d{\eta}\right)-\frac12\int_{\mathcal{N}^{\text{o}}}
 \log\left(\frac{d\tilde{\nu}}{d\nu}\right)\left(d\tilde{\nu}-d\nu\right).
\end{equation*}
This follows from the previously seen identity $\rho_\nu\circ\mathcal{J}=\rho_\nu^{-1}$ and the similarly easy to establish identity
$\frac{d\tilde{\eta}}{d\eta}\circ \mathcal{R}=\frac{d{\eta}}{d\tilde{\eta}}.$
 
\section{Introducing temperature}\label{sec:introducing_temperature}
Our discussion so far  applies to general models of random impact/transmission. It is  now time to introduce wall temperature. This is done, as in the literature of the Boltzmann equation, through the notion of {\em reciprocity} with respect to Maxwellian probability distributions.
 \subsection{Surface Maxwellians and reciprocity}
 \begin{definition}[Surface Maxwellian at temperature $T$]
 The {\em Maxwellian} (or Maxwell-Boltzmann distribution) at $q\in \mathcal{W}$ (on either side of a wall) and temperature $T(q)$ is
 the probability measure $\mu_q\in \mathcal{P}(W_q^\pm)$ having density relative to Lebesgue measure given by
 \begin{equation}\label{maxwellian_definition}
 \rho_q(v)=2\pi\left(\frac{\beta(q)m}{2\pi}\right)^{\frac{n+1}{2}}|\langle v, \mathbf{n}_q\rangle| \exp\left\{-\beta(q)\frac{m|v|_q^2}{2}\right\},
 \end{equation}
 where $m$ is  particle mass, $\beta(q)=1/\kappa T(q)$, $n$ is the  dimension of $M$,  and $\kappa$ is known as the Boltzmann constant. 
Denoting by $V_q$  the  volume measure on $T_qM$, we can write $d\mu_q(v) =\rho_q(v)\, dV_q(v)$.
When necessary to distinguish the Maxwellian on different sides of a wall, we write $\mu_q^+, \mu_q^-$ for the probability measures on
$W_q^+$ and $W_q^-$, respectively. 
 \end{definition}
 
The term $|\langle v,\mathbf{n}_q\rangle|$ in Equation (\ref{maxwellian_definition}) distinguishes  surface from  bulk Maxwellians. Because of this term, 
 the  probability distribution of directions  has a cosine density,  often referred to as the {\em Knudsen cosine law}. 
The subscript $q$ may  be dropped from $\mu_q$ if there is no risk of confusion.  The same applies to the density $\rho_q(v)$.

 Recall the {\em time reversal map}  $R_q:T_qM\times T_qM\rightarrow T_qM\times T_qM$ at a regular point $q$, which is defined as
 \begin{equation*}%\label{time_reversal}
 R((q,u),(q,v))=R_q(u,v)=(J_qv,J_qu).
 \end{equation*}
 Given a general random scattering operator $S$, we define at  $q\in \mathcal{W}$ the probability measure 
 $\zeta_q\in \mathcal{P}(T_qM\times T_qM)$ on pairs $(v^\text{i},v^\text{o})$ of incoming-outgoing velocities as 
 \begin{equation}\label{zeta_in_out}
 d\zeta_q(v^\text{i},v^\text{o}):=d\mu_q(v^\text{i}) dS_{\left(q,v^\text{i}\right)}(v^\text{o}) :=d\mu_q(v^{\text{i}}) S_q(dv^{\text{o}}|v^\text{i}).
 \end{equation}
 Here, and in the below Definition \ref{definition_reciprocity},  we allow $\mu_q$ to be any convex combination
  \begin{equation*}\mu_q=a\mu_q^-+(1-a)\mu_q^+,\end{equation*}
  where  $0\leq a\leq 1$.

Wall temperature is introduced in our compartmented system  by imposing {\em reciprocity}, as defined next. 
\begin{definition}[Reciprocity]\label{definition_reciprocity}
The general random  scattering operator $S$ has the property of {\em reciprocity} if at each regular  boundary point $q$ the probability measure $\zeta_q$ defined in 
(\ref{zeta_in_out}) is invariant under the time-reversal map $R_q$. The operator  $S$ satisfying reciprocity with respect to a surface Maxwellian having temperature parameter $T$ will be   called a 
{\em random  scattering operator at temperature} $T$.
\end{definition}
  Since $J_*\mu_q^\pm =\mu_q^\mp$, the reciprocity condition for $S$ with respect to $\mu=\frac12 \mu_q^-+\frac12\mu_q^+$
  amounts to
  \begin{equation*}%\label{equation_reciprocity_definition}
  d\mu_q(v^\text{i})\, dS_{(q,v^\text{i})}(v^\text{o}) =d\mu_q(Jv^\text{o})\, dS_{(q,Jv^\text{o})}(Jv^\text{i}).
  \end{equation*}
 
By this definition, individual wall-collision events satisfy a detailed balance condition relative to the surface Maxwellian distribution of velocities (with temperature parameter given by the specified wall temperature at $q$).  
In an isolated compartment enclosed by walls with constant temperature $T$, the particle motion  defines a Markov process which, if irreducible, will have for its unique stationary velocity distribution the  Maxwellian distribution
with the same temperature $T$, and no entropy or work will be produced in the stationary regime. See  \cite{cook} for a large class of  random-mechanical  reflection operators $S$ satisfying this property. Simpler examples will be given shortly. 
Reciprocity is  typically assumed for boundary operators defining boundary conditions for the Boltzmann equation. See, for example, \cite{cercignani}.
 
  \begin{figure}[ht]
\begin{center}
\includegraphics[width=2 in]{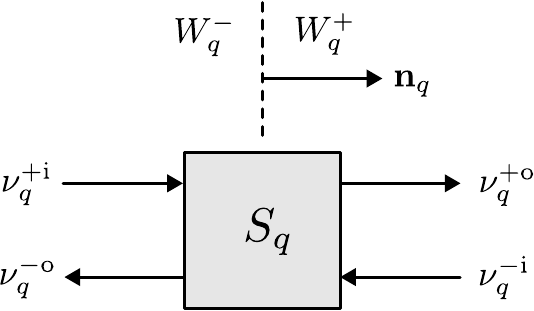}\ \ 
\caption{{\small  Particle scattering at a wall. The superscript $\pm$ is defined in relation to the wall orientation $\mathbf{n}_q$ while $i$ and $o$ stand for `incoming' (pre-collision) and `outgoing' (post-collision). The $\nu_q^{\pm\text{i}}$ and $\nu_q^{\pm\text{o}}$ are sub-probability measures.}}
\label{S_block}
\end{center}
\end{figure}

 It is convenient to express the random scattering operator more explicitly in terms of reflection and transmission probabilities. 
 For this, let us write an incoming (or pre-collision) probability measure as a pair $\nu^{\text{i}}_q=(\nu_q^{-\text{i}}, \nu_q^{+\text{i}})$, where $\nu_q^{\epsilon\text{i}}$
 is supported in $W_q^\epsilon$, $\epsilon\in \{-,+\}$. Similarly, we write the outgoing (or post-collision) probability measure as  $\nu^{\text{o}}_q=(\nu_q^{-\text{o}}, \nu_q^{+\text{o}})$, where $\nu^{\epsilon \text{o}}$ is also supported in $W_q^{\epsilon}$.
 The scattering map on probability measures (see (\ref{nu_S})) then assumes the block form
 \begin{equation*}
(\nu_q^{-\text{o}}, \nu_q^{+\text{o}})=(\nu_q^{-\text{i}}, \nu_q^{+\text{i}})\left(\begin{array}{cc}S_q^{--} & S_q^{-+} \\S_q^{+-} & S_q^{++}\end{array}\right).
 \end{equation*}
  See Figure \ref{S_block}.
 More explicitly, the expression
 \begin{equation*}
 \nu_q^{\epsilon\text{o}} = \nu_q^{-\text{i}}S_q^{-\epsilon} + \nu_q^{+\text{i}}S_q^{+\epsilon}
 \end{equation*}
 should be understood as the sub-probability measure 
 \begin{equation*}
  d\nu_q^{\epsilon\text{o}}(v) = \int_{W_q^-}d\nu_q^{-\text{i}}(u)S_q^{-\epsilon}(dv|u) + \int_{W_q^+}d\nu_q^{+\text{i}}(u)S_q^{+\epsilon}(dv|u).
 \end{equation*}
 Thus, if $U\subset W_q^{\epsilon}$ is a measurable set, 
 \begin{equation*}
 \nu_q^{\epsilon\text{o}}(U) =\int_{W_q^-}d\nu_q^{-\text{i}}(u)S_q^{-\epsilon}(U|u) + \int_{W_q^+}d\nu_q^{+\text{i}}(u)S_q^{+\epsilon}(U|u).
 \end{equation*}
 
 A simple but useful observation is the following conservation identity:
 \begin{proposition}\label{proposition_balance}
 For any random scattering operator $S$ and for each $q\in \mathcal{W}$, we have  
 \begin{equation*}
 \nu_q^{-\text{\em o}}(W_q^-)+\nu_q^{+\text{\em o}}(W_q^+)= \nu_q^{-\text{\em i}}(W_q^-)+\nu_q^{+\text{\em i}}(W_q^+).
 \end{equation*}
 \end{proposition}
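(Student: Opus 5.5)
The identity is total mass conservation through the block scattering map: every row of the block kernel is a Markov kernel. Fix a regular point $q$ and write the outgoing sub-probability measures as
\[
\nu_q^{\epsilon\text{o}}=\nu_q^{-\text{i}}S_q^{-\epsilon}+\nu_q^{+\text{i}}S_q^{+\epsilon}.
\]
I will evaluate each on its full support $W_q^\epsilon$ and add over $\epsilon\in\{-,+\}$.

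Concretely, $\nu_q^{-\text{o}}(W_q^-)+\nu_q^{+\text{o}}(W_q^+)$ expands to
\[
\int_{W_q^-}\bigl[S_q^{--}(W_q^-|u)+S_q^{-+}(W_q^+|u)\bigr]\,d\nu_q^{-\text{i}}(u)+\int_{W_q^+}\bigl[S_q^{+-}(W_q^-|u)+S_q^{++}(W_q^+|u)\bigr]\,d\nu_q^{+\text{i}}(u).
\]
Combining the integrals requires a Fubini/Tonelli swap, which is justified since all integrands are nonnegative and bounded by $1$.

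The key point is that, since $S_{(q,u)}$ is a probability measure on $T_qM$ (or on $W_q^+$ for a boundary wall), each bracket equals $S_q(T_qM|u)=1$. Here I use that $T_qM$ is $W_q^-\cup W_q^+$ up to the hyperplane $W_q$. For the generalized MS model this is exactly the already noted identity $S_q^{--}(T_qM|v)+S_q^{-+}(T_qM|v)=1$. The right side therefore reduces to $\nu_q^{-\text{i}}(W_q^-)+\nu_q^{+\text{i}}(W_q^+)$, as claimed.

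The only point needing care, and the main obstacle, is the tangential set $W_q$: I must check that $S_{(q,u)}$ assigns it zero mass, or else include it in one of the blocks. For boundary walls this is automatic, since $S_x$ is supported on $W_q^+$ and $\nu_q^{+\text{i}}=0$. For separating walls I would either assume it as part of the block decomposition, or include $W_q$ in one half-space by convention; either choice leaves the computation unchanged.
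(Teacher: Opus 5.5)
Your proof is correct and follows the paper's: the paper's one-line argument uses exactly the row-sum identity $S^{\epsilon -}_q(W^-_q|v)+S^{\epsilon +}_q(W^+_q|v)=1$, which you apply after expanding $\nu_q^{\epsilon\text{o}}$. Your remark about the tangential hyperplane $W_q$ makes explicit an assumption the paper leaves implicit, and the step you attribute to Fubini/Tonelli actually needs only linearity of the integral.
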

 \begin{proof}
 This identity follows immediately from  
 $S^{\epsilon -}_q(W^-_q|v)+S^{\epsilon +}_q(W^+_q|v)=1.$
 \end{proof}
 
 This block form of $S$ differs in an inconsequential way from how scattering matrices are conventionally written in other subjects, but it is natural in our setting.
 Here  the block $S_q^{--}$ represents transmission of an incoming velocity vector  in $W_q^-$  from the side of the wall towards  which $\mathbf{n}_q$  is pointing   to the other side, thus with outgoing vector still in $W_q^-$; $S_q^{-+}$ represents reflection  of a velocity vector  from $W_q^-$ into $W_q^+$, and so on.

The probability measure $\zeta_q$ defined  by Equation (\ref{zeta_in_out}) breaks up into sub-probability measures describing transmission and reflection:
\begin{equation}\label{definition_zeta_epsilon1_epsilon2}
\zeta^{\epsilon_1\epsilon_2}(A):=\zeta\left(A\cap \left(W^{\epsilon_1}\times W^{\epsilon_2}\right)\right),
\end{equation}
for any measurable subset $A$ of $\mathcal{D}_q^{\text{o}}$ and $\epsilon_1, \epsilon_2\in \{-,+\}$.  
\begin{proposition}
Let $\mu_q =\mu_q^{\epsilon_1}$ (so that either $a=0$ or $a=1$). Then we have
\begin{equation}\label{zeta_expression}
d\zeta^{\epsilon_1\epsilon_2}_q(v^\text{i},v^\text{o})=d\mu^{\epsilon_1}_q(v^\text{i}) S^{\epsilon_1\epsilon_2}_{q}(dv^\text{o}|v^\text{i}),
\end{equation}
and the reciprocity condition, $R_*\zeta=\zeta$,   is equivalent to
\begin{equation}\label{zeta_epsilon1_epsilon2}
R_*\zeta^{-+}=\zeta^{-+}, \ \  R_*\zeta^{+-}=\zeta^{+-}, \ \  R_*\zeta^{--}=\zeta^{++}, \ \ R_*\zeta^{++}=\zeta^{--}, 
\end{equation}
in which the subindex $q$ has been dropped for simplicity. 
\end{proposition}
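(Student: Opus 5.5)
The plan is to reduce both claims to bookkeeping of supports under the time-reversal map $R_q(u,v)=(J_qv,J_qu)$, together with the fact $J_*\mu_q^\pm=\mu_q^\mp$.

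For (\ref{zeta_expression}) we take $\mu_q=\mu_q^{\epsilon_1}$ in (\ref{zeta_in_out}). Then $\zeta_q$ is supported on $W_q^{\epsilon_1}\times T_qM$. Restricting to $W^{\epsilon_1}\times W^{\epsilon_2}$ as in (\ref{definition_zeta_epsilon1_epsilon2}) means taking the part of $S_q(\cdot|v^{\text{i}})$ that lies in $W_q^{\epsilon_2}$. By the block decomposition, for $v^{\text{i}}\in W_q^{\epsilon_1}$ this part is exactly $S_q^{\epsilon_1\epsilon_2}(\cdot|v^{\text{i}})$. We verify this by integrating test functions of the form $f(v^{\text{i}})g(v^{\text{o}})\mathbbm{1}_{W^{\epsilon_1}}(v^{\text{i}})\mathbbm{1}_{W^{\epsilon_2}}(v^{\text{o}})$. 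Consequently $\zeta^{\epsilon_1,-\epsilon_1}$ is nonzero only when $\epsilon_1$ matches the chosen side, and the same holds for $\zeta^{\epsilon_1\epsilon_1}$.

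For the equivalence, we first record how $R$ moves supports. If $(u,v)\in W^{\epsilon_1}\times W^{\epsilon_2}$, then $R(u,v)=(-v,-u)\in W^{-\epsilon_2}\times W^{-\epsilon_1}$. It follows that
\[
R_*\zeta^{\epsilon_1\epsilon_2}=(R_*\zeta)^{-\epsilon_2,-\epsilon_1}.
\]
To see this, note that for a measurable $A$ we have $R^{-1}(A)\cap(W^{\epsilon_1}\times W^{\epsilon_2})=R^{-1}\bigl(A\cap(W^{-\epsilon_2}\times W^{-\epsilon_1})\bigr)$, because $R$ is an involution.

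\emph{Forward direction.} Assume $R_*\zeta=\zeta$. Taking the four pairs $(\epsilon_1,\epsilon_2)$ gives $R_*\zeta^{-+}=\zeta^{-+}$, $R_*\zeta^{+-}=\zeta^{+-}$, $R_*\zeta^{--}=\zeta^{++}$ and $R_*\zeta^{++}=\zeta^{--}$, which is (\ref{zeta_epsilon1_epsilon2}).

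\emph{Reverse direction.} Assume (\ref{zeta_epsilon1_epsilon2}). The four rectangles $W^{\epsilon_1}\times W^{\epsilon_2}$ cover $T_qM\times T_qM$ up to the null set where $\langle v,\mathbf{n}_q\rangle=0$, so $\zeta=\sum\zeta^{\epsilon_1\epsilon_2}$. Summing the four identities then gives $R_*\zeta=\zeta$.

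The only point needing care is this null set: it must be negligible for $\zeta$. It is negligible for $\mu_q$ because $\mu_q$ has a Lebesgue density. For $v^{\text{o}}$ we use the paper's convention that $S_x$ charges only $W_q^\pm$. Granting this, the argument is pure measure bookkeeping, and I expect no real obstacle.
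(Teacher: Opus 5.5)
Your proposal is correct and follows essentially the same route as the paper. The paper's proof consists of the support identity $\mathbbm{1}_{W^{\epsilon_1}\times W^{\epsilon_2}}\,(f\circ R)=(\mathbbm{1}_{W^{-\epsilon_2}\times W^{-\epsilon_1}}f)\circ R$, which is your $R_*\zeta^{\epsilon_1\epsilon_2}=(R_*\zeta)^{-\epsilon_2,-\epsilon_1}$, together with the remark that the block formula for $\zeta^{\epsilon_1\epsilon_2}$ follows directly from the definitions. You go slightly further by writing out the converse via $\zeta=\sum\zeta^{\epsilon_1\epsilon_2}$ and by noting that this requires $\zeta$ not to charge the sets where $\langle v,\mathbf{n}_q\rangle=0$, both points the paper leaves implicit.
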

\begin{proof}
Expression (\ref{zeta_expression}) follows   directly from definitions. For the second part, we only note that 
\begin{equation*}
\zeta\left(\mathbbm{1}_{W^{\epsilon_1}\times W^{\epsilon_2}}(f\circ R)\right)=
\zeta\left(\left(\mathbbm{1}_{W^{-\epsilon_2}\times W^{-\epsilon_1}}f\right)\circ R\right)=
\zeta\left(\mathbbm{1}_{W^{-\epsilon_2}\times W^{-\epsilon_1}}f\right)
\end{equation*}
where $f$ is any test function on $\mathcal{D}^{\text{o}}$ and   the last equality holds under the
assumption of reciprocity.
The equivalence of the reciprocity condition and (\ref{zeta_epsilon1_epsilon2}) is obtained from this remark and
 (\ref{definition_zeta_epsilon1_epsilon2}).
\end{proof}

In what follows, we define the probability  measure $\mu$ on the bundle $\mathcal{N}^{\text{o}}$ over $\mathcal{W}$ such that for each $x=(q,v)$,
\begin{equation}\label{max_minus_o}
d\mu(x)=d\mu_q(v)\, d\bar{A}(q).
\end{equation}
Here $\bar{A}$ is the normalized Riemannian volume measure on $\mathcal{W}$ (the area measure in dimension $3$).
The velocity  vector $v$ may lie on either side of a wall $W$. On the $\pm$-side, we set $d\mu_q(v)=d\mu^\pm_q(v)$, 
 the surface Maxwellian with temperature parameter $T(q)$.  We may add superscripts $\mu^{\text{i}}$ and $\mu^{\text{o}}$
 when referring to incoming or outgoing velocities. Note that, on a boundary wall $W$, the measure $\mu^{\text{i}}_q$ has support in $W_q^{-}$
 and  $\mu^{\text{o}}$ has support in $W_q^+$. On separating walls, incoming or outgoing vectors may lie on either side of $W$. 

\subsection{Entropy production, energy flux and temperature}
The next proposition is the key ingredient connecting the general expression for entropy production in the stationary random flight system
defined by a general random scattering operator $S$ and the more familiar thermodynamic form, given  in terms of
energy flux over temperature, when the walls are assigned a temperature function through the assumption of reciprocity
with respect to Maxwellians. 
\begin{proposition}\label{proposition_delta_delta_tilde}
With the definitions given above, and under the assumption of reciprocity for the random scattering operator,  we have
\begin{equation*}
\frac{d\eta}{d\tilde{\eta}}(x,y)=\frac{d\nu}{d\left(\mathcal{J}_*\mu^\text{\em o}\right)}(x)\left[\frac{d\nu}{d\left(\mathcal{J}_*\mu^\text{\em o}\right)}(\mathcal{J}y)\right]^{-1}
\end{equation*}
for $(x,y)\in \mathcal{D}^{\text{\em o}}$. 
\end{proposition}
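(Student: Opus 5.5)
We compare the forward and backward disintegrations of $\eta$ and $\tilde\eta$ with a common reference measure built from the Maxwellian pair measures $\zeta_q$. By definition $d\eta(x,y)=d\nu(x)\,dS_{\mathcal{T}(x)}(y)$. By Proposition \ref{explicit_eta_tilde}, $d\tilde\eta(x,y)=d\tilde\nu(y)\,d\widetilde S_y(x)$ with $\widetilde S_y=\mathcal{J}_*S_{J(y)}$.

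The strategy is to write both $\eta$ and $\tilde\eta$ as densities against one reference measure $\xi$ on $\mathcal{D}^{\text{o}}$, and then let reciprocity identify the two scattering factors.

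\textbf{Reference measure.} Define $\xi$ by
\begin{equation*}
d\xi(x,y):=d(\mathcal{J}_*\mu^{\text{o}})(x)\,dS_{\mathcal{T}(x)}(y).
\end{equation*}
Since $\mathcal{T}(x)$ lies on the wall at $q=\pi(y)$, and $\mathcal{T}=J\mathcal{J}$, the pushforward of $\mathcal{J}_*\mu^{\text{o}}$ under $\mathcal{T}$ is $J_*\mu^{\text{o}}$. On each fiber this is the surface Maxwellian on the incoming side, with base weighted by $\bar A$. Thus, in the variables $(v^{\text{i}},v^{\text{o}})=(\mathcal{T}(x),y)$, $\xi$ is the fiberwise measure $\zeta_q$ of (\ref{zeta_in_out}) integrated against $d\bar A(q)$. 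We use the block decomposition (\ref{zeta_expression}) to handle the choice of side, and the convex combination $\mu_q=a\mu^-_q+(1-a)\mu^+_q$ so that both sides of separating walls are covered.

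Because $\eta$ and $\xi$ share the kernel $S_{\mathcal{T}(x)}$, we get immediately
\begin{equation*}
\frac{d\eta}{d\xi}(x,y)=\frac{d\nu}{d(\mathcal{J}_*\mu^{\text{o}})}(x).
\end{equation*}

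\textbf{Reciprocity.} Next we show that $\mathcal{R}_*\xi=\xi$. The map $\mathcal{R}$ sends $(x,y)$ to $(\mathcal{J}y,\mathcal{J}x)$. In wall variables $(\mathcal{T}(x),y)$ this becomes $(\mathcal{T}\mathcal{J}y,\ \mathcal{J}x)=(Jy,\ J\mathcal{T}(x))$, which is exactly $R_q(v^{\text{i}},v^{\text{o}})$. The invariance $R_*\zeta_q=\zeta_q$ in Definition \ref{definition_reciprocity} then gives $\mathcal{R}_*\xi=\xi$ after integrating over $q$. Some care is needed so that the same base measure $\bar A$ on $\mathcal{W}$ is used on both sides; this is where (\ref{zeta_epsilon1_epsilon2}) enters, since it says the $--$ and $++$ blocks are exchanged.

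\textbf{Conclusion.} From $\tilde\eta=\mathcal{R}_*\eta$ and $\mathcal{R}_*\xi=\xi$, together with $\mathcal{R}$ being an involution, we obtain
\begin{equation*}
\frac{d\tilde\eta}{d\xi}(x,y)=\frac{d\eta}{d\xi}(\mathcal{R}(x,y))=\frac{d\nu}{d(\mathcal{J}_*\mu^{\text{o}})}(\mathcal{J}y).
\end{equation*}
Dividing $d\eta/d\xi$ by $d\tilde\eta/d\xi$ yields the claimed formula. All measures involved are equivalent, by the standing absolute continuity assumptions.

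\textbf{Main obstacle.} The delicate step is the middle one: checking that $\xi$ is exactly the $\bar A$-integral of $\zeta_q$ and that $\mathcal{R}$ corresponds to $R_q$. This requires that the free flight $\mathcal{T}$ carries the incoming fiber measure correctly. In practice it reduces to checking how $\mathcal{J}_*\mu^{\text{o}}$ is defined (as a measure on outgoing states whose images under $\mathcal{T}$ are Maxwellian-distributed incoming states), and to sorting out the side conventions on separating walls. I would first verify the identity on product test functions $f(x)g(y)$, writing everything at the wall point $q$, and only then pass to general test functions.
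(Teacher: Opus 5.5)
Your proposal is correct and is essentially the paper's proof. Your reference measure $\xi$ is exactly the paper's $F_*\Gamma$, with $F=\mathcal{J}\times I$ and $d\Gamma(z,y)=d\mu^{\text{o}}(z)\,dS_{Jz}(y)$, and your invariance $\mathcal{R}_*\xi=\xi$ is equivalent to the paper's symmetry $\Pi_*\Gamma=\Gamma$ because $\mathcal{R}\circ F=F\circ\Pi$; the only cosmetic difference is that you get the density $\Lambda(\mathcal{J}y)$ of $\tilde\eta$ by transporting the density of $\eta$ through $\mathcal{R}$, whereas the paper reads it off from the backward disintegration of Proposition~\ref{explicit_eta_tilde}.
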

\begin{proof}
This is, essentially, Proposition 5 from \cite{ChumFer21}, although here we expand on a few points which  that paper should have included for clarity. For simplicity, we drop the superscript $\text{o}$ from the surface Maxwellian $\mu^\text{o}$ (Equation (\ref{max_minus_o})) and write
$\tilde{\mu}=\mathcal{J}_*\mu$.  Let $\Lambda:=\frac{d\nu}{d\tilde{\mu}}$.  We write the measure with element $d\nu=\Lambda\, d\tilde{\mu}$
as $\nu=\tilde{\mu}\circ \mathcal{M}_\Lambda$, where $\mathcal{M}_\Lambda$ is multiplication by $\Lambda$ and $\tilde{\mu}$ is the linear functional defined by integration. 
Because $\mathcal{J}_*\tilde{\mu}=\mathcal{J}_*\mathcal{J}_*\mu=\mu$,
the time-reversed stationary measure is
\begin{equation}%\label{equation_dtilde_nu_dmu}
\tilde{\nu}= \mathcal{J}_*\nu=\nu\circ \mathcal{J}=\tilde{\mu}\circ \mathcal{M}_\Lambda\circ \mathcal{J}= \tilde{\mu}\circ\mathcal{J}\circ \mathcal{M}_{\Lambda\circ\mathcal{J}}= (\mathcal{J}_*\tilde{\mu})\circ\mathcal{M}_{\Lambda\circ\mathcal{J}}=\mu\circ \mathcal{M}_{\Lambda\circ \mathcal{J}}.
\end{equation}
Thus
\begin{equation}\label{equation_dtilde_nu_dmu}
d\tilde{\nu} = \Lambda\circ \mathcal{J}\, d\mu.
\end{equation}
 Recall our typical notation $x=(q,v)$ for elements of $\mathcal{N}^\text{o}$. For $(x,y)\in \mathcal{D}^{\text{o}}$, so that $x, y\in \mathcal{N}^\text{o}$ with $\pi(y)=\pi(\mathcal{T}(x))$,
 \begin{equation*}
 B_x(dy)=S_{\mathcal{T}(x)}(dy).
 \end{equation*}
 For any test function $f$ on $\mathcal{D}^{\text{o}}$, 
 \begin{equation*}
 \begin{aligned}
 \eta(f)&=\int_{\mathcal{D}^\text{o}} f(x,y) \Lambda(x)\, d\tilde{\mu}(x)\, dS_{\mathcal{T}(x)}(y)\\
 &=\int_{\hat{\mathcal{D}}^\text{o}} f(\mathcal{J}(z),y) \Lambda(\mathcal{J}(z))\, d{\mu}(z)\, dS_{\mathcal{T}(\mathcal{J}(z))}(y),
 \end{aligned}
 \end{equation*}
 where $\hat{\mathcal{D}}^\text{o}$ is the image of $\mathcal{D}^\text{o}$ under $\mathcal{J}\times I$. Note that $\mathcal{T}\circ\mathcal{J}=J$.
 Hence
 \begin{equation*}
 \eta(f)=\int_{\hat{\mathcal{D}}^\text{o}} f(\mathcal{J}(z),y)\Lambda(\mathcal{J}(z))\, d\mu(z)\, dS_{Jz}(y).
 \end{equation*}
 Similarly,
 \begin{equation*}
 \begin{aligned}
 \tilde{\eta}(f)&=\int_{\mathcal{D}^\text{o}} f(x,y)\, d\tilde{\nu}(y)\, d\tilde{S}_y(x)\\
 &=\int_{\mathcal{D}^\text{o}} f(x,y)\, \Lambda(\mathcal{J}(y))\, d{\mu}(y)\, d(\mathcal{J}_*S)_{Jy}(x)\\
 &=\int_{\hat{\mathcal{D}}^\text{o}} f(\mathcal{J}(z),y)\Lambda(\mathcal{J}(y))\, d\mu(y)\, dS_{Jy}(z).
 \end{aligned}
 \end{equation*}
 Let us define the measure $\Gamma$ on pairs $(z,y)\in \hat{\mathcal{D}}^\text{o}$ by 
 \begin{equation*}
 d\Gamma(z,y)=d\mu^\text{o}(z)\, dS_{Jz}(y). 
 \end{equation*}
 (Here it is convenient, for clarity, to reintroduce the superscript $\text{o}$.) Note that the reciprocity condition,
 expressed in terms of $\mu^{\text{i}}$ and $\mu^{\text{o}}$,
implies that $\Gamma$ is symmetric: on test functions $f$,
 \begin{equation*}
\int_{\hat{\mathcal{D}}^{\text{o}}} f(z,y)\, d\Gamma(z,y) = \int_{\hat{\mathcal{D}}^{\text{o}}} f(y,z)\, d\Gamma(z,y).
 \end{equation*}
 This follows from $\mu^\text{o} =J_*\mu^{\text{i}}$ and $R_*\zeta=\zeta$ (as in Definition \ref{definition_reciprocity}). 
 More explicitly, note that $\Gamma=(J\times I)_*\zeta$ and, introducing  $\Pi(x,y)=(y,x)$ ($x, y$ at the same base-point), we have
 \begin{equation*}
 \Pi\circ (J\times I)=(J\times I)\circ R,
 \end{equation*}
from which we obtain
 \begin{equation*}
 \Pi_*\Gamma = \Pi_*(J\times I)_*\zeta = (J\times I)_*R_*\zeta = (J\times I)_*\zeta=\Gamma.
 \end{equation*}
 
 Let $F=\mathcal{J}\times I:\mathcal{D}^\text{o}\rightarrow \hat{\mathcal{D}}^\text{o}$.
 We can now compare $\eta$ and $\tilde{\eta}$ using the symmetry of $\Gamma$:
 \begin{equation*}
 \begin{aligned}
 \tilde{\eta}(f)&=\int_{\hat{\mathcal{D}}^\text{o}} f(\mathcal{J}(z),y)\Lambda(\mathcal{J}(y))\, d\Gamma(z,y)=\int_{\mathcal{D}^\text{o}}f(x,y)\Lambda(\mathcal{J}(y))\, d(F_*\Gamma)(x,y)\\
 {\eta}(f)&=\int_{\hat{\mathcal{D}}^\text{o}} f(\mathcal{J}(z),y)\Lambda(\mathcal{J}(z))\, d\Gamma(z,y)=
 \int_{\mathcal{D}^\text{o}}f(x,y)\Lambda(x)\, d(F_*\Gamma)(x,y).
 \end{aligned}
 \end{equation*}
 These identities express $\eta$ and $\tilde{\eta}$ in terms of the same base measure, $F_*\Gamma$. It follows that their Radon-Nikodym derivative is the ratio of densities:
$\frac{d\eta}{d\tilde{\eta}}=\Lambda(x)/\Lambda(\mathcal{J}(y))$, as claimed.
  \end{proof}

Let us now apply the general expression for entropy production (Equation \eqref{theorem_ep}) to the present setting. The following corollary 
should be compared with Proposition 6 of \cite{ChumFer21}, where a factor $1/2$ appears by mistake.
\begin{corollary}
Define  $\Lambda:=\frac{d\nu}{d\tilde{\mu}}$ as in the proof of Proposition \ref{proposition_delta_delta_tilde}.
Then 
\begin{equation*}
e_p=\nu\left[\log\left(\rho_\nu\frac{\Lambda}{\Lambda\circ \mathcal{J}}\right)\right]=
\nu\left[\log\left(\frac{d\mu}{d\tilde{\mu}}\right)\right],
\end{equation*}
where, we recall, $\rho_\nu=\frac{d\tilde{\nu}}{d\nu}$.
\end{corollary}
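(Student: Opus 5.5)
The plan is to substitute Proposition \ref{proposition_delta_delta_tilde} into the second expression for $e_p$ in Theorem \ref{theorem_main_ep}, collapse the two-point integral to a one-point integral, and then simplify the resulting Radon--Nikodym derivatives. Theorem \ref{theorem_main_ep} gives
\[
e_p=\int_{\mathcal{D}^{\text{o}}}\log\left(\rho_\nu(x)\frac{d\eta}{d\tilde\eta}(x,y)\right)d\eta(x,y),
\]
and Proposition \ref{proposition_delta_delta_tilde} says $\frac{d\eta}{d\tilde\eta}(x,y)=\Lambda(x)/\Lambda(\mathcal{J}y)$. The integrand therefore splits as $\log\rho_\nu(x)+\log\Lambda(x)-\log\Lambda(\mathcal{J}y)$, with each term depending on only one coordinate.

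By Proposition \ref{proposition_collapsing}, the $\eta$-integral of a function of $x$ alone, or of $y$ alone, equals its $\nu$-integral, since both marginals of $\eta$ are $\nu$ by stationarity. This gives
\[
e_p=\nu(\log\rho_\nu)+\nu(\log\Lambda)-\nu(\log\Lambda\circ\mathcal{J}) =\nu\left[\log\left(\rho_\nu\frac{\Lambda}{\Lambda\circ\mathcal{J}}\right)\right],
\]
which is the first equality.

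For the second equality I will show pointwise that $\rho_\nu\Lambda/(\Lambda\circ\mathcal{J})=d\mu/d\tilde\mu$. From the proof of Proposition \ref{proposition_delta_delta_tilde} we have $d\tilde\nu=(\Lambda\circ\mathcal{J})\,d\mu$ and $d\nu=\Lambda\,d\tilde\mu$. Hence
\[
\rho_\nu=\frac{d\tilde\nu}{d\nu}=\frac{(\Lambda\circ\mathcal{J})\,d\mu}{\Lambda\,d\tilde\mu}=\frac{\Lambda\circ\mathcal{J}}{\Lambda}\,\frac{d\mu}{d\tilde\mu},
\]
and therefore $\rho_\nu\Lambda/(\Lambda\circ\mathcal{J})=d\mu/d\tilde\mu$ $\nu$-a.e. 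Integrating the logarithm against $\nu$ gives the second expression.

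The only delicate point is justifying these manipulations. We need $\mu$ and $\tilde\mu$ to be mutually absolutely continuous on the support of $\nu$, and $\Lambda$ to be positive $\nu$-a.e., so that the chain rule for Radon--Nikodym derivatives applies. We also need each logarithm to be separately $\nu$-integrable, so that splitting the integral is legitimate. I would handle this under the standing equivalence assumptions of Theorem \ref{theorem_main_ep}. For integrability, the key observation is that $\nu(\log\Lambda)$ and $\nu(\log\Lambda\circ\mathcal{J})$ appear with opposite signs. Alternatively, one can avoid splitting altogether: keep the combined integrand, which depends on $x$ and $\mathcal{J}y$, and use that $\mathcal{J}y$ under $\eta$ has law $\tilde\nu$. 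Once these points are settled, the rest is bookkeeping.
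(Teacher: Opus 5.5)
Your proposal is correct and is essentially the paper's proof. You substitute $d\eta/d\tilde\eta=\Lambda(x)/\Lambda(\mathcal{J}y)$ into Theorem \ref{theorem_main_ep}, collapse the two-point integral using that both marginals of $\eta$ equal $\nu$ (the paper phrases this as $B1=1$ and $\nu B=\nu$), and get the second equality by the chain rule from $d\nu=\Lambda\,d\tilde\mu$ and $d\tilde\nu=(\Lambda\circ\mathcal{J})\,d\mu$. The paper does not address integrability, and your closing remarks do not settle it either: opposite signs do not rule out $\infty-\infty$, and knowing the law of $\mathcal{J}y$ alone does not let you integrate a function of $(x,\mathcal{J}y)$ without splitting. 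So that point rests on the standing assumptions, just as in the paper.
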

\begin{proof}
We apply Theorem \ref{theorem_main_ep}, noting that
\begin{equation*}
\begin{aligned}
\int_{\mathcal{D}^\text{o}} \log\left(\frac{d\eta}{d\tilde{\eta}}\right)\, d\eta&=\int_{\mathcal{D}^\text{o}} \log\left(\frac{\Lambda(x)}{\Lambda(\mathcal{J}(y))}\right)\, dB_x(y)\, d\nu(x)\\
&=\nu\left(\log\left(\Lambda\right) B  1\right) - \nu\left(B\log\left(\Lambda\circ\mathcal{J}\right)\right)\\
&=\nu\left(\log\Lambda\right) - \nu\left(\log\Lambda\circ\mathcal{J}\right),
\end{aligned}
\end{equation*}
in which we used  $B 1=1$ and $\nu B=\nu$. This gives the first equality.
For the second, 
\begin{equation*}
\rho_\nu\frac{\Lambda}{\Lambda\circ \mathcal{J}}=\frac{d\tilde{\nu}}{d\nu}\frac{\frac{d\nu}{d\tilde{\mu}}}{\frac{d\tilde{\nu}}{d\mu}}=\frac{d\mu}{d\tilde{\mu}},
\end{equation*}
in which  Equation (\ref{equation_dtilde_nu_dmu}) has been used.
\end{proof}
 
The above corollary is what is needed to obtain the classical thermodynamical interpretation of the  entropy production rate.
Before proceeding, a few remarks concerning  surface Maxwellians are needed. Recall that the measure $\mu_q$ has the form
\begin{equation*}
d\mu_q(v)= C(q) \exp\left\{-\beta(q)\frac{m|v|^2_q}{2}\right\} |\langle v,\mathbf{n}_q \rangle|\, dm(q,v),
\end{equation*}
where the reference measure  $dm(q,v)=dV_q(v)\, dA(q)$ is the Riemannian measure on the half-space bundle over walls; $dA(q)$ is the hypersurface volume of walls and $dV_q(v)$ is the Lebesgue measure on the half-space $W_q^\pm$. The measure
\begin{equation*}
d\mu_{\text{bill}}(q,v)=|\langle v,\mathbf{n}_q\rangle|_q\, dV_q(v)\, dA(q)
\end{equation*}
is the invariant billiard measure (surface Liouville measure). It is invariant under the return-to-wall map $\mathcal{T}$ and under the
inversion map $\mathcal{J}$. Since  the reciprocity condition is defined pointwise (for each $q\in \mathcal{W}$), we could have used
 more generally, in place of $\mu$ in our previous calculations, the Gibbs measure
\begin{equation*}
 d\mu_{\text{Gibbs}}(q,v)=C(q)\exp\left\{-\beta(q) E(q,v)\right\}\, d\mu_{\text{bill}}(q,v),
 \end{equation*}
 where 
 \begin{equation*}
 E(q,v):=\frac{m|v|^2_q}{2}+\Phi(q)
 \end{equation*}
 is the total energy of the system, which now includes a potential function $\Phi(q)$. 
Defining
 \begin{equation*}
 \tilde{\mu}_{\text{Gibbs}}:=\mathcal{J}_*\mu_{\text{Gibbs}},
 \end{equation*}
 we have
 \begin{equation*}
 \frac{d\tilde{\mu}_{\text{Gibbs}}}{d\mu_\text{bill}}= \frac{d(\mathcal{J}_*{\mu}_{\text{Gibbs}})}{d(\mathcal{J}_*\mu_\text{bill})}=
  \frac{d{\mu}_{\text{Gibbs}}}{d\mu_\text{bill}}\circ \mathcal{J}.
 \end{equation*}
 Furthermore, 
 \begin{equation*}
 \log\frac{d\mu_{\text{Gibbs}}}{d\mu_{\text{bill}}}(x)= -\beta(q)E(x), 
 \end{equation*}
 for $x=(q,v)$.
Also note that $E$ is invariant under the sign-change map $J$. Thus, writing $\nu^\text{i}:=\mathcal{T}_*\nu$ and
$\nu^\text{o}:=\nu$,
\begin{equation*}
\begin{aligned}
e_p&=\nu(-\beta E +(\beta E)\circ \mathcal{T})=\nu^\text{i}(\beta E)- \nu^\text{o}(\beta E)\\
& = \int_{\mathcal{N}^\text{o}} \beta(q) E(q,v)\left[d\nu^{\text{i}}(q,v)-d\nu^{\text{o}}(q,v)\right].
\end{aligned}
\end{equation*}
 We register this result in the following theorem:
 \begin{theorem}\label{second_law}
 With the above definitions, the entropy production rate for the stationary  random flight with energy function  $E$ and stationary measure $\nu$
 is
 \begin{equation}\label{main_entropy_equation}
 e_p = \int_{\mathcal{N}^\text{\em o}} \frac{E(q,v)}{\kappa T(q)}\left[d\nu^{\text{\em i}}(q,v)-d\nu^{\text{\em o}}(q,v)\right]\geq 0. 
 \end{equation}
 Interpreting $\nu^{\text{\em i}}$ and $\nu^{\text{\em o}}$ as probability fluxes, $e_p$ can be regarded as 
 the expected value of the rate of energy being transferred to the walls divided by the temperature at the point. 
 \end{theorem}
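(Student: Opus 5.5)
The plan is to assemble the theorem from the corollary just proved, which gives $e_p=\nu\left[\log\frac{d\mu}{d\tilde{\mu}}\right]$. Reciprocity at each wall point involves only the local Maxwellian, so the same corollary holds with $\mu$ replaced by $\mu_{\text{Gibbs}}$. The reason is that $\mu_{\text{Gibbs}}$ differs from $\mu$ on each fiber $T_qM$ only by a $q$-dependent constant factor $e^{-\beta(q)\Phi(q)}$ and a change of normalization, and neither affects the pointwise reciprocity identity $R_*\zeta_q=\zeta_q$. I would therefore first check that Proposition~\ref{proposition_delta_delta_tilde} and its corollary go through verbatim with $\Lambda:=d\nu/d\tilde{\mu}_{\text{Gibbs}}$. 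The only properties used there are the symmetry of $\Gamma$, which is fiberwise, and $\mathcal{T}\circ\mathcal{J}=J$.

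Next I compute $\log\frac{d\mu_{\text{Gibbs}}}{d\tilde{\mu}_{\text{Gibbs}}}$ by passing through the invariant billiard measure $\mu_{\text{bill}}$. Since $\mathcal{J}_*\mu_{\text{bill}}=\mu_{\text{bill}}$, the chain rule gives
\begin{equation*}
\log\frac{d\mu_{\text{Gibbs}}}{d\tilde{\mu}_{\text{Gibbs}}}=\log\frac{d\mu_{\text{Gibbs}}}{d\mu_{\text{bill}}}-\log\frac{d\mu_{\text{Gibbs}}}{d\mu_{\text{bill}}}\circ\mathcal{J}.
\end{equation*}
We have $\log\frac{d\mu_{\text{Gibbs}}}{d\mu_{\text{bill}}}(q,v)=\log C(q)-\beta(q)E(q,v)$. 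When $\mathcal{J}$ is applied to $x=(q,v)$, the base point moves to $\pi(\mathcal{T}(x))$, so the normalizing constants $C$ do not obviously cancel. This is the one point requiring care, and I expect it to be the main obstacle. I would handle it by observing that the $\log C$ contribution integrates to $\nu(\log C\circ\pi)-\nu(\log C\circ\pi\circ\mathcal{T})$. Because $\pi\circ\mathcal{J}=\pi\circ\mathcal{T}$, and because by stationarity the base-point distribution of $\mathcal{T}_*\nu$ equals that of $\nu B=\nu$ (scattering does not move the base point), this difference vanishes. Alternatively, one can absorb $C(q)$ into the choice of reference area measure, which is allowed since $e_p$ does not depend on that choice.

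Using $E\circ J=E$, so that $E\circ\mathcal{J}=E\circ\mathcal{T}$ (and likewise for $\beta\circ\pi$), I obtain $e_p=\nu\left((\beta E)\circ\mathcal{T}-\beta E\right)=\nu^{\text{i}}(\beta E)-\nu^{\text{o}}(\beta E)$, where $\nu^{\text{i}}=\mathcal{T}_*\nu$. Substituting $\beta=1/\kappa T$ gives the displayed formula. Nonnegativity follows from the earlier corollary expressing $e_p$ as the limit of normalized relative entropies, via Gibbs' inequality. The physical interpretation follows because $E$ is constant along free flight, so $E\circ\mathcal{T}-E$, evaluated at the collision point, is the energy handed to the wall.
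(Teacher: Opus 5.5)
Your argument follows the paper's route: replace $\mu$ by the fiberwise-proportional Gibbs measure (legitimate because reciprocity and the symmetry of $\Gamma$ are pointwise), pass through the $\mathcal{J}$-invariant $\mu_{\text{bill}}$, use $E\circ J=E$, and get nonnegativity from Gibbs' inequality. Your cancellation of the $\log C(q)$ term via $\pi_*(\mathcal{T}_*\nu)=\pi_*(\nu B)=\pi_*\nu$ actually supplies a step the paper skips, since it writes $\log\frac{d\mu_{\text{Gibbs}}}{d\mu_{\text{bill}}}=-\beta E$ and drops $C(q)$ without comment.

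There is one slip, in your last sentence. By energy conservation $E\circ\mathcal{T}-E\equiv 0$, so all of $(\beta E)\circ\mathcal{T}-\beta E$ comes from the change in $\beta$, and this difference cannot be the energy handed to the wall. That reading must instead come from comparing $\nu^{\text{i}}$ with $\nu^{\text{o}}=\nu^{\text{i}}S$ at the same collision point, as in Corollary~\ref{corollary_of_main_theorem}.
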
 
 
 An alternative to the expression (\ref{main_entropy_equation}) for $e_p$ will be useful. 
 
 \begin{corollary}\label{corollary_of_main_theorem}
On the bundle $\mathcal{D}^{\text{\em coll}}=\{(x,y)\in \mathcal{N}^\text{\em i}\times \mathcal{N}^\text{\em o}: \pi(x)=\pi(y)\}$ over $\mathcal{W}$, define
the measure $$ d\zeta_{\text{stat}}(x,y)=d\nu^{\text{\em i}}(x) S_x(dy).$$
This is the joint distribution  of incoming-outgoing pairs of states at wall collisions in the stationary process. Define the {\em entropy change at a collision} as the function $\sigma:\mathcal{D}^{\text{\em coll}}\rightarrow \mathbb{R}$ such that
$$ \sigma(x,y)=\frac{E(x)-E(y)}{\kappa T(q)},$$
where $q=\pi(x)=\pi(y)\in \mathcal{W}$. The numerator of $\sigma(x,y)$ is the energy (positive or negative) transferred to the wall
for the incoming-outgoing pair $(x,y)$.  Then
\begin{equation*}
e_p = \int_{\mathcal{D}^\text{\em coll}} \sigma(x,y)\, d\zeta_\text{\em stat}(x,y).
\end{equation*}
Thus the entropy production rate is the expected value of the entropy change function at wall collisions in the stationary process.
 \end{corollary}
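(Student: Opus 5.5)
The plan is to start from Theorem \ref{second_law} and rewrite its integral against the single measure $\zeta_{\text{stat}}$. Two facts do the work: both $\nu^{\text{i}}$ and $\nu^{\text{o}}$ are marginals of $\zeta_{\text{stat}}$, and $\sigma$ is linear in $E(x)$ and $E(y)$ with the same temperature factor.

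First I would identify the marginals of $\zeta_{\text{stat}}$. The first marginal is $\nu^{\text{i}}=\mathcal{T}_*\nu$, because $S_x$ is a probability measure. For the second marginal, take a test function $g$ on $\mathcal{N}^{\text{o}}$:
\begin{equation*}
\int_{\mathcal{D}^{\text{coll}}} g(y)\, d\zeta_{\text{stat}}(x,y)=\int_{\mathcal{N}^{\text{i}}} S(g)(x)\, d(\mathcal{T}_*\nu)(x)=\int_{\mathcal{N}^{\text{o}}} S(g)(\mathcal{T}(x))\, d\nu(x)=\nu(Bg)=\nu(g).
\end{equation*}
The last equality is stationarity, $\nu B=\nu$. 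So the second marginal is $\nu^{\text{o}}=\nu$.

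Second, since $S_x$ is supported in $T_{\pi(x)}M$, we have $\pi(x)=\pi(y)$ on the support of $\zeta_{\text{stat}}$. Hence $\beta(q)=1/\kappa T(q)$ can be written either as $\beta(\pi(x))$ or as $\beta(\pi(y))$. This splits the integral into two terms:
\begin{equation*}
\int \sigma\, d\zeta_{\text{stat}}=\int \beta(\pi(x))E(x)\, d\zeta_{\text{stat}}-\int \beta(\pi(y))E(y)\, d\zeta_{\text{stat}}=\nu^{\text{i}}(\beta E)-\nu^{\text{o}}(\beta E).
\end{equation*}
By Theorem \ref{second_law}, and the computation just before it, the right-hand side equals $e_p$. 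Here $E(q,v)$ is evaluated on the incoming and outgoing vectors at the same base point, so the potential term $\Phi(q)$ cancels pointwise in $\sigma$ anyway.

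The only delicate point is integrability. The split into two integrals requires $\beta E\in L^1(\nu^{\text{i}})$ and $\beta E\in L^1(\nu^{\text{o}})$; otherwise we would be subtracting infinities. I would assume this, as is implicit in Theorem \ref{second_law}. This is the main, though mild, obstacle: one must ensure the finite-energy-moment hypothesis under which \eqref{main_entropy_equation} was derived also covers the splitting. Once it does, the identity follows directly.
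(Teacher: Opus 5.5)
Your proposal is correct and is essentially the paper's argument. The paper also uses stationarity $\nu=\nu B$ together with $\nu^{\text{i}}=\mathcal{T}_*\nu$ to identify the outgoing marginal of $\zeta_{\text{stat}}$ as $\nu^{\text{o}}$, and then merges the two terms of \eqref{main_entropy_equation} into a single integral of $\sigma$; you run the same computation in the opposite direction, and your integrability caveat makes explicit an assumption the paper leaves implicit.
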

\begin{proof}
Stationarity, $\nu=\nu B$, and the definition $B=\mathcal{T}\circ S$ imply that for any test function $f$, 
\begin{equation*}
\int_{\mathcal{N}^\text{o}} f(x)\, d\nu^\text{o}(x) = \int_{\mathcal{N}^{\text{o}}}(Bf)(x)\, d\nu^{\text{o}}(x)= \int_{\mathcal{N}^{\text{o}}}\left[\int_{T_{\pi(\mathcal{T}(x))}M} f(y) S_{\mathcal{T}(x)}(dy)\right] d\nu^{\text{o}}(x).
\end{equation*}
Because $\nu^{\text{i}}=\mathcal{T}_*\nu^{\text{o}}$, this gives
\begin{equation}\label{coro_equation_a}
\int_{\mathcal{N}^\text{o}} f(x)\, d\nu^\text{o}(x) =\int_{\mathcal{N}^{\text{i}}}\left[\int_{\mathcal{N}_{\pi(z)}^{\text{o}}} f(y) S_{z}(dy)\right] d\nu^{\text{i}}(z).
\end{equation}
Therefore
\begin{equation*}
d\nu^\text{o}(y)= \int_{\mathcal{N}^{\text{i}}} d\nu^{\text{i}}(z) S_z(dy)=\int_{\mathcal{N}^{\text{i}}}d\zeta_{\text{stat}}(z,y).
\end{equation*}
Thus the outgoing stationary distribution is the incoming stationary distribution acted upon by the scattering kernel.  We can now rewrite
the difference of the two integrals in (\ref{main_entropy_equation}) as follows.  The second term is, by (\ref{coro_equation_a}),
\begin{equation*}
\int_{\mathcal{N}^\text{o}}\frac{E(y)}{\kappa T(\pi(y))} \, d\nu^{\text{o}}(y)=\int_{\mathcal{N}^{\text{i}}}\left[\int_{\mathcal{N}_{\pi(z)}^{\text{o}}} \frac{E(y)}{\kappa T(\pi(z))} S_z(dy)\right] d\nu^{\text{i}}(z).
\end{equation*}
Subtracting this integral from the first term gives
\begin{equation*}
\begin{aligned}
e_p&=\int_{\mathcal{N}^{\text{i}}}\frac1{\kappa T(\pi(z))}\left[E(z)-\int_{\mathcal{N}_{\pi(z)}^{\text{o}}}  E(y) S_z(dy)\right] d\nu^{\text{i}}(z)\\
&=
\int_{\mathcal{N}^{\text{i}}} \int_{\mathcal{N}^{\text{o}}_{\pi(z)}} \frac{E(z)-E(y)}{\kappa T(\pi(z))} S_z(dy)\, d\nu^{\text{i}}(z),
\end{aligned}
\end{equation*}
which is what was to be proved.
\end{proof}

 As will be seen in examples, Equation (\ref{main_entropy_equation}) contains, in our setting, a stochastic version of   Clausius's second law of thermodynamics:
 heat is transferred from a hot source to a cold sink. This equation also makes it clear that the central task in obtaining explicit values for $e_p$
 lies in the determination of the stationary distribution $\nu$. Much of what is done in the rest of the paper is concerned with this issue for a
 class of random scattering operators which we call {\em generalized Maxwell-Smoluchowski} models. 
   
   \section{Entropy production in open compartments}\label{sec:entropy-open-compartments}
   Let us obtain  the total entropy produced during the residence time of a particle in an open compartment; that is, a compartment from which it is possible to exit with
   positive probability. 
   
 \subsection{Compartment operators}
 We have   defined  random  scattering wall operators $S_q$ at points $q\in \mathcal{W}$.    From these we can obtain random scattering  {\em compartment} operators, as explained in this section. Let $\mathcal{W}$ represent here the union of wall pieces enclosing a compartment. Let
 $\mathbf{n}$ denote the unit normal vector field on $\mathcal{W}$ pointing out. Let $\mathcal{W}^{\epsilon}$, $\epsilon\in \{-,+\}$,
 be  the half-space bundle   over $\mathcal{W}$ consisting of all $(q,v)$ such that $q$ lies in a wall piece $W$ and $v\in W_q^\epsilon$.  
 Let $\nu^\text{i}$ denote the (incoming) probability measure on $\mathcal{W}^-$, representing
the probability distribution  of states of a particle impinging on the wall from outside the compartment. It is natural to assume  that,
 once a particle enters the compartment,
its probability of  becoming trapped inside and never leaving is $0$. For many concrete scattering mechanisms this is a consequence of Poincar\'e recurrence. We say, in this case, that the compartment has {\em zero trapping probability}.
The probability measure $\nu^\text{o}$  on  $\mathcal{W}^+$ of states of the particle as it is emitted from the compartment is then a function of $\nu^{\text{i}}$ given by $\nu^{\text{o}}=\nu^\text{i} S$, where $S$, the random scattering compartment operator, is a transition kernel from the incoming states in $\mathcal{W}^-$ to the outgoing states $\mathcal{W}^+$. Thus, for
each incoming state $(q,v)\in \mathcal{W}^-$ and Borel set $U\subset \mathcal{W}^+$, $S(U|q,v)$ is the probability that
the particle which impinges on the surface at $q$ with velocity $v$ eventually leaves (either by immediate reflection or after reemerging from inside the compartment) with outgoing state in $U$. 
Let us write  the compartment operator as $S_{\text{c}}$   while we continue to denote the wall operator by $S$.

As before, we write $S^{\epsilon_1 \epsilon_2}$ for the (wall) block transition kernel taking measures on $\mathcal{W}^{\epsilon_1}$ to measures on $\mathcal{W}^{\epsilon_2}$. For the next proposition (Proposition \ref{proposition:S_diffuse}) we need $A:=(\mathcal{T}^*S^{+-})^2$ to be strictly contracting in the total variation norm. 
The following example should help to clarify the meaning of this assumption. 

\subsubsection{An example}
 Consider a  compartment enclosed by two  parallel walls, $W_1, W_2$,  with
the generalized Maxwell-Smoluchowski random scattering operator, such as the left compartment of Figure \ref{one_compartment_simple} further below.
 Let $S_1$, $S_2$ be the respective random operators with parameters
\begin{equation*}p=p_1=p_2>0,\ \  \alpha=\alpha_1=\alpha_2>0, \ \ C=C_1=C_2\geq 0, \ \ T=T_1=T_2>0.\end{equation*}
These parameters are constant on each wall, so that the position variable $q$ can be replaced with the discrete index $i\in \{1,2\}$. We imagine the walls aligned vertically and call $W_1$ the left wall and $W_2$ the right wall. 
 The return map $\mathcal{T}$ is simply the identity.
 Let   $v\in W_{2}^-$ and set $\nu^-=\delta_{(2,v)}$.
 Then, for any Borel set  $U\subseteq W_1^-$, 
 \begin{equation*}
 \left(\nu^-\mathcal{T}^*S^{+-}\right)(U) =S^{+-}(U|1,v) = (1-p\mathbbm{1}_{\mathsf{T}_1^+}(v))\left[\alpha \mu_{T}(U) +(1-\alpha)\delta_{\overline{v}}(U)\right].
 \end{equation*}
 If the normal component of $v$ is not sufficiently large for the particle to overcome the potential barrier, so that 
 $\mathbbm{1}_{\mathsf{T}_1^+}(v)=0$, then 
 \begin{equation*} \left(\nu^-\mathcal{T}^*S^{+-}\right)(W_1^-)=1\end{equation*}
  so that  $\mathcal{T}^*S^{+-}$ is not 
 strictly contracting. But for the squared operator we have, by a straightforward calculation,
 \begin{equation*}
\left[\nu^-\left(\mathcal{T}^* S^{+-}\right)^2\right](W_1^-)\leq 1-\alpha p(2-p)e^{-\frac{C}{\kappa T}}.
 \end{equation*}
Thus the square of $\mathcal{T}^*S^{+-}$ is strictly contracting when $\alpha$ and $p$ are positive. This is  natural:
for the particle to leave the compartment, the probability $p$ of hitting a pore should be positive
and  some thermal accommodation with the wall is necessary
  for the particle   to overcome the potential barrier, thus $\alpha>0$.

\subsubsection{Compartment operator from wall operator}\label{compartment_proposition}
The random scattering compartment operator $S_\text{c}$ is obtained from the random scattering wall operator $S$ as follows. 
\begin{proposition}[Compartment operator from wall operator]\label{proposition:S_diffuse}
Let $S$ be the random scattering  operator associated to the boundary wall of a bounded compartment. Suppose the block $\left(\mathcal{T}^*S^{+-}\right)^2$ is
strictly contracting, as defined at the beginning of this section. Then the compartment scattering operator $S_{\text{c}}$ is given by
\begin{equation*}
S_{\text{c}}=S^{-+} + S^{--}\left(I-\mathcal{T}^*S^{+-}\right)^{-1} \mathcal{T}^* S^{++},
\end{equation*}
in which
\begin{equation*}
\left(I-\mathcal{T}^*S^{+-}\right)^{-1}=I+\mathcal{T}^*S^{+-} + \left(\mathcal{T}^*S^{+-}\right)^2+\cdots
\end{equation*}
is absolutely convergent in the total variation norm.
\end{proposition}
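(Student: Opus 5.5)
We decompose the particle's history by the number of internal wall collisions. A particle enters the compartment at an incoming state $x\in\mathcal{W}^-$, meaning it impinges from outside with $v\in W_q^-$ relative to the outward normal. At this first collision it either reflects immediately back outside, giving the term $S^{-+}$, or it transmits into the compartment, giving $S^{--}$. A transmitted particle is now inside with outgoing state in $\mathcal{W}^-$, i.e. pointing inward. Free flight $\mathcal{T}$ carries it to its next wall hit, where it arrives from inside with velocity in $W^+$. There it either reflects back inside ($S^{+-}$, followed again by $\mathcal{T}$) or transmits out ($S^{++}$). So the event ``exit after exactly $k$ internal reflections'' has kernel $S^{--}(\mathcal{T}^*S^{+-})^k\mathcal{T}^*S^{++}$. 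Summing over $k\ge 0$ and adding $S^{-+}$ gives the formula, provided the series converges and the particle exits with probability one.

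The key steps are as follows. First, I would make the path decomposition rigorous by applying the strong Markov property of the chain generated by $B$ (Definition \ref{one_step_generator}), restricted to the compartment. Let $N$ be the number of internal reflections before exit. An induction on $k$ shows that, for Borel $U\subseteq\mathcal{W}^+$, the sub-probability that the particle exits into $U$ with $N=k$ equals
\[
\bigl[\delta_x S^{--}(\mathcal{T}^*S^{+-})^k\mathcal{T}^*S^{++}\bigr](U).
\]
The induction step writes the law of the state after the $k$-th internal reflection as the measure $\delta_x S^{--}(\mathcal{T}^*S^{+-})^k$ on $\mathcal{W}^-$ (still inside). Applying $\mathcal{T}^*$ then gives the law of the next impact, whose blocks $S^{+-}$ and $S^{++}$ split it into the ``reflect again'' and ``exit'' parts. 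Countable additivity over $k$ then yields $S_{\text{c}}$ as the stated series, with $S^{-+}$ covering the case of immediate reflection.

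Second, I would prove absolute convergence in total variation. Write $K=\mathcal{T}^*S^{+-}$. It is a substochastic kernel, so $\|K\|_{TV}\le 1$, and by hypothesis $\|K^2\|_{TV}\le\theta<1$. Then $\|K^{j}\|\le\theta^{\lfloor j/2\rfloor}$, so $\sum_j\|K^j\|\le 2/(1-\theta)<\infty$. Standard Neumann-series algebra in the Banach space of bounded operators on finite signed measures then shows that the sum is $(I-K)^{-1}$. Composing with the bounded kernels $S^{--}$ and $\mathcal{T}^*S^{++}$ preserves convergence.

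Third, I would verify zero trapping probability, so that $S_{\text{c}}$ is a genuine Markov kernel. The mass of $\delta_xS^{--}K^k$ tends to $0$ geometrically, which means the probability of never exiting is zero. Equivalently, using Proposition \ref{proposition_balance} in telescoped form, the total mass of $S_{\text{c}}(\cdot|x)$ equals $1$ minus the limit of the mass of $\delta_xS^{--}K^k$, and that limit is $0$.

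The main obstacle is the first step: the bookkeeping of signs and of which side each block acts on, together with justifying that the restricted chain's law factorizes exactly into these blocks. The latter uses that $\mathcal{T}$ from a point inside lands on the same compartment's boundary. I would handle this by checking the $k=0$ and $k=1$ cases explicitly against the block definitions and then inducting.
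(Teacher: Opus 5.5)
Your proposal is correct and follows essentially the same route as the paper. Both decompose the history by the number of internal reflections, track inductively the residual mass $\nu^{\text{i}}S^{--}(\mathcal{T}^*S^{+-})^k$ and the emitted mass $\nu^{\text{i}}S^{--}(\mathcal{T}^*S^{+-})^{k}\mathcal{T}^*S^{++}$, and use the contraction hypothesis to send the residual to zero and sum the series. Your explicit bound $\|K^j\|\le\theta^{\lfloor j/2\rfloor}$ and the telescoped mass balance spell out the convergence and zero-trapping steps that the paper asserts in a single line.
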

\begin{proof}
Let $\nu^\text{i}$ be the initial probability distribution of a particle impinging on the outside of the compartment's boundary wall.
We may write this initial distribution as $(\nu^\text{i}, 0)$ to indicate that only the component in  $\mathcal{W}^-$ is nonzero.
 Upon collision, 
the distribution of states splits into a transmitted and a reflected distribution, $(\nu^{\text{i}}S^{--},\nu^{\text{i}}S^{-+})$. The transmitted distribution
is transported back to the wall from the inside under the map $\mathcal{T}$, namely, $\nu^{\text{i}}S^{--}\mathcal{T}^*$, and on a second scattering event it splits into reflected,
$\nu^{\text{i}}S^{--}\mathcal{T}^*S^{+-}$,
 and transmitted, $\nu^{\text{i}}S^{--}\mathcal{T}^*S^{++}$, distributions. Up to this point, the total probability distribution is
 \begin{equation*}
 \left(\nu^{\text{i}}S^{--}\mathcal{T}^*S^{+-}, \nu^{\text{i}}S^{-+}+\nu^{\text{i}}S^{--}\mathcal{T}^*S^{++}\right),
 \end{equation*}
where the  term on the right-hand side of this ordered pair is the total probability (up to $2$ scattering events) for
states emitted from the wall outward and the left-hand side   is the residual probability that the particle remains inside. 
Inductively, let the probability of states by the time of the $n$-th scattering event be
\begin{equation*}
\begin{aligned}
(\nu^{(n)}_-, \nu^{(n)}_+)&=\left(\nu^-S^{--}(\mathcal{T}^*S^{+-})^n,\right.\\
&\ \ \left. \nu^-S^{-+} + \nu^-S^{--}\mathcal{T}^*S^{++}+ \cdots + \nu^-S^{--} \left(\mathcal{T}^*S^{+-}\right)^{n-1}\mathcal{T}^*S^{++}\right).
\end{aligned}
\end{equation*}
On the $n+1$-st scattering event, $\nu^-S^{--}\left(\mathcal{T}^*S^{+-}\right)^n\mathcal{T}^*$ again splits into transmitted,
$\nu^-S^{--}\left(\mathcal{T}^*S^{+-}\right)^n\mathcal{T}^*S^{++}$, and reflected, $\nu^-S^{--}\left(\mathcal{T}^*S^{+-}\right)^n\mathcal{T}^*S^{+-}$, distributions, where the former is added to the sum of emitted probabilities and the latter is the residual remaining inside the compartment.
Due to the strictly contracting assumption, the residual distribution converges to zero and the emitted distribution
is given by the sum
\begin{equation*}
\nu^\text{o}=\nu^{\text{i}}\left[S^{-+} + S^{--}\left(I+\mathcal{T}^*S^{+-} + \left(\mathcal{T}^*S^{+-}\right)^2+\cdots\right)\mathcal{T}^*S^{++}\right].
\end{equation*}
This is what was to be proved.
\end{proof}

It is interesting to reinterpret this proposition from a stationary perspective. From this perspective, we think of $\nu^{\text{i}}$ as
a stationary distribution indicating a flow of particles impinging on the wall of the compartment from the outside. This generates a stationary flow
of emitted particles with velocities distributed according to $\nu^{\text{o}}=\nu^{\text{i}}S_{\text{c}}$. To obtain the latter, let $\eta^{-}$ and $\eta^{+}$ be stationary distributions of  velocities of emitted and absorbed particles on the inner side of the wall. In the stationary regime, the emitted particles are reabsorbed after translation under $\mathcal{T}$,  so  $\eta^+=\mathcal{T}_*\eta^-$.
This gives the stationary equation
\begin{equation*}
(\eta^{-},\nu^{\text{o}}) =(\nu^{\text{i}},\eta^{+})\left(\begin{array}{cc}S^{--} & S^{-+} \\[3pt]S^{+-} & S^{++}\end{array}\right).
\end{equation*} 
This gives the equations
\begin{equation*}
\begin{aligned}
\eta^{-}&=\nu^{\text{i}} S^{--} + \eta^{-}\mathcal{T}^*S^{+-}\\
\nu^{\text{o}}&=\nu^{\text{i}} S^{-+}+ \eta^{-}\mathcal{T}^*S^{++}.
\end{aligned}
\end{equation*}
These equations are easily solved for $\nu^{\text{o}}$ as a function of $\nu^{\text{i}}$ by eliminating $\eta^{-}$, giving the same expression as in 
Proposition \ref{proposition:S_diffuse}:
\begin{equation*}
\nu^{\text{o}}=\nu^{\text{i}}\left[S^{-+} + S^{--}\left(I-\mathcal{T}^*S^{+-}\right)^{-1} \mathcal{T}^* S^{++}\right]. 
\end{equation*}

\subsubsection{Composition of walls and the Redheffer star-product}\label{redheffer}
We note, in passing, a different interpretation of the main  algebraic observation of the previous subsection,
having to do with  constructing composite walls  by layering 
several walls next to each other. 
The same procedure used above to obtain 
compartment operators also yields the operators for such  composite walls. These  operators are obtained from those of the individual wall layers
by an algebraic operation that, in its simplest form, corresponds to the  {\em Redheffer star-product}, well-known in scattering theory.
 
 \begin{figure}[ht]
\begin{center}
\includegraphics[width=3.0in]{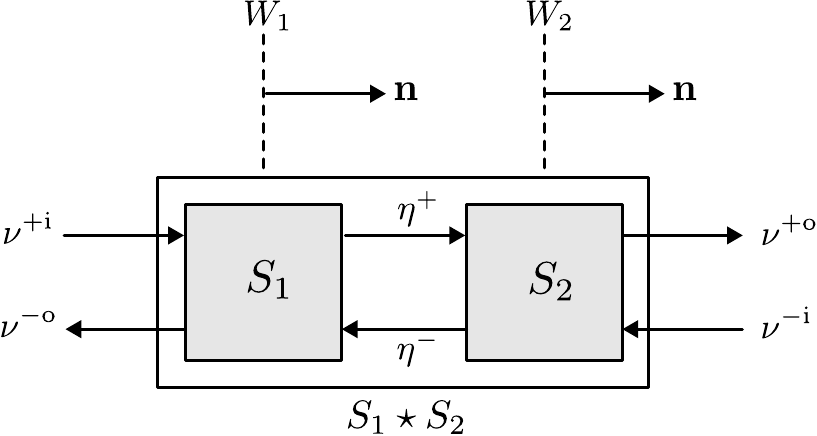}
\caption{{\small  The Redheffer star product of the random scattering operators for two parallel walls.  }}
\label{S1_star_S2}
\end{center}
\end{figure}

It seems  more natural here to use  the same
normal vector for all wall layers.   See the diagram of 
  Figure \ref{S1_star_S2}  for the case of two layers.  
Denoting the random scattering operators of walls $W_1$, $W_2$ by $S_1$, $S_2$, respectively,
the compartment scattering operator of Proposition \ref{proposition:S_diffuse} then takes the form
\begin{equation*}
\left(\nu^{-\text{o}},\nu^{+\text{o}}\right) =\left(\nu^{-\text{i}},\nu^{+\text{i}}\right) (S_1\star S_2),
\end{equation*}
in which
\begin{equation*}
S_1\star S_2=\left(\begin{array}{cc}S_{12}^{--} & S^{-+}_{12}\\[3pt]S_{12}^{+-}&S_{12}^{++}\end{array}\right)
\end{equation*}
with block operators
\begin{equation*}%\label{redheffer_blocks}
\begin{aligned}
S^{--}_{12}&=S_2^{--}\left(I-S_1^{-+}S_2^{+-}\right)^{-1}S_1^{--}\\
S^{++}_{12}&= S_1^{++}\left(I-S_2^{+-}S_1^{-+}\right)^{-1}S_2^{++}\\
S^{-+}_{12}&=S_2^{--}S_1^{-+}\left(I-S_2^{+-}S_1^{-+}\right)^{-1} S_2^{++} +S_2^{-+} \\
S^{+-}_{12}&=S_1^{++}S_2^{+-}\left(I-S_1^{-+}S_2^{+-}\right)^{-1} S_1^{--} +S_1^{+-}.
\end{aligned}
\end{equation*}

The binary operation $S_1\star S_2$ is   the {Redheffer star-product}. It can be shown to be associative. Thus
the random   scattering operator for the layering of several walls (or the concatenation of  compartments separated by vertical walls)
is given by the  Redheffer products of the wall operators in the given order (from left to right, according to our conventions).

\subsubsection{Compartment statistics}\label{compartment_statistics}
We consider here a single compartment $M$ with boundary wall $\mathcal{W}$ and 
the random dynamical system (or Markov chain), $X_0, X_1, \dots$, defined in Section \ref{rds}, up to the exit time-step
\begin{equation*}
N_{\mathcal{W}^-}:= \text{min}\{j\geq 0: X_j\notin \mathcal{W}^-\}.
\end{equation*}
We are orienting the boundary of $M$ with the unit  normal vector field $\mathbf{n}$ pointing out.  Thus the chain remains in the compartment 
for as long as $X_i$ lies in $\mathcal{W}^-$.
Let $Q:=\mathcal{T}S^{+-}$. (Recall that $\mathcal{T}$ acts on measures according to  $(\nu \mathcal{T})(f)=\nu(f\circ \mathcal{T})$.) This is  also written as the substochastic kernel
$Q(U|x)=B(U\cap \mathcal{W}^-|x)$, where $B$ is the generator of the random dynamical system given in Definition \ref{one_step_generator}.
In terms of the random scattering operators, we may write, for any test function $f$,
\begin{equation*}
(Qf)(q,v)=Q(f|(q,v))=S^{+-}_{\mathcal{T}(q,v)}\left(f|_{W_{\mathcal{T}(q,v)}^-}\right),
\end{equation*}
where $v\in W_q^-$.

 We assume, as in Subsection  \ref{compartment_proposition}, Proposition \ref{proposition:S_diffuse},  that $Q^2$ is strictly contracting.
Thus $Q$ defines the  substochastic kernel obtained by restricting   the transition kernel on the random dynamical system to the transient set $\mathcal{W}^-$.  
Our concern here is to obtain expressions for the expected value of  such quantities as the  exit time from $\mathcal{W}^-$ and the amount of entropy produced before leaving the compartment (that is, when $X_n\in \mathcal{W}^+$). As was the case in Proposition \ref{proposition:S_diffuse}, the resolvent $(I-Q)^{-1}$ plays a key role.

\begin{proposition}\label{pattern_proposition}
Let $f(x_1, \dots, x_k)$ be a bounded Borel measurable  function with support on the $k$-fold Cartesian product of the transient set $\mathcal{W}^-$. Consider the Markov chain $X_0, X_1, \dots$
with   $X_0=x=(q,v)\in \mathcal{W}^-$. We suppose $Q^2$ is strictly contracting. Denote by $Q_if$ the action of $Q$ on the variable $x_i$. Then
\begin{equation*}
u(x):=\mathbb{E}_x\left[\sum_{n=0}^{\infty} f(X_{n}, X_{n+1}, \dots, X_{n+k-1})\right]
=\left[\left(I-Q\right)^{-1}\left(Q_2\dots Q_kf\right)\right](x).
\end{equation*}
Thus $u(x)$ satisfies the Poisson equation
\begin{equation*}
u(x)-\int_{\mathcal{W}^-} u(y)Q(dy|x)=\left(Q_2\dots Q_kf\right)(x)
\end{equation*}
for $x\in \mathcal{W}^-$. 
\end{proposition}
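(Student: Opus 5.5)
The plan is to compute the expectation term by term using the Markov property, and then to sum the resulting Neumann series.

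First, since $f$ vanishes unless all its arguments lie in $\mathcal{W}^-$, each summand $f(X_n,\dots,X_{n+k-1})$ is nonzero only on the event that $X_0,\dots,X_{n+k-1}$ all remain in the transient set. (If the chain leaves $\mathcal{W}^-$ it cannot return within a sojourn; we read the sum as running up to the exit time, or equivalently we work with the killed chain whose kernel is $Q$.) On this event the transitions are governed by $Q(dy|x)=B(dy\cap\mathcal{W}^-|x)$.

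Next, iterating the Markov property gives
\begin{equation*}
\mathbb{E}_x\left[f(X_n,\dots,X_{n+k-1})\right]=\int Q^n(dx_1|x)\,Q(dx_2|x_1)\cdots Q(dx_k|x_{k-1})\,f(x_1,\dots,x_k),
\end{equation*}
so the expectation equals $\left[Q^n(Q_2\cdots Q_k f)\right](x)$. Here $Q_2\cdots Q_kf$ is the function of $x_1$ obtained by integrating out $x_k,x_{k-1},\dots,x_2$ successively against $Q$. This bookkeeping of nested integrations, checked by the same product-test-function argument used for Equation (\ref{inner_product_form_of_Pk}), is routine.

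Summing over $n$ then gives $u=\sum_{n\ge0}Q^n g$ with $g:=Q_2\cdots Q_kf$. The main point is convergence and the justification for exchanging the sum with the expectation. Since $g$ is bounded by $\|f\|_\infty$ and $Q$ is substochastic with $\|Q^2\|\le\theta<1$ in the sup/total-variation operator norm, we get $\|Q^n g\|_\infty\le \theta^{\lfloor n/2\rfloor}\|f\|_\infty$. The series therefore converges absolutely and uniformly to $(I-Q)^{-1}g$, exactly as in Proposition \ref{proposition:S_diffuse}. Fubini/Tonelli (applied to $|f|$ first) justifies the exchange of the sum and the expectation.

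Finally, the Poisson equation follows by applying $I-Q$ to the convergent series: $(I-Q)\sum_n Q^n g=g$, and writing $Qu$ as $\int_{\mathcal{W}^-}u(y)Q(dy|x)$ gives the stated equation. Equivalently, one can derive it by a first-step decomposition, conditioning on $X_1$.
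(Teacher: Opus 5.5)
Your proof is correct and follows the same route the paper has in mind. The paper's proof is a one-line appeal to the Markov property and conditional expectation, and you supply the omitted details: the term-by-term identity $\mathbb{E}_x[f(X_n,\dots,X_{n+k-1})]=[Q^n(Q_2\cdots Q_kf)](x)$, the geometric bound from $\|Q^2\|<1$ that gives convergence and justifies the Fubini exchange, and the explicit reading of the sum as the killed chain stopped at exit, which the paper leaves implicit but which the resolvent formula requires.
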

\begin{proof}
This follows from a standard application of the Markovian nature of the chain and properties of conditional expectation. 
\end{proof}

\begin{corollary}\label{corollary_residence_time}
The expected number of steps, $u(x):=\mathbb{E}_x\left[N_{\mathcal{W}_-}\right]$,   till the exit  from the transient set $\mathcal{W}^-$ satisfies
\begin{equation*}
u(x)-\int_{\mathcal{W}^-} u(y)Q(dy|x)=1,
\end{equation*}
for all $x\in \mathcal{W}^-$, so that
\begin{equation*}
\mathbb{E}_x\left[N_{\mathcal{W}^-}\right]=\left[(I-Q)^{-1}\mathbf{1}\right](x).
\end{equation*}
 If  $\phi(x)$ is an integrable function of the particle state at collisions, such as   the time duration of the free-flight segment in the compartment $M$ with initial state $x$, then
\begin{equation*}\mathbb{E}_x\left[\sum_{n=0}^\infty \phi(X_n)\right]=\left[(I-Q)^{-1}\phi\right](x).\end{equation*}
If $\phi$ is the time of free-flight, the above  is
the expected exit time from $\mathcal{W}^-$.
\end{corollary}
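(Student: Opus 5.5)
The plan is to specialize Proposition \ref{pattern_proposition} to $k=1$ and then reinterpret the resulting Poisson equation. With $k=1$ there are no operators $Q_2,\dots,Q_k$ to apply, so the proposition gives, for any bounded Borel $f$ supported on $\mathcal{W}^-$,
\begin{equation*}
\mathbb{E}_x\Big[\sum_{n=0}^\infty f(X_n)\Big]=\big[(I-Q)^{-1}f\big](x).
\end{equation*}
I will first record why the right-hand side makes sense. Since $Q^2$ is strictly contracting, $\|Q^{2j}\|\le c^j$ for some $c<1$, so the Neumann series $\sum_n Q^n$ converges in operator norm on bounded functions. This series is $(I-Q)^{-1}$, and rearranging $(I-Q)^{-1}=I+Q(I-Q)^{-1}$ gives the Poisson equation.

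For the exit time I take $f=\mathbbm{1}_{\mathcal{W}^-}$. By definition of $N_{\mathcal{W}^-}$, the chain stays in $\mathcal{W}^-$ for the indices $0,\dots,N_{\mathcal{W}^-}-1$. I also need that once $X_j\in\mathcal{W}^+$ the sum no longer counts later terms, that is, the relevant process is the chain killed on leaving the transient set. This is exactly how $Q(U|x)=B(U\cap\mathcal{W}^-|x)$ is defined: $Q^n f(x)=\mathbb{E}_x[f(X_n);\,X_0,\dots,X_n\in\mathcal{W}^-]$. I would verify this identity by induction on $n$ using the Markov property, and it is the single point needing care. With it, $\sum_n Q^n\mathbf{1}(x)=\sum_n \mathbb{P}_x(N_{\mathcal{W}^-}>n)=\mathbb{E}_x[N_{\mathcal{W}^-}]$. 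This yields $u=(I-Q)^{-1}\mathbf{1}$ and the stated equation $u-Qu=1$.

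For general $\phi$ the same computation gives $\mathbb{E}_x[\sum_{n<N_{\mathcal{W}^-}}\phi(X_n)]=(I-Q)^{-1}\phi$, where the sum in the statement is understood over the sojourn, as in the killed chain. For integrable rather than bounded $\phi$, I would split $\phi=\phi^+-\phi^-$ and apply monotone convergence to the truncations $\phi\wedge M$. Finiteness then follows when $(I-Q)^{-1}|\phi|$ is finite, which for the free-flight time is assured whenever the compartment is bounded (bounded flight time on the relevant support) or via the integrability assumption.

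Finally, taking $\phi(x)=t(x)$, the free-flight time from \eqref{return_W}, the total sojourn time equals the sum of flight times over the steps spent in $\mathcal{W}^-$. Hence $(I-Q)^{-1}t$ is the expected exit time. The main obstacle is only the bookkeeping identification of $Q^n$ with the killed chain; everything else is the Neumann series.
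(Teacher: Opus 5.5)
Your proposal is correct and follows the paper's route. The paper likewise obtains $\mathbb{E}_x[N_{\mathcal{W}^-}]=[(I-Q)^{-1}\mathbf{1}](x)$ directly from Proposition~\ref{pattern_proposition} and handles unbounded $\phi$ by monotone convergence; you additionally spell out the Neumann series and the killed-chain identity $Q^nf(x)=\mathbb{E}_x[f(X_n);\,X_0,\dots,X_n\in\mathcal{W}^-]$, which the paper leaves implicit.

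One aside is wrong but harmless. A bounded compartment does not give bounded flight times, since $t=\ell/|v\cdot\mathbf{n}|$ blows up as the normal velocity tends to zero, as in the paper's examples. Because $t\ge 0$, however, monotone convergence gives the identity in $[0,\infty]$ with no finiteness hypothesis.
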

\begin{proof}
The expression for the expected number of steps is an immediate consequence of Proposition \ref{pattern_proposition}.
The claim about the expected exit time needs to consider the possibility that $\phi$ is not bounded. This issue is dealt with by a standard application of monotone convergence.
\end{proof}

For the next proposition, let $T(x)=T(q)$ be the temperature at $x=(q,v)$, $q\in \mathcal{W}$. Recall that $E_0(x)=\frac12 m |v|^2_q$ is the kinetic energy
of state $x$ and for $x=(q,v)\in \mathcal{W}^+$, $y=(q,w)\in \mathcal{W}^-$, write
\begin{equation*}
\sigma(x,y):=-\frac{E_0(y)-E_0(x)}{\kappa T(x)}.
\end{equation*}
This is the entropy produced by the transfer of kinetic energy to the wall (imagined  to be in contact with a heat bath) at temperature $T(x)$.
We will later approach the issue of entropy production in a more fundamental way, but take this expression as a definition for now.
Then the entropy produced during the particle's stay in compartment $M$ is 
\begin{equation}\label{S_formula}
S=\sum_{n=0}^{N_{\mathcal{W}^-}-1}\sigma(\mathcal{T}(X_n),X_{n+1})=\sum_{n=0}^\infty\sigma(\mathcal{T}(X_n),X_{n+1})\mathbbm{1}_{\mathcal{W}^-\cup\mathcal{W}^+}(X_{n+1}).
\end{equation}
In the above,  collisions at states selected by the indicator function include  the one when the particle leaves the compartment. 
Define for $x\in \mathcal{W}^-$ the functions
\begin{equation}\label{definition_h}
h(x)=\int_{\mathcal{W}^-}\sigma(\mathcal{T}(x),y) Q(dy|x), \ \ g(x)=\int_{\mathcal{W}^+_{\pi(\mathcal{T}(x))}} \sigma(\mathcal{T}(x),y) S^{++}(dy|\mathcal{T}(x)).
\end{equation}
Here $h(x)$ is the expected entropy produced by a collision that stays inside the
compartment, while $g(x)$ is the expected entropy produced by the terminal
collision that transmits the particle out of the compartment.

\begin{proposition}\label{total_residence_entropy}
The expected total entropy produced during the residence time in compartment $M$ for the process started at $x$ is
\begin{equation*}
\mathbb{E}_x[S]=\left[(I-Q)^{-1}(h+g)\right](x).
\end{equation*}
\end{proposition}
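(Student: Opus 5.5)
The plan is to write the total entropy $S$ in (\ref{S_formula}) as a sum of per-step contributions, each depending on a pair $(X_n,X_{n+1})$. The sum splits into two pieces: one from collisions that keep the particle inside, and one from the single terminal collision that emits it. Each piece is then evaluated by the Markov property, in the spirit of Proposition \ref{pattern_proposition}.

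Concretely, set $F(x,y):=\sigma(\mathcal{T}(x),y)\mathbbm{1}_{\mathcal{W}^-}(y)$ and $G(x,y):=\sigma(\mathcal{T}(x),y)\mathbbm{1}_{\mathcal{W}^+}(y)$. Because the chain is absorbed once $X_{n+1}\in\mathcal{W}^+$, only steps with $X_n\in\mathcal{W}^-$ contribute, and
\begin{equation*}
S=\sum_{n=0}^\infty \mathbbm{1}_{\mathcal{W}^-}(X_n)\left[F(X_n,X_{n+1})+G(X_n,X_{n+1})\right].
\end{equation*}
For each $n$ I condition on $\mathcal{F}_n=\sigma(X_0,\dots,X_n)$. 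On the event $X_n=z\in\mathcal{W}^-$, the law of $X_{n+1}$ is $B_z=S_{\mathcal{T}(z)}$. Its restriction to $\mathcal{W}^-$ is $Q(\cdot|z)$, and its restriction to $\mathcal{W}^+$ is $S^{++}(\cdot|\mathcal{T}(z))$, since the incoming state at $\mathcal{T}(z)$ lies in $\mathcal{W}^+$ relative to the outward normal. Hence
\begin{equation*}
\mathbb{E}\left[\mathbbm{1}_{\mathcal{W}^-}(X_n)(F+G)(X_n,X_{n+1})\,\middle|\,\mathcal{F}_n\right]=\mathbbm{1}_{\mathcal{W}^-}(X_n)\,(h+g)(X_n),
\end{equation*}
with $h,g$ as in (\ref{definition_h}).

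Summing over $n$ gives $\mathbb{E}_x[S]=\mathbb{E}_x\bigl[\sum_{n}\mathbbm{1}_{\mathcal{W}^-}(X_n)(h+g)(X_n)\bigr]$, and $\mathbb{E}_x[\mathbbm{1}_{\mathcal{W}^-}(X_n)\phi(X_n)]=(Q^n\phi)(x)$. By Corollary \ref{corollary_residence_time} (equivalently Proposition \ref{pattern_proposition} with $k=1$), this equals $\bigl[(I-Q)^{-1}(h+g)\bigr](x)$. The Neumann series $\sum Q^n$ converges because $Q^2$ is strictly contracting.

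The main obstacle is justifying the interchange of expectation and infinite sum, since $\sigma$ involves kinetic energies and is unbounded and of either sign. I would split $\sigma=\sigma^+-\sigma^-$ and handle each part separately. For each part, monotone convergence (Tonelli) applied to the nonnegative series gives the identity in $[0,\infty]$, which is the same device used for unbounded $\phi$ in Corollary \ref{corollary_residence_time}. Finiteness then requires $(I-Q)^{-1}(|h|+|g|)(x)<\infty$, which I would take as the integrability assumption implicit in the statement (it holds, for instance, when energy moments of the scattering kernels are bounded). Under it the two parts subtract to give the formula.
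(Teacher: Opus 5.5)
Your proof is correct and is essentially the paper's argument: condition on $X_n$, identify the restrictions of $B_{X_n}$ to $\mathcal{W}^-$ and $\mathcal{W}^+$ as $Q$ and $\mathcal{T}S^{++}$ to obtain $h$ and $g$, and sum the Neumann series. Folding the terminal collision into each fixed-$n$ term via $\mathbbm{1}_{\mathcal{W}^-}(X_n)\mathbbm{1}_{\mathcal{W}^+}(X_{n+1})$ is a cleaner version of the paper's conditioning on $X_{N_{\mathcal{W}^-}-1}$, which the paper only finishes with ``similar to what was done for $h$.''

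One correction to your integrability remark: after splitting $\sigma=\sigma^+-\sigma^-$, the hypothesis your argument actually needs is that $(I-Q)^{-1}$ applied to $z\mapsto\int|\sigma(\mathcal{T}(z),y)|\,B_z(dy)$ is finite at $x$. Finiteness of $(I-Q)^{-1}(|h|+|g|)$ is weaker and does not suffice, because $|h|$ and $|g|$ can be strictly smaller than the expected absolute per-collision entropy.
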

\begin{proof}
Let $S_{\text{int}}$ be entropy produced by the collisions sending states back into $\mathcal{W}^-$:
\begin{equation*}
S_{\text{int}}=\sum_{n=0}^{N_{\mathcal{W}^-}-2}\sigma(\mathcal{T}(X_n),X_{n+1})=\sum_{n=0}^\infty\sigma(\mathcal{T}(X_n),X_{n+1})\mathbbm{1}_{\mathcal{W}^-}(X_{n+1}).
\end{equation*}
Taking the expectation of   $S$     (Equation (\ref{S_formula})), 
\begin{equation*}
\begin{aligned}
\mathbb{E}_x[S_{\text{int}}]&=\mathbb{E}_x\left[\sum_{n=0}^\infty \sigma(\mathcal{T}(X_n),X_{n+1})\mathbbm{1}_{\mathcal{W}^-}(X_{n+1})\right]\\
&=\sum_{n=0}^\infty \mathbb{E}_x\left[\mathbb{E}\left[\sigma(\mathcal{T}(X_n),X_{n+1})\mathbbm{1}_{\mathcal{W}^-}(X_{n+1})|X_n\right]\right].\end{aligned}
\end{equation*}
Given $X_n=x$ in the transient set $\mathcal{W}^-$, the inner conditional expectation  is
\begin{equation*}
\mathbb{E}\left[\sigma(\mathcal{T}(x),X_{n+1})\mathbbm{1}_{\mathcal{W}^-}(X_{n+1})\right]=\int_{\mathcal{W}^-}\sigma(\mathcal{T}(x),y) Q(dy|x)=h(x).
\end{equation*}
Moreover, if $X_n$ is not in $\mathcal{W}^-$, the contribution to the expectation is zero. Hence
\begin{equation*}
\mathbb{E}_x\left[S_{\text{int}}\right]=\sum_{n=0}^\infty \mathbb{E}_x\left[h(X_n)\mathbbm{1}_{\mathcal{W}^-}(X_n)\right]=\sum_{n=0}^\infty \int_{\mathcal{W}^-} h(y)Q(dy|x) = \sum_{n=0}^\infty (Q^nh)(x).
\end{equation*}
The last expression can be written using the resolvent applied to $h$. The contribution of the terminal  collision, before the particle leaves 
the compartment,  is
\begin{equation*}
\mathbb{E}_x\left[\sigma\left(\mathcal{T}(X_{N_{\mathcal{W}^-}-1}), X_{N_{\mathcal{W}^-}}\right)\right]=
\mathbb{E}_x\left[
\mathbb{E}\left[
\sigma\left(\mathcal{T}(X_{N_{\mathcal{W}^-}-1}), X_{N_{\mathcal{W}^-}}\right)
\left| X_{N_{\mathcal{W}^-}-1}\right.
\right]
\right].
\end{equation*}
Given $X_{N_{\mathcal{W}^-}-1}=y$, the   conditional expectation on the left-hand side of the above  equation is $g(y)$.
Similar to what was done for $h$, we obtain the contribution $\left[(I-Q)^{-1}g\right](x)$ by the terminal collision.
\end{proof}

We finally consider the exit probability distribution. Let $A$ be a measurable subset of $\mathcal{W}^+$ and denote by 
$P_{\text{exit}}(A|x)$ the conditional  probability that
$X_{N_{\mathcal{W}^-}}\in A$ given that $X_0=x\in \mathcal{W}^-$. Then
\begin{equation*}
P_{\text{exit}}(A|x)=S^{++}(A|\mathcal{T}(x))+\int_{\mathcal{W}^-}P_{\text{exit}}(A|y) Q(dy|x).
\end{equation*}
This may also be written in operator form as
\begin{equation*}
P_{\text{exit}} =\mathcal{T}S^{++} +Q P_{\text{exit}}.
\end{equation*}
Therefore
\begin{equation}\label{exit_probability}
P_{\text{exit}}=(I-Q)^{-1} \mathcal{T}S^{++}.
\end{equation}
We register this fact in the following form.
\begin{proposition}
Let $i, j$ indicate two adjacent compartments separated by wall $W_{ij}$. We orient the boundary of compartment $i$ so that the unit normal vector field $\mathbf{n}$ points out. The random scattering operators   $S^{+-}$, $S^{++}$, $Q=\mathcal{T}S^{+-}$, and $P_{\text{exit}}$
refer  to compartment $i$. 
 Let $x$ be an initial state pointing into compartment $i$. Then the probability that the random process will transition into compartment $j$ when it exits compartment $i$ is
 \begin{equation*}
 p_{ij}(x)=P_{\text{\em exit}}(W_{ij}^{+}|x),
 \end{equation*}
 where $P_{\text{exit}}$ is given in Equation (\ref{exit_probability}). Here $W_{ij}^+$ is the set of outgoing  states crossing $W_{ij}$ into compartment $j$.
\end{proposition}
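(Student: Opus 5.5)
The plan is to reduce the statement to the exit-probability identity (\ref{exit_probability}) by choosing the right measurable set.

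\textbf{Step 1: the event.} The chain leaves compartment $i$ at the first index $N_{\mathcal{W}^-}$ with $X_{N_{\mathcal{W}^-}}\in\mathcal{W}^+$. Crossing into compartment $j$ at that collision happens exactly when the exit state lies in $W_{ij}^+$. This is the set of outgoing states based on $W_{ij}$ and pointing out of $i$, hence into $j$. By definition,
\begin{equation*}
p_{ij}(x)=\mathbb{P}_x\left(X_{N_{\mathcal{W}^-}}\in W_{ij}^+\right)=P_{\text{exit}}(W_{ij}^+|x).
\end{equation*}
On boundary walls $\mathcal{W}^+$ is empty of transmitted states, since there $S^{++}$ vanishes. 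So the sets $W_{ij}^+$, over adjacent $j$, partition the exit states.

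\textbf{Step 2: the first-step equation.} Condition on $X_1$. Starting from $x\in\mathcal{W}^-$, the particle flies to $\mathcal{T}(x)$ and scatters.
\begin{itemize}
\item With kernel $S^{++}(\cdot|\mathcal{T}(x))$ it exits immediately, contributing $S^{++}(A|\mathcal{T}(x))$ for a measurable $A\subseteq\mathcal{W}^+$.
\item With kernel $Q(\cdot|x)=S^{+-}(\cdot|\mathcal{T}(x))$ it lands at $y\in\mathcal{W}^-$.
\end{itemize}
In the second case, the Markov property restarts the problem from $y$. This gives the renewal equation $P_{\text{exit}}=\mathcal{T}S^{++}+QP_{\text{exit}}$ displayed before the proposition.

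\textbf{Step 3: inverting.} This is the only real point to check: the solution must be unique, so that $(I-Q)^{-1}$ is legitimate. I would use the standing assumption that $Q^2$ is strictly contracting, that is,
\begin{equation*}
\sup_x Q^2(\mathcal{W}^-|x)\le c<1.
\end{equation*}
Then the Neumann series $\sum_n Q^n$ converges in sup norm on bounded functions, since $\|Q^{2k}\|\le c^k$ and $\|Q\|\le1$. Iterating the renewal equation $n$ times gives
\begin{equation*}
P_{\text{exit}}(A|x)=\sum_{k<n}Q^k\mathcal{T}S^{++}(A|x)+Q^nP_{\text{exit}}(A|x).
\end{equation*}
The remainder is bounded by $Q^n(\mathcal{W}^-|x)\to0$.

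\textbf{Step 4: reading off the sum.} Each term has a direct interpretation: $\sum_k Q^k\mathcal{T}S^{++}(A|\cdot)$ sums the probabilities of exiting into $A$ at step $k+1$, over disjoint events. This also follows from Proposition \ref{pattern_proposition} with $f=\mathcal{T}S^{++}(A|\cdot)$ and $k=1$. Taking $A=W_{ij}^+$ yields the formula.
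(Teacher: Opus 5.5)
Your proposal is correct and follows the paper's own derivation: the first-step equation $P_{\text{exit}}=\mathcal{T}S^{++}+QP_{\text{exit}}$ is inverted through the resolvent $(I-Q)^{-1}$, and $p_{ij}(x)$ is then read off by taking $A=W_{ij}^+$. Your Step 3 justifies the inversion using the strict contraction of $Q^2$ and the vanishing remainder $Q^nP_{\text{exit}}$, which only makes explicit what the paper leaves implicit.
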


The notation \begin{equation*}%\label{exit_probability_ij}
p_{ij}(dy|x):=P_{\text{exit}}(dy|x)
\end{equation*}
 for the exit measure on $W_{ij}$ will  be used in Subsection \ref{entropy_composition_section}.

\subsection{Composition of compartments}
Random compartment scattering operators can be composed.
Consider a pair of  regions, $M_1, M_2$,  enclosed by walls $\mathcal{W}_1$ and $\mathcal{W}_1$, respectively. Each region may be subdivided into
compartments. Suppose that $M_1$ and $M_2$ share a piece of boundary, $\mathcal{W}=\mathcal{W}_1\cap\mathcal{W}_2$, and denote by
$\mathcal{W}'_i$ the complement  of $\mathcal{W}$ in $\mathcal{W}_i$.
We take here the perspective that measures correspond to stationary flows of particles impinging on, or being emitted from, walls. 
Let $\nu_j^{\text{i}}$ and $\nu_{j}^{\text{o}}$ be the incoming and outgoing measures on $\mathcal{W}_j^{'}$. Let $\eta$  be
the measure associated to the crossing from $M_1$ to $M_2$ through $\mathcal{W}$ and $\eta'$, similarly, from $M_2$ to $M_1$. 

 \begin{figure}[ht]
\begin{center}
\includegraphics[width=4.0in]{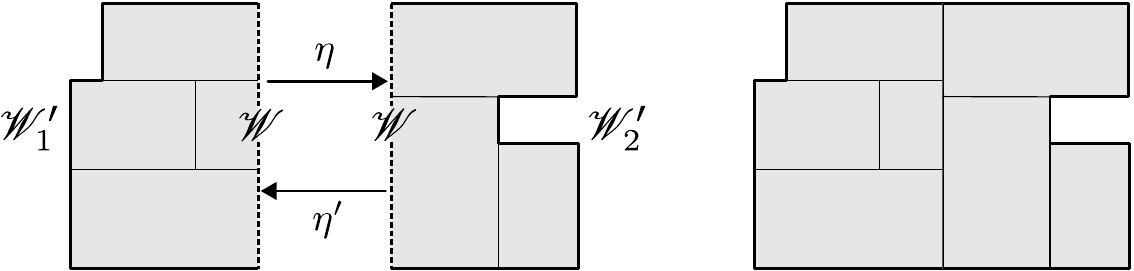}\ \ 
\caption{{\small  Composition of   random scattering operators of two   regions with permeable boundaries sharing a common boundary piece $\mathcal{W}$. The boundary pieces $\mathcal{W}_j'$ are shown in bold solid lines and the intersection of the boundary walls in bold dashed lines.}}
\label{compartment_composition}
\end{center}
\end{figure} 

The same ideas used in Section \ref{compartment_proposition} are employed here. Let $S_1, S_2$ be
the compartment operators. We indicate the block operators by $S_j^{\epsilon_1\epsilon_2}$ in which the signs $\epsilon_k$
are relative to the unit normal vector fields along the boundary walls, pointing out.
The incoming measure after composition
is   $\nu_1^{\text{i}}+\nu_2^{\text{i}}$ and the outgoing is $\nu_1^{\text{o}}+\nu_2^{\text{o}}$. The outgoing measure is a function of the incoming obtained from the system
\begin{equation*}
\begin{aligned}
\left(\nu_1^{\text{o}}, \nu_2^{\text{o}}\right)&=\left(\nu_1^\text{i} S_1^{+-}+\eta' S_1^{--},  \nu_2^{\text{i}}S_2^{-+}+\eta S_2^{++}\right)\\ 
\left(\eta,\eta'\right)&=\left( \nu_1^\text{i} S_1^{++}+\eta' S_1^{-+},\nu_2^{\text{i}}S_2^{--}+\eta S_2^{+-}\right).
\end{aligned}
\end{equation*}
Eliminating $\eta, \eta'$ gives
\begin{equation*}
\begin{aligned}
\nu_1^\text{o}&=\nu_1^\text{i}S_1^{+-}+\nu_2^{i} S_2^{--}S_1^{--} + \left(\nu_1^\text{i}S_1^{++}+\nu^\text{i}_2S_2^{--}S_1^{-+}\right)\left(I-S_2^{+-}S_1^{-+}\right)^{-1}S_2^{+-}S_1^{--} \\
\nu_2^\text{o}&=\nu_2^\text{i}S_2^{-+}+\nu_1^{i} S_1^{++}S_2^{++} + \left(\nu_1^\text{i}S_1^{++}+\nu^{\text{i}}_2S_2^{--}S_1^{-+}\right)\left(I-S_2^{+-}S_1^{-+}\right)^{-1}S_1^{-+}S_2^{++}.
\end{aligned}
\end{equation*}
\subsection{Entropy production composition}\label{entropy_composition_section}
The entropy production rate of a closed compartmented system with finitely many compartments, $M_1, \dots, M_m$, can now be expressed
in terms of the contributions of   each compartment, using the already obtained expectations for total entropy, number of collisions, and escape probability distributions during sojourns in compartments.

A sojourn of the particle in a compartment $M_i$ begins when the particle enters $M_i$ through a separating wall, with initial state $x\in \mathcal{W}_i^-$ (the bundle of incoming states through the boundary $\mathcal{W}_i$ of $M_i$),
and ends when it exits $M_i$ into an  an adjacent compartment.  Inside $M_i$ the process is a Markov chain with substochastic kernel $Q_i=\mathcal{T}S^{+-}_i$ (restricted to the transient set $\mathcal{W}_i^-$).  We have obtained in Subsection \ref{compartment_statistics}
the main statistics of this sojourn:
\begin{enumerate}
\item The expected total entropy produced during the visit, starting from $x$:
\begin{equation*}
\mathbb{E}_x[S_i]=\left[(I-Q_i)^{-1}h_i\right](x);
\end{equation*}
\item The expected number of wall collisions during the visit:
\begin{equation*}
\mathbb{E}_x[N_i]=\left[(I-Q_i)^{-1}1\right](x);
\end{equation*}
\item The probability distribution of the exit state $y$ into an adjacent compartment $M_j$:
\begin{equation*}
p_{ij}(dy|x)=P_{\text{exit}}(dy|x)=\left[(I-Q_i)^{-1} \mathcal{T}S_i^{++}\right](dy|x).
\end{equation*}
\end{enumerate}
The sequence of entrance states into the successive compartments forms a new Markov chain, whose state space is the disjoint union of the entrance boundaries $\mathcal{W}_1^-, \dots, \mathcal{W}_m^-$.
This chain has a  transition kernel  given by the exit distribution:
\begin{equation*}
\widetilde{P}(dy|x) =p_{ij}(dy|x), \ \ x\in \mathcal{W}_i^-, y\in \mathcal{W}_j^-.
\end{equation*}
We make the assumption, to be verified in   examples, that there exists a unique stationary probability measure $\eta$ for this entrance chain. So $\eta=\eta\tilde{P}$. Implicit in this assumption is the irreducibility and recurrence of the full random flight system.  We call $\eta$ the 
{\em stationary entrance probability measure} and $\tilde{P}$ the {\em compartment transition kernel}.
Let $\eta_i$ denote the restriction of $\eta$ to $\mathcal{W}_i^-$.

\begin{theorem}\label{entropy_multicompartment} Consider the random flight  in a closed compartmented system with compartments $M_1, \dots, M_m$.
Under the assumption of  existence and uniqueness of the stationary entrance probability measure $\eta$ for the   compartment transition kernel 
$\widetilde{P}$,  the entropy production rate   can be expressed in terms of
 the  compartment contributions as
\begin{equation}\label{theorem_entropy_compartments}
e_p= \frac{\displaystyle \sum_{i=1}^m \int_{\mathcal{W}_i^-} \mathbb{E}_x[S_i] \, d\eta_i(x)}
          {\displaystyle \sum_{i=1}^m \int_{\mathcal{W}_i^-} \mathbb{E}_x[N_i] \, d\eta_i(x)}=\frac{\displaystyle \sum_{i=1}^m \int_{\mathcal{W}_i^-} \bigl[(I - Q_i)^{-1} (h_i+g_i)\bigr](x) \, d\eta_i(x)}
          {\displaystyle \sum_{i=1}^m \int_{\mathcal{W}_i^-} \bigl[(I - Q_i)^{-1}{1}\bigr](x) \, d\eta_i(x)}.
\end{equation}
\end{theorem}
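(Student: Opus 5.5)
The approach is a renewal-reward argument over sojourns, combined with the ergodic characterization of $e_p$ from Corollary \ref{corollary_of_main_theorem}. That corollary writes $e_p=\int \sigma\, d\zeta_{\text{stat}}$, the stationary expectation of the entropy change per collision. Under uniqueness (hence ergodicity) of the stationary measure $\nu$ of the full chain, Birkhoff's ergodic theorem for the stationary Markov chain $X_0,X_1,\dots$ then gives
\begin{equation*}
e_p=\lim_{n\to\infty}\frac1n\sum_{k=0}^{n-1}\sigma\bigl(\mathcal{T}(X_k),X_{k+1}\bigr)
\end{equation*}
almost surely. By this limit theorem the almost sure limit does not depend on the initial distribution, provided the initial law is absolutely continuous with respect to $\nu$ or the chain is Harris recurrent. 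So I may start the chain at an entrance state distributed according to $\eta$, with the understanding that $\eta$-a.e. start is controlled by recurrence.

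Next I partition the collision sequence into consecutive sojourns. Let $\tau_0<\tau_1<\cdots$ be the successive entrance times into compartments, and set $Y_r=X_{\tau_r}$. By the strong Markov property, $(Y_r)$ is exactly the entrance chain with kernel $\widetilde{P}$ described above. Conditional on $Y_r=x\in\mathcal{W}_i^-$, the $r$-th sojourn has:
\begin{itemize}
\item a number of collisions $N^{(r)}$ whose law depends only on $x$;
\item a total entropy $S^{(r)}=\sum_{k=\tau_r}^{\tau_{r+1}-1}\sigma(\mathcal{T}(X_k),X_{k+1})$ whose law also depends only on $x$.
\end{itemize}
This sum includes the terminal transmitting collision, matching Equation (\ref{S_formula}). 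Its conditional expectations are $[(I-Q_i)^{-1}1](x)$ and $[(I-Q_i)^{-1}(h_i+g_i)](x)$, by Corollary \ref{corollary_residence_time} and Proposition \ref{total_residence_entropy}.

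The pairs $(N^{(r)},S^{(r)})$ therefore form a Markov-additive sequence driven by $(Y_r)$, not an i.i.d. one. I would apply the ergodic theorem to the chain $(Y_r,\text{sojourn path}_r)$, which is stationary with first marginal $\eta$ and ergodic by uniqueness of $\eta$. This gives, almost surely,
\begin{equation*}
\frac1R\sum_{r<R}S^{(r)}\to\sum_i\int\mathbb{E}_x[S_i]\,d\eta_i,\qquad
\frac1R\sum_{r<R}N^{(r)}\to\sum_i\int\mathbb{E}_x[N_i]\,d\eta_i .
\end{equation*}
Taking the quotient along $n=\tau_R$ and using $\tau_R/R\to\sum_i\int\mathbb{E}_x[N_i]\,d\eta_i$ gives the first expression in (\ref{theorem_entropy_compartments}). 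The partial sojourn at the end of a general $n$ is handled by sandwiching $\tau_R\le n<\tau_{R+1}$ and using $N^{(R)}/R\to0$. The second expression then follows by substituting the sojourn formulas above.

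The main obstacle is integrability and the identification of the two limits. The denominator must be finite, which I would get from the strict contraction of $Q_i^2$: it makes $(I-Q_i)^{-1}1$ bounded. The numerator needs $\sigma$ to be integrable against the sojourn law. I would bound $|\sigma|$ by energies divided by the minimal temperature and then use finiteness of stationary mean energy; this last point is an extra assumption implicitly needed for Theorem \ref{second_law}. Finally, I need the entrance chain's ergodicity, together with $\eta$ being the entrance-state marginal under $\nu$ (a Palm-type relation), so that starting at $\eta$ and starting at $\nu$ produce the same ergodic average. I would handle this by noting that both averages equal the same almost sure pathwise limit.
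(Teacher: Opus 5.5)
Your proposal is correct and follows the paper's route. The paper's proof is a one-line appeal to the renewal-reward theorem for Markov chains, applied to the entrance chain together with the sojourn statistics of Subsection \ref{compartment_statistics}. Your version makes explicit what that line leaves implicit: identifying $e_p$ with the long-run entropy per collision via Corollary \ref{corollary_of_main_theorem} and the ergodic theorem, the Palm-type relation that gives $\eta\ll\nu$ (which follows from uniqueness of $\eta$), and the finiteness and integrability conditions needed for the ratio.
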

\begin{proof}
Given the work already done in Subsection \ref{compartment_statistics}, Equation (\ref{theorem_entropy_compartments})
is now a direct consequence of the renewal-reward theorem for Markov chains.
\end{proof}

\subsection{Probability fluxes between compartments}
We introduced in the previous section the {\em entrance Markov chain}, which has transition kernel $\widetilde{P}$ (with the associated transition measures $p_{ij}(dy|x)$) and stationary {\em entrance probability measure} $\eta$. It makes sense to define the stationary {\em probability flux} $\Phi_{ij}$ from compartment $i$ to compartment $j$
as follows. Denote by  $\mathcal{W}_i^-$ the bundle of half-spaces over the boundary of compartment $i$ consisting of velocity vectors pointing
in, by  $\mathcal{W}^-$  the disjoint union of the $\mathcal{W}_i^-$, and by $\mathcal{W}_{ij}^-$ the subset of $\mathcal{W}_j^-$ given by the  intersection $\mathcal{W}_i^+\cap\mathcal{W}_j^-$. Write $\eta_i:=\eta|_{\mathcal{W}_i^-}$. Then
\begin{equation}\label{probability_flux_definition}
\Phi_{ij}:=\int_{\mathcal{W}_i^-}\eta_i(dx) p_{ij}\left(\mathcal{W}_{ij}^-| x\right) =\int_{\mathcal{W}_i^-}\eta(dx) p_{ij}(\mathcal{W}^-|x).
\end{equation}
It is an immediate consequence of stationarity that
\begin{equation*}%\label{probability_flux_balance}
\sum_{j}\Phi_{ij} =\sum_j\Phi_{ji}.
\end{equation*}
Thus the total probability flux out of compartment $i$ equals the total flux into it.

As an example, consider $q\geq 3$ compartments, labeled $0, 1, \dots, ,q-1$, arranged in cyclic order, so  that
compartment $i$ only has intersecting  boundary with compartments $i-1$ and $i+1$ modulo $q$.  Then, for each $i$,
\begin{equation*}
\Phi_{i\, i+1}+\Phi_{i\, i-1}= \Phi_{i+1\, i}+\Phi_{i-1\, i}.
\end{equation*}
Thus the probability net flux through the common boundary between any two adjacent compartments is the same for all $i$:
\begin{equation}\label{probability_flux_balance_cycle}
J:=\Phi_{i-1\, i}-\Phi_{i\, i-1}=\Phi_{i\, i+1}-\Phi_{i+1\, i}.
\end{equation}
A non-zero $J$ indicates particle circulation,  in the stationary regime, around the cyclic system of compartments. The circulation per unit time is
obtained by dividing this quantity by the expected sojourn time inside a compartment (the expected time between two consecutive transitions). 

\section{Examples of compartmented systems}\label{sec:examples}
We consider a few examples of systems with a small number of compartments ($1$, $2$, and $3$) and the generalized Maxwell-Smoluchowski random scattering operators. In the simple models considered here,
position coordinate  is reduced to a discrete wall  index while velocities are tracked in detail. Due to Theorem \ref{entropy_multicompartment},
which reduces entropy production of multicompartment systems to the contribution of individual compartments, we give particular attention to the open single compartment case.

\subsection{Simple one compartment examples}
Let us illustrate the above ideas for the  single compartment example of Figure \ref{one_compartment_simple}. 

 \begin{figure}[ht]
\begin{center}
%\epsfile{file=bundle.eps,scale=0.8}
\includegraphics[width=2.5in]{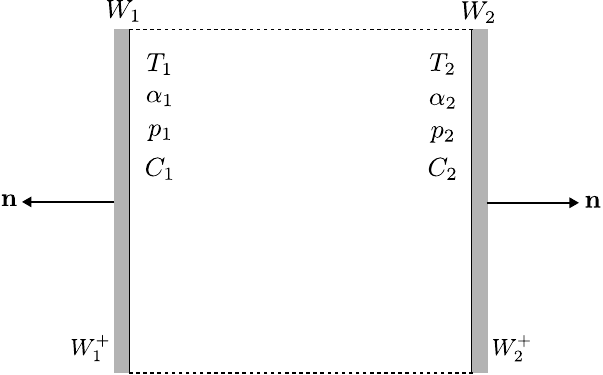}\ \ 
\caption{{\small Single compartment bounded by two plates. It is open if at least one of $p_1$ and $p_2$ is positive.  We use the Maxwell-Smoluchowski model and assume that  wall parameters  are constants along each wall; thus
 position will be indicated simply by the wall index $i=1,2$.  }}
\label{one_compartment_simple}
\end{center}
\end{figure} 

We characterize the two  walls   by the Maxwell-Smoluchowski
model with parameters $p_i, \alpha_i, T_i, C_i$, $i=1,2$.  
Wall orientation is defined by unit normal vectors 
pointing out (Figure \ref{one_compartment_simple}). The notation $\overline{1}=2$ and $\overline{2}=1$ will be used.  Under the natural  identification $W^-_i\cong W_{\overline{i}}^+$
the action of   $\mathcal{T}$ on vectors is the identity map, so $\mathcal{T}(i,v)=(\overline{i},v)$. 
It will be occasionally convenient to identify all half-spaces with 
\begin{equation*}\mathbb{R}^n_+=\{v=(v_1, \dots, v_n)\in \mathbb{R}^n:v_n>0\}.
\end{equation*}
In particular, the specular reflection $\text{ref}(v)=\overline{v}$ of $v$ on the tangent space at  a wall point
may also be regarded as the identity map and   the upper sign may be dropped from the symbols $\mathsf{T}_i^\epsilon$ and $\mathsf{R}_i^\epsilon$. For clarity, this latter identification (manifested in dropping the $\pm$ sign) will be used, initially,  only when referring to the Maxwellian   measures $\mu_i:=\mu_{T_i}$.  We thus have
\begin{equation*}
 \mathcal{N}^{\text{o}}=W_1^-\sqcup W_1^+\sqcup W_2^-\sqcup W_2^+, \ \ 
 W_1^-\cong W_2^+\cong H^+, \ \ W_2^-\cong W_1^+\cong H^-
\end{equation*}
 where $H^+=\mathbb{R}^n_+$ and $H^-=\text{ref}\,H^+$. 

 \begin{figure}[ht]
\begin{center}
%\epsfile{file=bundle.eps,scale=0.8}
\includegraphics[width=3.0in]{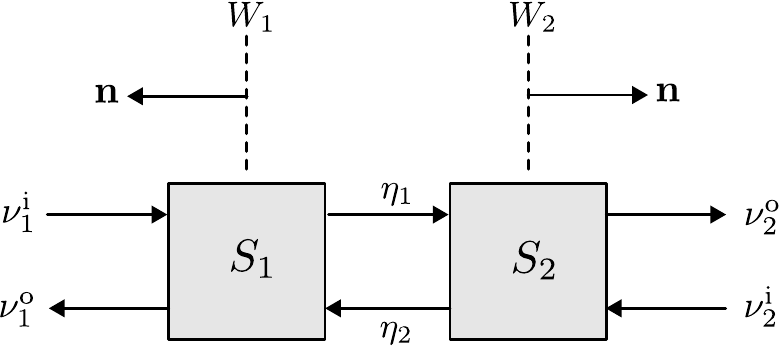}\ \ 
\caption{{\small  Diagram showing the subprobability (stationary) velocity distribution measures.}}
\label{one_compartment_flow_diagram}
\end{center}
\end{figure}

As in Subsection \ref{redheffer} we obtain
\begin{equation}\label{in_out_equation_example}
\begin{aligned}
\nu_1^\text{o}&=\nu_1^{\text{i}}\left[S_1^{-+}+S_1^{--}S_2^{+-}A_{12}S_1^{++}\right] + \nu_2^{\text{i}}S_2^{--}A_{12}S_1^{++}\\
\nu_2^\text{o}&= \nu_1^{\text{i}}S_1^{--}A_{21}S_2^{++}+\nu_2^{\text{i}}\left[S_2^{-+}+S_2^{--}S_1^{+-}A_{21}S_2^{++}\right] ,
\end{aligned}
\end{equation}
in which 
\begin{equation}\label{equation_A12}
A_{12}:=\left(I-S_1^{+-}S_2^{+-}\right)^{-1}, \ \ \ A_{21}=\left(I-S_2^{+-}S_1^{+-}\right)^{-1}.
\end{equation}
We also have
\begin{equation*}
\eta_1=\left(\nu_1^{\text{i}}S_1^{--} +\nu_2^{\text{i}} S_2^{--}S_1^{+-}\right)A_{21}, \ \ \eta_2=\left(\nu_1^{\text{i}}S_1^{--}S_2^{+-} +\nu_2^{\text{i}}S_2^{--}\right)A_{12}.
\end{equation*}

\subsubsection{Open compartment scattering operator}\label{sec:openCompartment}
In this example, we assume $C_1=C_2=0$,  $\alpha_1=\alpha_2=1$, $p_1=p_2=p$, but possibly different temperatures $T_1, T_2$. (The assumptions that $\alpha_i=1$ and $C_i=0$ greatly simplify the  calculations. We will see the effect of having accommodation coefficients 
less than $1$ and positive potential in other examples below.)
Let us obtain the compartment operator $S=S_1\star S_2$. (Keep in mind the difference in orientation of $W_1$ and  $W_2$ compared to
subsection \ref{redheffer}.) 
The operators $S^{\epsilon_1\epsilon_2}_i$ are
\begin{equation*}
(S^{\epsilon\epsilon}_if)(v) = p\mu_i\left(f|_{W_i^\epsilon}\right), \ \ (S^{\epsilon\overline{\epsilon}}_i f)(v)=(1-p)\mu_i\left(f|_{W_i^{\overline{\epsilon}}}\right),
\end{equation*}
thus constant in $v\in W_i^{\epsilon}$.
One easily obtains
\begin{equation*}
A_{12} = I +\frac{(1-p)^2}{1-(1-p)^2}\mu_2, \ \  A_{21} = I +\frac{(1-p)^2}{1-(1-p)^2}\mu_1.
\end{equation*}
Here the Maxwellians $\mu_i$ are viewed as operators (linear functionals) acting on test functions. The action of
$\mu_i$ on measures, from the right, is $\nu \mu_i= r\mu_i$, where $r$ is the total measure, $\nu(1)$,  of (the subprobability measure) $\nu$. 
The explicit form of (\ref{in_out_equation_example}) is
\begin{equation*}%\label{in_out_equation_153}
\begin{aligned}
\nu_1^\text{o}&=\left[\frac{2(1-p)}{2-p}r_1^\text{i}+\frac{p}{2-p}r_2^\text{i}\right]\mu_1\\
\nu_2^{\text{o}}&=\left[\frac{p}{2-p}r_1^{\text{i}} + \frac{2(1-p)}{2-p}r_2^{\text{i}}\right]\mu_2.
\end{aligned}
\end{equation*}
Note that the general relation \begin{equation*}\nu_1^{\text{o}}(1) +\nu_2^{\text{o}}(1)=\nu_1^{\text{i}}(1) +\nu_2^{\text{i}}(1)
\end{equation*} holds and that
\begin{equation*}\lim_{p\rightarrow 0}\nu_j^\text{o}=r_j^{\text{i}}\mu_j, \ \ \lim_{p\rightarrow 1}\nu_j^{\text{o}}=r^{\text{i}}_{\overline{j}}\mu_j.\end{equation*}

The internal measures $\eta_1, \eta_2$ are given by
\begin{equation*}
\begin{aligned}
\eta_1&=\left(\nu_1^{\text{i}}S_1^{--}+\nu_2^{\text{i}} S_2^{--} S_1^{+-} \right)A_{21}=\left( \frac{1}{2-p} r_1^{\text{i}} + \frac{1-p}{2-p}r_2^\text{i}\right) \mu_1\\
\eta_2&=\left(\nu_1^{\text{i}} S_1^{--} S_2^{+-} +\nu_2^{\text{i}}S_2^{--}\right)A_{12}=
\left(\frac{1-p}{2-p}r_1^\text{i} +\frac{1}{2-p} r_2^{\text{i}}\right) \mu_2,
\end{aligned}
\end{equation*}
where $r_j^{\text{i}}=\nu_j^{\text{i}}(1)$.

Let us  find the expected number of collisions, expected total time,  and expected total entropy produced during the residence time inside the open compartment. 
For this, we use the results from Subsection \ref{compartment_statistics} and the operator $Q=\mathcal{T} S^{+-}$. In the present example, 
as $\mathcal{T}(i,v)=\left(\overline{i},v\right)$, (recall that $\overline{1}=2$, $\overline{2}=1$),
\begin{equation}\label{Q_equation}
(Qf)(i,v)= S^{+-}_{\overline{i}}\left(\left.f|_{W^-_{\overline{i}}}\right| v\right).
\end{equation}
Finding the resolvent $(I-Q)^{-1}$ requires solving for $f$ the linear equation $(I-Q)f=g$. The solution is
\begin{equation}\label{resolved_applied_to_g}
\left[(I-Q)^{-1}g\right](i,v)=g(i,v) + \frac{1-p}{p(2-p)}\left[\mu_{\overline{i}}(g) + (1-p)\mu_i(g)\right].
\end{equation}
The expected number of collisions, $N(i,v)$, during the residence time in the compartment, with initial condition $(i,v)$, is
$(I-Q)^{-1}1$. Thus
\begin{equation*}
\mathbb{E}_{(i,v)}[N]=1+\frac{1-p}{p}=\frac1p,
\end{equation*}
which is independent of the initial condition.

The expected total time during residence in the compartment is $(I-Q)^{-1}t$, where $t(i,v)=-{\ell}{v\cdot\mathbf{n}_{i}}$, $v\in W_i^-$.
Here $\ell$ is the distance separating $W_1$ and $W_2$. This expected value of $t$, obtained using Corollary \ref{corollary_residence_time},  is
\begin{equation*}
\mathbb{E}_{(i,v)}[t]=-\frac{\ell}{v\cdot \mathbf{n}_i} +\frac{1-p}{p(2-p)}\ell \pi \left[\sqrt{\frac{m}{2\pi \kappa T_{\overline{i}}}} +(1-p)
\sqrt{\frac{m}{2\pi \kappa T_{{i}}}} \right],
\end{equation*}
in which $E_0(v)=\frac12 m|v|^2$. This value naturally goes to infinity as $p$ approaches $0$, and decreases when the temperatures 
increase.

The expected total entropy produced during residence in the compartment is given in Proposition \ref{total_residence_entropy}. For the present example,
we first need the functions giving the expected entropy produced at one collision:
\begin{equation*}
\begin{aligned}
h(i,v)&=(1-p) \int_{W_{\overline{i}}^+} \frac{E_0(v)-E_0(u)}{\kappa T_{\overline{i}}}\, d\mu_{\overline{i}}(u)\\
&=(1-p)\left[\frac{\frac12m|v|^2}{\kappa T_{\overline{i}}}-\frac2{\pi^{\frac{n-1}{2}}}\int_{\mathbb{R}^n_+} |x|^2 x_n e^{-|x|^2}\, dx\right]\\
&=(1-p)\left[\frac{\frac12m|v|^2}{\kappa T_{\overline{i}}}-\frac{1}{\pi^{\frac{n-1}{2}}}\int_{\mathbb{R}^{n-1}}\left(1+ |x|^2\right)  e^{-|x|^2}\, dx\right]\\
&=(1-p)\left[\frac{\frac12m|v|^2}{\kappa T_{\overline{i}}}-\frac{n+1}{2}\right],
\end{aligned}
\end{equation*}
and
\begin{equation*}
g(i,v)= p\int_{\mathcal{W}^+_{\overline{i}}}\frac{E_0(v)-E_0(u)}{\kappa T_{\overline{i}}}\, d\mu_{\overline{i}}(u)
=p\left[\frac{\frac12 m|v|^2}{\kappa T_{\overline{i}}}-\frac{n+1}{2}\right].
\end{equation*}
These equations are natural having in mind energy equipartition.   
 (One should bear in mind the difference
between the energy distribution on the surface and in bulk. For the latter, one has $n$ times  $\frac12 \kappa T_{\overline{i}}$ whereas,
for the former, the  normal component of the velocity contributes twice as the other components.)
By Proposition \ref{total_residence_entropy}, the expected total entropy produced during the residence time in the compartment is
\begin{equation*}
\mathbb{E}_{(i,v)}[S]=\left[(I-Q)^{-1} (h+g)\right](i,v).
\end{equation*}
We use again Equation (\ref{resolved_applied_to_g}), keeping in mind that $
\mu_i\left(E_0\right)=\frac{n+1}{2}\kappa T_i$.
 (In  (\ref{resolved_applied_to_g})  $g$ is a general function). 
We obtain
\begin{equation*}
\mu_j(h+g)=\frac{n+1}{2}\left(\frac{T_j}{T_{\overline{j}}} -1\right)
\end{equation*}
and 
\begin{equation*}
\mathbb{E}_{(i,v)}[S]=\frac{E_0(v)}{\kappa T_{\overline{i}}}-\frac{n+1}{2} +\frac{1-p}{p(2-p)}\frac{n+1}{2}\left[\frac{T_{\overline{i}}}{T_i} +(1-p)\frac{T_i}{T_{\overline{i}}} -(2-p)\right].
\end{equation*}
The second term in this expression vanishes when the two temperatures are equal. In this case, the first term
accounts for the entropy produced as the possibly  out-of-equilibrium initial velocity relaxes towards equilibrium in the isothermal case. 
Also observe that the expected entropy produced per collision obtained by dividing the just obtained expected total entropy  by the expected number of
collisions, $N(i,v)=1/p$, in the limit as $p$ approaches $0$    (closed compartment limit) is
\begin{equation}\label{entropy_production_per_collision_open}
\lim_{p\rightarrow 0}\frac{\mathbb{E}_{(i,v)}[S]}{\mathbb{E}_{(i,v)}[N]}= \frac{n+1}{4}\frac{(T_1-T_2)^2}{T_1T_2}.
\end{equation}

 \subsubsection{Effect of accommodation coefficients in a closed compartment}
 We consider now the single compartment system of Figure \ref{one_compartment_simple}, but suppose $p_1=p_2=0$ (a closed compartment)
 and  $0<\alpha:=\alpha_1=\alpha_2 \leq 1$. The stationary measure $\nu^\text{o}=(\nu_1^\text{o},\nu_2^\text{o})$ satisfies
 the system
 \begin{equation}\label{equations_nu_one_compartment}
 \nu_{{i}}^\text{o}=\nu_{\overline{i}}^\text{o} S_{i}^{+-}, \ \ i=1, 2.
 \end{equation}
 where 
 \begin{equation}\label{example_S+-i}
 \left(S_i^{+-}f\right)(v)=\alpha \mu_i(f) + (1-\alpha)f(\overline{v})
 \end{equation}
 on a test function $f$.
Recall that $\overline{v}$ is the mirror reflection of $v$. By Proposition \ref{proposition_balance} (or as an immediate consequence of Equations (\ref{equations_nu_one_compartment}) and the explicit form of $S_i^{+-}$),
\begin{equation*}
r_1^\text{o}:= \nu_1^\text{o}(W_1^-)=\nu_2^{\text{o}}(W_2^-)=: r_2^\text{o}.
\end{equation*}
So $r_i^\text{o}=1/2$ and 
\begin{equation}\label{equations_nu_o}
\begin{aligned}
\nu_1^{\text{o}} &=\frac12\alpha \mu_1 +(1-\alpha)\text{ref}_*\nu_2^\text{o}\\
\nu_2^{\text{o}} &=\frac12 \alpha \mu_2 +(1-\alpha)\text{ref}_*\nu_1^\text{o}.
\end{aligned}
\end{equation}
Writing $\mu=(\mu_1,\mu_2)$ and 
\begin{equation*}
A:=\frac{\alpha}2\left(\begin{array}{cc}I & 0 \\ 0 & I\end{array}\right), \ \ B:=(1-\alpha)\left(\begin{array}{cc}0 & \text{ref} \\ \text{ref} & 0\end{array}\right),
\end{equation*}
where $\text{ref}$ indicates wall reflection, then Equations (\ref{equations_nu_o}) may be written as
$
\nu^{\text{o}} = \mu A +\nu^{\text{o}} B.
$
This is easily solved for $\nu^\text{o}$:
\begin{equation*}
\nu^{\text{o}} = \mu A (I-B)^{-1}=\frac{1}{2(2-\alpha)}\mu \left(\begin{array}{cc} I & (1-\alpha)\text{ref} \\ (1-\alpha)\text{ref} & I\end{array}\right).
\end{equation*}
Using the same notation $\mu_i$ for the Maxwellian on either side of the wall, $\text{ref}_*\mu_i=\mu_i$, we obtain
\begin{equation}\label{expression_nuo1_nuo2}
(\nu^\text{o}_1,\nu^\text{o}_2)=\left( \frac{\mu_1+(1-\alpha)\mu_2}{2(2-\alpha)},\frac{(1-\alpha)\mu_1+\mu_2}{2(2-\alpha)}\right).
\end{equation} 
Naturally, as $\alpha$ approaches $1$, the stationary probability measure approaches $\nu^\text{o}_i=\frac12\mu_i$.

The entropy production rate for this system may be calculated from the equation in  Corollary \ref{corollary_of_main_theorem}. Noting  that $\nu^\text{i}_j=\nu^{\text{o}}_{\overline{j}}$, that equation is
\begin{equation*}
\begin{aligned}
e_p&=\sum_{j=1, 2}\int_{W_{{j}}^+\times W_j^-} \frac{E_0(u)-E_0(v)}{\kappa T_j} d\nu^{\text{o}}_{\overline{j}}(u) S_j^{+-}(dv|u)\\
&=\sum_{j=1,2}\frac1{\kappa T_j}\int_{W^+_j}\left[\int_{W^-_{{j}}}\left(E_0(u)-E_0(v)\right) S_j^{+-}(dv|u)\right]\, d\nu^{\text{o}}_{\overline{j}}(u)\\
&=\sum_{j=1,2}\frac1{\kappa T_j}\int_{W^+_j}\left[E_0(u)-S_j^{+-}(E_0|u)\right]\, d\nu_{\overline{j}}^\text{o}(u).
\end{aligned}
\end{equation*}
Using Equation (\ref{example_S+-i}),
\begin{equation*}
S_j^{+-}(E_0|u)=\alpha \mu_j(E_0) + (1-\alpha)E_0(u).
\end{equation*}
Using (see Equation (\ref{expression_nuo1_nuo2})) that $\nu_j^\text{o}=\frac{\mu_j+(1-\alpha)\mu_{\overline{j}}}{2(2-\alpha)}$, 
\begin{equation*}
\begin{aligned}
e_p&=\sum_{j=1,2}\frac\alpha{\kappa T_j}\int_{W_j^+}\left(E_0(u)-\mu_j(E_0)\right)\, d\nu^\text{o}_{\overline{j}}(u)\\
&= \sum_{j=1,2}\frac\alpha{\kappa T_j}\left(\nu_{\overline{j}}^{\text{o}}(E_0)-\frac12\mu_j(E_0)\right)\\
&= \sum_{j=1,2}\frac\alpha{2(2-\alpha)}\frac{\mu_{\overline{j}}(E_0)-\mu_j(E_0)}{\kappa T_j}\\
&=\frac{\alpha}{2(2-\alpha)}\left(\mu_{2}(E_0)-\mu_1(E_0)\right)\left(\frac1{\kappa T_1}-\frac1{\kappa T_2}\right).
\end{aligned}
\end{equation*}
We already know (from previous calculation) that $\mu_j(E_0)=\frac{n+1}{2}\kappa T_j$. Thus
\begin{equation*}
e_p= \frac{\alpha}{2-\alpha}\frac{n+1}{4}\left(T_2-T_1\right)\left(\frac1{T_1}-\frac1{T_2}\right)=\frac{\alpha}{2-\alpha}\frac{n+1}{4}\frac{\left(T_2-T_1\right)^2}{T_1T_2}.
\end{equation*}
This should be compared with Equation (\ref{entropy_production_per_collision_open}) in the limit of $\alpha$ approaching $1$. Naturally,
entropy production decreases as $\alpha$ approaches $0$.
\subsubsection{Effect of potential $C$}\label{sec:potentialC}
This example is another special case of the configuration shown in  Figure \ref{one_compartment_simple}. It will be used in the two and three compartment examples below. Wall $W_1$ has temperature $T$, accommodation coefficient $\alpha=1$, 
transmission coefficient  $p$, and potential $C=0$. Wall $W_2$ is a pure potential barrier with potential parameter $C$  ($p=1$ and  no assignment of $T$ and $\alpha$).
The random operators are 
\begin{equation*}
\begin{aligned}
S_1^{--}(f|v)&= p\mu(f), & S_2^{--}(f|v)&= \mathbbm{1}_{\mathsf{T}}(v)f(v) \\
S_1^{-+}(f|v)&=(1-p)\mu(f), & S_2^{-+}(f|v)&=\mathbbm{1}_{\mathsf{R}}(v)f(\overline{v})\\
S_1^{+-}(f|v)&=(1-p)\mu(f), & S_2^{+-}(f|v)&=\mathbbm{1}_{\mathsf{R}}(v)f(\overline{v})\\
S_1^{++}(f|v)&=p\mu(f), & S_2^{++}(f|v)&=\mathbbm{1}_{\mathsf{T}}(v)f(v).
\end{aligned}
\end{equation*}
Also recall the expression for $Q$ given in Equation (\ref{Q_equation}). Setting $f_{i}=f|_{W_i^-}$,  
and identifying $W_i^-=W_{\overline{i}}^+$, we have
\begin{equation*}
\left[(I-Q)f\right](i,v)=\begin{cases}
f_{1}(v)-\mathbbm{1}_{\mathsf{R}}(v) f_{2}(\overline{v}) & \text{ for }  i=1 \text{ and } v\in W_1^-,\\
f_{2}(v)-(1-p)\mu(f_{1})  & \text{ for } i=2  \text{ and }  v\in W_2^-. 
\end{cases}
\end{equation*}
Here $\mu=\mu_T$ is the Maxwellian at temperature $T$.  The resolvent is
\begin{equation*}
[(I-Q)^{-1}\varphi](i,v)=\varphi_i(v) + \begin{cases}
\mathbbm{1}_{R}(v)\left[\varphi_2(\overline{v}) + \frac{(1-p)(\mu(\varphi_1)+\mu(\mathbbm{1}_{\mathsf{R}}\varphi_2))}{1-(1-p)\mu(\mathsf{R})}\right]& \text{ if } i=1,\\[2ex]
  \frac{(1-p)(\mu(\varphi_1)+\mu(\mathbbm{1}_{\mathsf{R}}\varphi_2))}{1-(1-p)\mu(\mathsf{R})} & \text{ if } i=2.
\end{cases}
\end{equation*}
(For convenience of notation, the reflection of  $\mathsf{R}$ is denoted by this same symbol.)  An elementary calculation gives
\begin{equation*}
\mu(\mathsf{T})=\exp\left(-\frac{C}{\kappa T}\right),
\end{equation*}
and $\mu(\mathsf{R})=1-\mu(\mathsf{T})$. The expected number of collisions for the initial condition $(i,v)$ is (setting $\varphi=1$)
\begin{equation*}
\mathbb{E}_{(i,v)}[N]=1 + \begin{cases}
 \frac{2-p}{1-(1-p)\mu(\mathsf{R})}\mathbbm{1}_{\mathsf{R}}(v)& \text{ if } i=1,\\[2ex]
\frac{(1-p)(1+\mu(\mathsf{R}))}{1-(1-p)\mu(\mathsf{R})} & \text{ if } i=2.
\end{cases}
\end{equation*}

The function $\varphi(i,v):=(h+g)(i,v)$ (defined in  (\ref{definition_h})) is
\begin{equation*}
\varphi(i,v)= \begin{cases}
0 & \text{ if } i=1\\
\frac{1}{\kappa T}\left(E_0(v) - \mu(E_0)\right) & \text{ if } i=2.
\end{cases}
\end{equation*}
 Recall that $\mu\left(E_0\right)=\frac{n+1}{2}\kappa T$.
The expected entropy produced during the stay in the compartment is obtained from
\begin{equation*}
\left[(I-Q)^{-1}(\varphi)\right](i,v)=\varphi_i(v) +\begin{cases}
\mathbbm{1}_{\mathsf{R}}(v) \left[\varphi_2(\overline{v})+\frac{(1-p)\mu(\mathbbm{1}_{\mathsf{R}}\varphi_2)}{1-(1-p)\mu(\mathsf{R})}\right]&\text{ if } i=1\\[2ex]
\frac{(1-p)\mu(\mathbbm{1}_{\mathsf{R}}\varphi_2)}{1-(1-p)\mu(\mathsf{R})} & \text{ if } i=2,
\end{cases}
\end{equation*}
where $v\in W_i^-$ is the initial velocity.
Recall that the transmission set $\mathsf{T}$ (respectively, $\mathsf{R}$) is defined as the set of states for which the normal component of the velocity, $v_n$, is greater than (respectively, smaller than)
$\gamma:=\sqrt{2C/m}$. 
To simplify the notation, let us introduce
\begin{equation*}
D:=1-(1-p)\mu(\mathsf{R})=p+(1-p)\exp\left(-\frac{C}{\kappa T}\right)
\end{equation*}
and a simple calculation gives
\begin{equation}\label{equation_A}
A:=\mu(\mathbbm{1}_{\mathsf{R}} E_0)-\mu(\mathsf{R})\mu(E_0)=-C\exp\left(-\frac{C}{\kappa T}\right).
\end{equation}
We can now write  
\begin{equation}\label{entropy_equation_1}
\mathbb{E}_{(1,v)}[S] = \begin{cases} \varphi_2(\overline{v}) + \dfrac{1-p}{\kappa T}\,\dfrac{A}{D}, & \text{if } v_n < \gamma,\\[6pt]
0, & \text{if } v_n > \gamma.
\end{cases}
\end{equation}
and  
\begin{equation}\label{entropy_equation_2}
\mathbb{E}_{(2,v)}[S]=\varphi_2(v) +\frac{1-p}{\kappa T} \frac{A}{D}.
\end{equation}
Note that if $C=0$, then $\gamma=0$, $\mathsf{R}$ is empty, $A=0$, and the extra term $\mathbb{E}_{(i,v)}[S]-\varphi_2(v)$ disappears. The particle always exits through wall $W_2$ on its first arrival, so only the initial thermal relaxation contributes to entropy production.  If $C>0$, the barrier reflects slow particles; after the first thermalization with wall $W_1$, the velocity is drawn from $\mu$. With probability $\mu(\mathsf{T})=e^{-C/\kappa T}$, the particle escapes and, with probability $\mu(\mathsf{R})$, it returns to wall $W_1$ carrying a biased energy distribution. The total entropy is the sum over this geometric series of returns, giving rise to the extra term on the right-hand side of Equations (\ref{entropy_equation_1}) and (\ref{entropy_equation_2}), despite this being an isothermal compartment.

We record here, for use in the next  section, the components of     the compartment scattering operator.  See Equations (\ref{in_out_equation_example}) and (\ref{equation_A12}). 
\begin{equation*}%\label{in_out_equation_example_2}
\begin{aligned}
\nu_1^\text{o}&=\left[(1-p)\nu_1^\text{i}(H^+) +\frac{p\nu_2^{\text{i}}(\mathsf{T}) +p^2\nu_1^\text{i}(H^+)\mu(\mathsf{R})}{D}\right]\mu\\[1ex]
\nu_2^\text{o}&= \text{ref}_*\nu_2^{\text{i}}|_{\mathsf{R}}+\frac{p\nu_1^{\text{i}}(H^+)+(1-p)\nu_2^{\text{i}}(\mathsf{T})}{D}\mu|_{\mathsf{T}}\\[1ex]
\eta_1&=\frac{p\nu_1^{\text{i}}(H^+) + (1-p)\nu_2^{\text{i}}(\mathsf{T})}{D}\mu\\
\eta_2&=\nu_2^{\text{i}}|_{\mathsf{T}} +\frac{p\nu_1^{\text{i}}(H^+) + (1-p)\nu_2^{\text{i}}(\mathsf{T})}{D} \mu|_\mathsf{R}.
\end{aligned}
\end{equation*}
Here we are using the identification of all half-spaces with $H^+$ as well as the  notation $\mu|_{\mathsf{R}}(f)=\mu(\mathbbm{1}_{\mathsf{R}}f).$

We also record for later use the expected sojourn time in the compartment, which is obtained from the resolvent applied to the function 
\begin{equation*}
\varphi(i,v)=t(i,v)=\frac{\ell}{|v\cdot\mathbf{n}_i|}.
\end{equation*}
A straightforward calculation gives the value
\begin{equation}\label{expected_sojourn_time_C}
\begin{aligned}
\mathbb{E}_{(1,v)}[t]&=\left(1+\mathbbm{1}_{\mathsf{R}}(v)\right)\frac{\ell}{|v\cdot\mathbf{n}_1|} + \mathbbm{1}_{\mathsf{R}}(v)  \frac{1-p}{D}\ell\sqrt{\frac{\pi m}{2\kappa T}}\left[1+\text{erf}\left(\sqrt{\frac{C}{\kappa T}}\right)\right]\\
\mathbb{E}_{(2,v)}[t]&=\frac{\ell}{|v\cdot \mathbf{n}_2|} + \frac{1-p}{D}\ell\sqrt{\frac{\pi m}{2\kappa T}}\left[1+\text{erf}\left(\sqrt{\frac{C}{\kappa T}}\right)\right]\,
\end{aligned}
\end{equation}
in which $\text{erf}(x)=\frac2{\sqrt{\pi}}\int_0^x e^{-t^2} dt$ is the error function. 
\subsection{A two-compartment  example} 
In this example of a two-compartment system, compartment  $i\in \{1,2\}$  is bounded by two walls: one wall, $W_i$, is  reflecting ($p=0$), with temperature
$T_i$, accommodation coefficient $\alpha=1$, and zero potential value; the other, $W_0$, shared by the two compartments, is a pure potential barrier with potential $C\geq 0$. See Figure \ref{two_compartments_new}.

To find the entropy production rate for this system we use Theorem \ref{entropy_multicompartment} and the single compartment results from the previous section. The expected number of wall collisions and the expected total entropy produced
 during the whole sojourn in compartment $i$ are
\begin{equation*}
\mathbb{E}_{(i,v)}[N] =2 \exp\left({\frac{C}{\kappa T_i}}\right), \ \ \mathbb{E}_{(i,v)}[S]=\frac{E_0(v)-\frac{n+1}{2}\kappa T_i-C}{\kappa T_i}.
\end{equation*}
The expected sojourn time is obtained from  Equation (\ref{expected_sojourn_time_C}) by setting $p=0$.
These expected quantities must be computed using the stationary exit distribution from a compartment and into the other. 
Since the thermal wall is fully accommodating ($\alpha_i=1$), the outgoing velocity when the particle finally 
escapes through the potential barrier is distributed as the Maxwellian $\mu_i$ conditioned on the transmission set $\mathsf{T}$. Thus
\begin{equation*}
\nu_{\text{exit},i} =\frac{1}{\mu_i(\mathsf{T})} \mu_i|_{\mathsf{T}}, \ \ \mu_i(\mathsf{T})=\exp\left(-\frac{C}{\kappa T_i}\right).
\end{equation*}

\begin{figure}[ht]
\begin{center}
\includegraphics[width=5in]{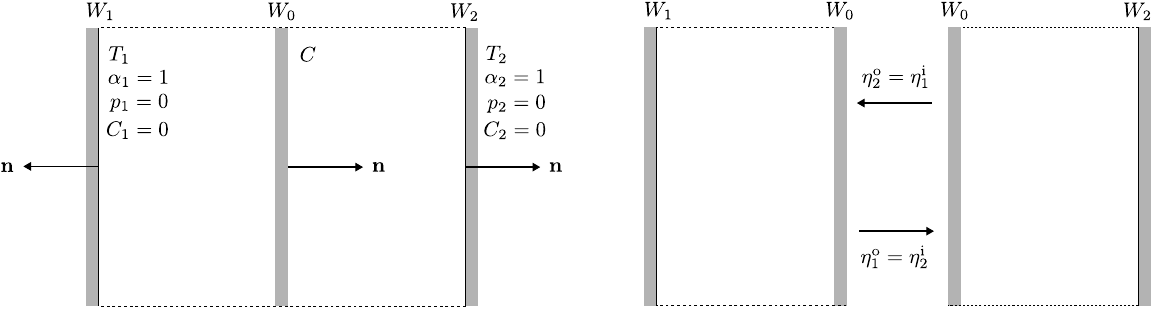}\ \ 
\caption{{\small  We analyze the closed two-compartment system on the left by decomposing it into two communicating
open compartments, using results already obtained for single compartment systems.}}
\label{two_compartments_new}
\end{center}
\end{figure} 

We also need the stationary entrance measure $\eta$. Since the particle trajectory alternates deterministically between the two compartments, the stationary weight of each compartment is $\frac12$, and the velocity distribution when entering compartment $j$ is the
exit distribution of the other compartment. Thus
\begin{equation*}
d\eta^{\text{i}}_j(v)=\frac12 \frac{1}{\mu_{\overline{j}}(\mathsf{T})} \mathbbm{1}_\mathsf{T}(v) \, d\mu_{\overline{j}}(v), \ \ j=1,2.
\end{equation*}
(Recall the notation: $\overline{1}=2$, $\overline{2}=1$.) Also note that
\begin{equation*}
\int_\mathsf{T} E_0(v) \, d\mu_{\overline{j}}(v)= \mu_{\overline{j}}(\mathsf{T}) \frac{n+1}{2}\kappa T_{\overline{j}} + C\exp\left(-\frac{C}{\kappa T_{\overline{j}}}\right),
\end{equation*}
obtained from  Equation (\ref{equation_A}), and
\begin{equation*}
\int_{\mathsf{T}}\mathbb{E}_{(j,v)}[S]\, d\mu_{\overline{j}}(v)=\mu_{\overline{j}}(\mathsf{T}) \frac{n+1}{2}\frac{T_{\overline{j}}-T_j}{T_j}.
\end{equation*}
Therefore,
according to Theorem \ref{entropy_multicompartment},
\begin{equation*}
e_p = \frac{\sum_{j=1,2}\int_{\mathsf{T}}\mathbb{E}_{(j,v)}[S]\, d\eta_j^{\text{i}}(v)}{\sum_{j=1,2}\int_{\mathsf{T}}\mathbb{E}_{(j,v)}[N]\, d\eta_j^{\text{i}}(v)}=\frac{n+1}{4} \frac{(T_2-T_1)^2}{T_1T_2} \frac{1}{\exp\left(\frac{C}{\kappa T_1}\right)+\exp\left(\frac{C}{\kappa T_2}\right)}.
\end{equation*}
Note that as $C$ tends to $0$, we obtain $e_p=\frac{n+1}{8} \frac{(T_2-T_1)^2}{T_1T_2}$, which is half the value shown in Equation (\ref{entropy_production_per_collision_open}). The reason for the difference is that collisions with the now invisible middle wall are being counted.  Therefore it is more meaningful to look at entropy production per unit time, $\dot{e}_p$. This is easily obtained from results already provided:
\begin{equation*}
\dot{e}_p=\frac{n+1}{2\ell}\sqrt{\frac{\kappa}{2\pi m}} \frac{(T_2-T_1)^2}{T_1T_2}
\left[
\frac{\exp\left(\frac{C}{\kappa T_1}
\right)}{\sqrt{T_1}}
+\frac{\exp\left(\frac{C}{\kappa T_2}\right)}{\sqrt{T_2}}
\right]^{-1}.
\end{equation*}
Note that the expected time spent in compartment $j$ during a sojourn, averaged over the stationary entrance distribution,  and the expected entropy produced per visit, are
\begin{equation*}
T_{\text{average}}=\frac{\ell}{2} \sqrt{\frac{2\pi m}{\kappa}}\left[
\frac{\exp\left(\frac{C}{\kappa T_1}
\right)}{\sqrt{T_1}}
+\frac{\exp\left(\frac{C}{\kappa T_2}\right)}{\sqrt{T_2}}
\right], \ \ S_{\text{average}}=\frac{n+1}{4}\frac{(T_2-T_1)^2}{T_1T_2}.
\end{equation*}

\subsection{A three-compartment  example} 
Our last example is a system consisting of three compartments separated by three walls in cyclic fashion. See Figure \ref{three_compartments_new}.
 \begin{figure}[ht]
\begin{center}
\includegraphics[width=3in]{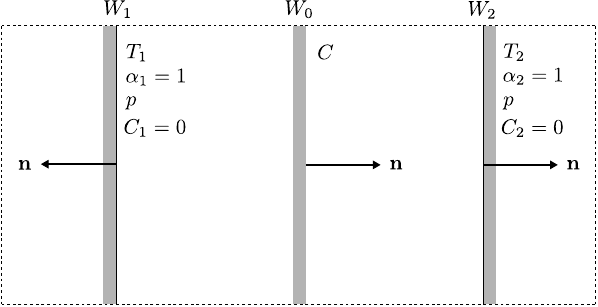}\ \ 
\caption{{\small  In this three-compartment system, the far right and far left parts are connected by a periodic boundary and  form the third compartment of a cyclic configuration. Different from the example of the previous subsection, $p=p_1=p_2$ may be positive.}}
\label{three_compartments_new}
\end{center}
\end{figure} 
Two compartments share  the potential barrier wall $W_0$ and a third is bounded by $W_1$ and $W_2$. 
In this example, entropy production is typically  accompanied  by a  probability flux and circulation.

For the two types of wall we have the following random scattering operators:
\begin{enumerate}
\item Thermal wall with parameters $T_i$, $\alpha$, $p$, and $C=0$:
\begin{equation*}
\begin{aligned}
S_i^{\epsilon \epsilon}(f|v)&=p \left[\alpha \mu_i(f) +(1-\alpha)f(v)\right]\\
S_i^{\epsilon \overline{\epsilon}}(f|v)&=(1-p)\left[\alpha \mu_i(f) +(1-\alpha)f(\overline{v})\right].
\end{aligned}
\end{equation*}
Here $\epsilon\in \{+, -\}$.
\item Potential barrier wall with potential parameter $C$:
\begin{equation*}
\begin{aligned}
S^{\epsilon \epsilon}(f|v)&=\mathbbm{1}_{\mathsf{T}}(v) f(v)\\
S^{\epsilon \overline{\epsilon}}(f|v)&=\mathbbm{1}_{\mathsf{R}}(v) f(\overline{v}).
\end{aligned}
\end{equation*}
\end{enumerate}

\subsubsection{The entrance chain}\label{sec:stationary}
Recall the entrance Markov chain introduced in Section \ref{entropy_composition_section}. Its transition kernel, in the present example,
is given as follows:
\begin{equation*}
\begin{aligned}
p_{ij}(A|(k,w))&:= \text{probability of entering compartment $j$ with  velocity in   $A\subseteq \mathbb{R}^n_+$}\\
                      &\ \ \ \  \ \text{given that the particle starts from wall $k$ of compartment $i$}\\
                       &\ \ \ \ \ \text{with velocity $w$.}
\end{aligned}
\end{equation*}
See Figure \ref{exploded}. We identify all wall subspaces with $\mathbb{R}^n_+$. The stationary probability measure  for
the entrance Markov chain will be denoted
\begin{equation*}
\begin{aligned}
\eta_{i}(j,A)&:=\text{probability that the particle enters compartment $i$ from wall $j$}\\
& \ \ \ \ \ \text{with velocity in $A$.}\\
\end{aligned}
\end{equation*}

\begin{figure}[ht]
\begin{center}
\includegraphics[width=2.5in]{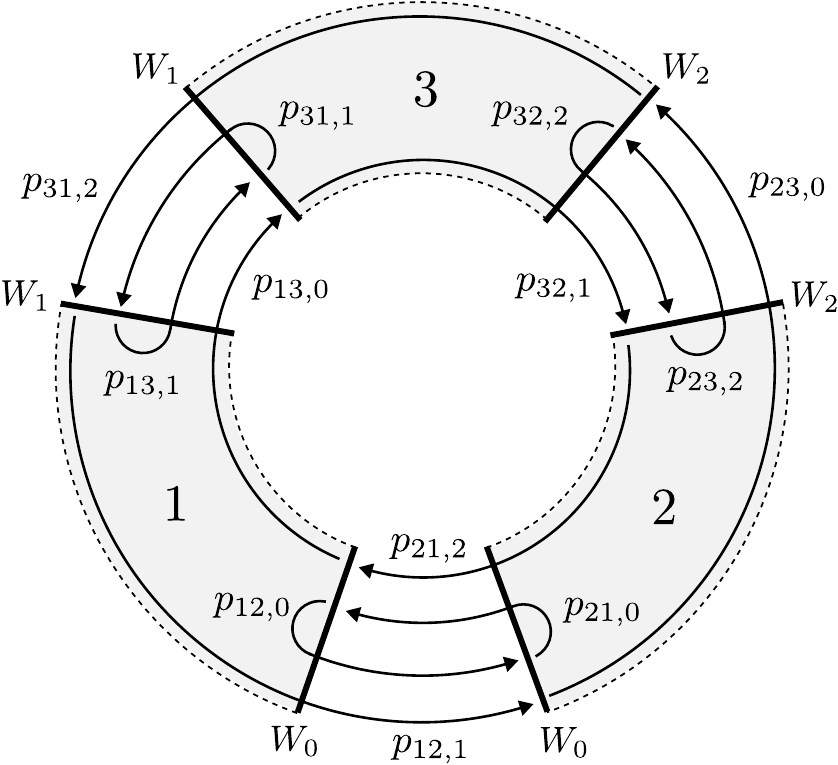}\ \ 
\caption{{\small Transition probabilities of the entrance Markov chain: $p_{ij,k}:=p_{ij}(dv|(k,w))$ is the conditional distribution of entering
compartment $j$ from compartment $i$ given that the particle started from wall $k$ of compartment $i$ with velocity $w$.}}
\label{exploded}
\end{center}
\end{figure} 
 
Define
\begin{equation*}
q_i:=\mu_i(\mathsf{T})=\exp\left(-\frac{C}{\kappa T_i}\right), \ \ a_i:= \frac{p}{1-(1-p)(1-q_i)}, \ \ b_i:=\frac{1-p}{1-(1-p)(1-q_i)},
\end{equation*}
for $i=1,2$, and
\begin{equation*}
a:=\frac{p}{1-(1-p)^2}=\frac1{2-p}, \ \ b:=\frac{(1-p)p}{1-(1-p)^2}=\frac{1-p}{2-p}.
\end{equation*}
  Then the transition  kernel for this example is  
\begin{equation*}
\begin{aligned}
p_{31}(A|(i,w))&=\begin{cases}
b\mu_1(A) &\text{ if } i=1\\[1ex]
a\mu_1(A) &\text{ if } i=2
\end{cases}\\
p_{32}(A|(i,w))&=\begin{cases}
a\mu_2(A) &\text{ if } i=1\\[1ex]
b\mu_2(A) &\text{ if } i=2
\end{cases}\\
p_{13}(A|(i,w))&=\begin{cases}
a_1\mathbbm{1}_{\mathsf{R}}(w)\mu_1(A) &\text{ if } i=1\\[1ex]
a_1 \mu_1(A)  &\text{ if } i=0
\end{cases}\\
p_{23}(A|(i,w))&=\begin{cases}
a_2\mathbbm{1}_{\mathsf{R}}(w)\mu_2(A) &\text{ if } i=2\\[1ex]
a_2 \mu_2(A)  &\text{ if } i=0
\end{cases}\\
p_{12}(A|(i,w))&=\begin{cases}
\mathbbm{1}_{\mathsf{T}\cap A}(w)  +\mathbbm{1}_{\mathsf{R}}(w) {b_1}\mu_1(\mathsf{T}\cap A)&\text{ if } i=1\\[1ex]
{b_1} \mu_1(\mathsf{T}\cap A) &\text{ if } i=0
\end{cases}\\
p_{21}(A|(i,w))&=\begin{cases}
\mathbbm{1}_{\mathsf{T}\cap A}(w)  +\mathbbm{1}_{\mathsf{R}}(w){b_2}\mu_2({\mathsf{T}}\cap A)&\text{ if } i=2\\[1ex]
{b_2} \mu_2(\mathsf{T}\cap A) &\text{ if } i=0
\end{cases}
\end{aligned}
\end{equation*}
for $A\subseteq \mathbb{R}^n_+$.
These values may be obtained by an elementary geometric sum argument. For example, for $p_{12}(A|(1,w))$
assume  the particle starts  in compartment $1$, wall $1$, with velocity $w$, and
 add the probabilities of it crossing wall $W_0$ into compartment $2$, with velocity in $A$, in
\begin{equation*}
\begin{aligned}
\text{ $1$ step: } &\text{ $\mathbbm{1}_{\mathsf{T}\cap A}(w)$;}\\
\text{ $3$ steps: } & \text{ $\mathbbm{1}_{\mathsf{R}}(w) (1-p) \mu_1(\mathsf{T}\cap A)$ (full thermalization on return to $W_1$);}\\
\text{ $5$ steps: } & \text{ $\mathbbm{1}_{\mathsf{R}}(w) (1-p)^2(1-q_1)\mu_1(\mathsf{T}\cap A)$;}\\
\text{ $7$ steps: } & \text{ $\mathbbm{1}_{\mathsf{R}}(w) (1-p)^3(1-q_1)^2\mu_1(\mathsf{T}\cap A)$;}\\
\cdots \ \ \ \ & \ \ \ \   \cdots\\
\text{ sum: } & \mathbbm{1}_{\mathsf{T}\cap A}(w)+ \mathbbm{1}_{\mathsf{R}}(w) (1-p) \left[ 1+ (1-p)(1-q_1) \right.\\
&\ \ \ \ \ \ \ \  \ \ \ \ \ \ \ \ \ \ \ \ \ \ \ \ \ \  \ \ \ \ \  \ \ \ \ \ \ \ \ \ \ \ \ \ \ \ \ \  \ \ \  \left.+ (1-p)^2(1-q_1)^2+\cdots\right] \mu_1(\mathsf{T}\cap A).
\end{aligned}
\end{equation*}
At odd step numbers $2m+1>1$, the particle arrives at $W_0$ fully thermalized with  $W_1$, and it  crosses $W_0$ (with its velocity unchanged) if and only if   the velocity
lies in $\mathsf{T}$. $W_0$ is invisible to the particle on $\mathsf{T}$
and  specularly reflecting on $\mathsf{R}$. Thus the post-crossing velocity distribution is $q_1^{-1}\mu_1|_\mathsf{T}$.  Note the cancellation of $q_1$ in \begin{equation*}
\mathbbm{1}_{\mathsf{R}}(w) (1-p)^{m}(1-q_1)^{m-1}q_1 q_1^{-1}d\mu_1|_{\mathsf{T}}(v),
\end{equation*} in which $(1-p)^{m}(1-q_1)^{m-1}q_1$ is the probability of $m$ diffuse reflections on $W_1$ times  $m$ reflections on $W_0$ times the probability of finally crossing the barrier at collision step $2m+1$.

We now need the stationary probability  measure for the entrance chain. The component  of this measure associated with compartment $i$, wall $j$, satisfies the homogeneous equation
\begin{equation*}
\eta_i(j, A)=\int_{\mathbb{R}^n_+}\eta_{k}(j,dv) p_{ki}(A|(j,v))+\int_{\mathbb{R}^n_+}\eta_{k}(j',dv) p_{ki}(A|(j',v)),
\end{equation*}
where $k$ is the compartment sharing with $i$ wall $W_j$, $j'$ labels the wall of compartment $k$ opposite to $W_j$, and   $A\subset\mathbb{R}^n_+$.  Thus we have six homogeneous equations (here we write $\mathsf{H}=\mathbb{R}^n_+$ for simplicity):
\begin{equation*}
\begin{aligned}
\eta_1(1,A)&=\left[b\eta_3(1,\mathsf{H})+ a\eta_3(2,\mathsf{H})\right]\mu_1(A)\\
\eta_2(2,A)&=\left[a\eta_3(1,\mathsf{H})+ b\eta_3(2,\mathsf{H})\right]\mu_2(A)\\
\eta_1(0,A)&=\eta_2(2,\mathsf{T}\cap A) +{b_2}\left[\eta_2(0,\mathsf{H})+\eta_2(2,\mathsf{R})\right]\mu_2(\mathsf{T}\cap A)\\
\eta_2(0,A)&=\eta_1(1,\mathsf{T}\cap A) +b_1\left[\eta_1(0,\mathsf{H})+\eta_1(1,\mathsf{R})\right]\mu_1(\mathsf{T}\cap A)\\
\eta_3(1,A)&=a_1\left[\eta_1(0,\mathsf{H})+ \eta_1(1,\mathsf{R})\right]\mu_1(A)\\
\eta_3(2,A)&=a_2\left[\eta_2(0,\mathsf{H})+ \eta_2(2,\mathsf{R})\right]\mu_2(A).
\end{aligned}
\end{equation*}
Defining the coefficients  $\eta_{ij}:=\eta_i(j,\mathsf{H})$, we may rewrite these equations as 
\begin{equation*}
\begin{aligned}
\eta_1(1,A)&=\eta_{11}\mu_1(A)\\
\eta_2(2,A)&=\eta_{22}\mu_2(A)\\
\eta_1(0,A)&=\left[(1+b_2(1-q_2))\eta_{22}+b_2\eta_{20}\right]\mu_2(\mathsf{T}\cap A)\\
\eta_2(0,A)&=\left[(1+b_1(1-q_1))\eta_{11}+b_1\eta_{10}\right]\mu_1(\mathsf{T}\cap A)\\
\eta_3(1,A)&=a_1\left[\eta_{10} +(1-q_1)\eta_{11}\right]\mu_1(A)\\
\eta_3(2,A)&=a_2\left[\eta_{20} +(1-q_2)\eta_{22}\right]\mu_2(A).
\end{aligned}
\end{equation*}
Using the previously defined $D_i$,
\begin{equation*}
D_i:=1-(1-p)(1-q_i)=\left[1+b_i(1-q_i)\right]^{-1},
\end{equation*}
we can write the   homogeneous system of linear equations for the $\eta_{ij}$ as
\begin{equation}\label{system_eta}
\begin{aligned}
\eta_{11}&=b\eta_{31}+a\eta_{32}\\
\eta_{22}&=a\eta_{31}+b\eta_{32}\\
%\eta_{10}&=q_2\left[1+b_2(1-q_2)\right]\eta_{22}+b_2q_2\eta_{20}\\
\eta_{10}&=q_2D_2^{-1}\eta_{22}+b_2q_2\eta_{20}\\
%\eta_{20}&=q_1\left[1+b_1(1-q_1)\right]\eta_{11}+b_1q_1\eta_{10}\\
\eta_{20}&=q_1D_1^{-1}\eta_{11}+b_1q_1\eta_{10}\\
\eta_{31}&=a_1\left[\eta_{10} +(1-q_1)\eta_{11}\right]\\
\eta_{32}&=a_2\left[\eta_{20} +(1-q_2)\eta_{22}\right].
\end{aligned}
\end{equation}
To these we add the normalization condition
\begin{equation*}
\eta_{11}+\eta_{22}+\eta_{10}+\eta_{20}+\eta_{31}+\eta_{32}=1.
\end{equation*}
System (\ref{system_eta}) has the following solution, in which the common denominator $d$ is 
\begin{equation*}
d:=4p^2 + p(5-6p)(q_1+q_2) + (4-6p)q_1q_2,
\end{equation*}
with $0<p\leq 1$ and $0<q_i<1$:
\begin{equation*}
\begin{aligned}
\eta_{10}&= \frac{q_2\left[2(1-p) q_1 + p(1-q_1)\right]}{d}, & \eta_{20}&= \frac{q_1\left[2(1-p) q_2 + p(1-q_2)\right]}{d}   \\
\eta_{31}&= \frac{p\left[2(1-p) q_2 + p(1-q_1)\right]}{d},  & \eta_{32}&= \frac{p\left[2(1-p) q_1 + p(1-q_2)\right]}{d}  \\
\eta_{11}&= \frac{p\left[(1-p) q_1 + (1-2p)q_2+p\right]}{d},  & \eta_{22}&= \frac{p\left[(1-p) q_2 + (1-2p)q_1+p\right]}{d}.
\end{aligned}
\end{equation*}
Let us register here the special case having $p=\frac12$. In this case, $d=(1+q_1)(1+q_2)$ and
\begin{equation}\label{eq:etahalf}
\eta_{i0} = \frac{q_{\overline{i}}}{2(1+q_{\overline{i}})}, \ \ \eta_{3i}=\frac{1-q_i +2q_{\overline{i}}}{4(1+q_1)(1+q_2)}, \ \ 
\eta_{ii} = \frac{1}{4(1+q_{\overline{i}})},\ \ i=1,2.
\end{equation}

\subsubsection{Circulation} Using  Definition  \ref{probability_flux_definition} and Equation (\ref{probability_flux_balance_cycle}),
we can write the probability flux circulation for this example.
First note that
\begin{equation*}
\begin{aligned}
\Phi_{12} &=\int_{\mathsf{H}} \eta_1(1,dv)p_{12}(\mathsf{H}|(1,v)) + \int_{\mathsf{H}} \eta_1(0,dv)p_{12}(\mathsf{H}|(0,v))=\frac{q_1}{D_1}\eta_{11} + b_1 q_1 \eta_{10}\\
&=\eta_{20},
\end{aligned}
\end{equation*}
and 
\begin{equation*}
\begin{aligned}
\Phi_{21} &=\int_{\mathsf{H}} \eta_2(2,dv)p_{21}(\mathsf{H}|(2,v)) + \int_{\mathsf{H}} \eta_2(0,dv)p_{21}(\mathsf{H}|(0,v))=\frac{q_2}{D_2}\eta_{22} + b_2 q_2 \eta_{20}\\
 &=\eta_{10}.
\end{aligned}
\end{equation*}
Therefore, 
\begin{equation*}
J=\Phi_{12} -\Phi_{21}=\eta_{20}-\eta_{10}.
\end{equation*}
Direct verification confirms the equality of the flux computed at the other two walls:
\begin{equation*}
\eta_{20}-\eta_{10}=\eta_{32}-\eta_{22}=\eta_{11}-\eta_{31}=\frac{p(q_1-q_2)}{4p^2 + p(5-6p)(q_1+q_2) + (4-6p)q_1q_2}.
\end{equation*}
Explicitly, for the special value $p=\frac12$,
\begin{equation*}
J=\frac{\frac12\left[\exp\left(-\frac{C}{\kappa T_1}\right) - \exp\left(-\frac{C}{\kappa T_2}\right)\right]}{\left[1+\exp\left(-\frac{C}{\kappa T_1}\right)\right]\left[1+\exp\left(-\frac{C}{\kappa T_2}\right)\right]}.
\end{equation*}
Naturally, the  probability flux vanishes  when $T_1=T_2$. It also vanishes when $C=0$. When $C>0$ and  $T_1>T_2$, the flux is positive (counterclockwise circulation in Figure
\ref{exploded}.) When $T_1<T_2$ the direction of circulation is reversed.   Note that, in compartment $3$, the particle moves from the cold towards the hot wall.
This ratchet effect serves as a simple but potentially useful  model for the phenomenon,  characteristic of the Knudsen gas regime, 
of {\em thermal transpiration} (also known as {\em thermal creep}), in which the molecular mean free path is significantly larger than the dimensions of the container.
It    may be the basis  for the  design of stochastic thermodynamic machines.

\subsubsection{Compartment statistics and entropy production}\label{sec:stats}

The three compartments that appear in the cycle are exactly the two types
analysed in Sections~\ref{sec:openCompartment} and~\ref{sec:potentialC}:
a compartment bounded by one thermal wall and one potential barrier
(compartments~$1$ and~$2$), and a compartment bounded by two thermal
walls (compartment~$3$).  Their sojourn statistics\----expected time and
expected entropy produced\----are obtained by applying the general
formulas of Section~\ref{compartment_statistics} with the resolvent
$(I-Q)^{-1}$ computed in those sections.  We only
quote the results averaged over the appropriate entrance velocity
distributions; the necessary surface Maxwellian integrals are
\begin{equation}\label{eq:integrals}
  g_i:=\mu_i\!\Bigl(\frac1{v_n}\Bigr)=\sqrt{\frac{\pi m}{2\kappa T_i}},\qquad
  \mathrm{erf}_i:=\operatorname{erf}\!\Bigl(\sqrt{\frac{C}{\kappa T_i}}\Bigr),\qquad
  m_i:=\mu_i(E_0)=\frac{n+1}{2}\,\kappa T_i .
\end{equation}

\paragraph{Middle compartment (two thermal walls).}
For a sojourn entered at wall $W_j$ (velocity distributed as $\mu_j$,
cf.\ Section~\ref{sec:stationary}), the expected time and entropy are 
\begin{align}
  \mathbb{E}[t|(3,j)] &=
   \ell_3\,g_j
   +\frac{\ell_3}{p(2-p)}\Bigl[(1-p)\,g_{\bar\jmath}
   +(1-p)^2g_j\Bigr],\label{eq:t3}\\[2pt]
  \mathbb{E}[S|(3,j)] &=
   \frac{m_j-m_{\bar\jmath}}{\kappa T_{\bar\jmath}}
   +\frac{1}{p(2-p)}\Bigl[(1-p)\,\frac{m_{\bar\jmath}-m_j}{\kappa T_j}
   +(1-p)^2\,\frac{m_j-m_{\bar\jmath}}{\kappa T_{\bar\jmath}}\Bigr],\label{eq:S3}
\end{align}
where $\bar\jmath$ denotes the opposite wall.  These expressions follow
from the resolvent of the open-compartment scattering operator in
Section~\ref{sec:openCompartment}, averaged over the entrance Maxwellian.

\paragraph{Barrier compartments (one thermal wall and the barrier).}
Using the resolvent $(I-Q)^{-1}$ obtained in Section~\ref{sec:potentialC}
and the same averaging procedure, the expected sojourn time and entropy
for the two entrance types are found to be
\begin{align}
  \mathbb{E}[t|(i,i)] &=
   \frac{\ell_i\,g_i\,(1+\mathrm{erf}_i)}{D_i},\label{eq:tii}\\[2pt]
  \mathbb{E}[S|(i,i)] &=
   -\,\frac{q_iC}{D_i\,\kappa T_i},\label{eq:Sii}\\[2pt]
  \mathbb{E}[t|(i,0)] &=
   \frac{\ell_i\,g_{\bar\imath}(1-\mathrm{erf}_{\bar\imath})}{q_{\bar\imath}}
   +\frac{(1-p)\,\ell_i\,g_i\,(1+\mathrm{erf}_i)}{D_i},\label{eq:ti0}\\[2pt]
  \mathbb{E}[S|(i,0)] &=
   \frac{m_{\bar\imath}+C-m_i}{\kappa T_i}
   -\,\frac{(1-p)\,q_iC}{D_i\,\kappa T_i},\label{eq:Si0}
\end{align}
with $i=1,2$ and $\bar\imath=3-i$.  
Equations \eqref{eq:tii} and \eqref{eq:Sii} give the expected time and
entropy for an entrance through the thermal wall (the initial velocity is
distributed as $\mu_i$); Equations \eqref{eq:ti0} and \eqref{eq:Si0} give
the corresponding quantities for an entrance through the barrier, where
the initial velocity is distributed as
$q_{\bar{\imath}}^{-1}\mu_{\bar{\imath}}|_{\mathsf{T}}$.
 The
derivation is a straightforward evaluation of $(I-Q)^{-1}\phi$ and
$(I-Q)^{-1}(h+g)$ on the respective initial distributions, using the
integrals $\mu_i(\mathbf{1}_{\mathsf{R}}/v_n)=g_i\,\mathrm{erf}_i$,
$\mu_i(\mathbf{1}_{\mathsf{T}}E_0)=q_i(m_i+C)$ and the identity
$\mu_i(\mathbf{1}_{\mathsf{R}}E_0)-(1-q_i)m_i=-q_iC$.

\paragraph{Averaged statistics.}
With the stationary entrance masses $\eta_{ij}$ from
Section~\ref{sec:stationary} we form the means per transition  
\begin{equation}\label{eq:tSdef}
 \mathbb{E}[t]:=\sum_{(i,j)}\eta_{ij}\,\mathbb{E}[t(i,j)],\qquad
   \mathbb{E}[S]:=\sum_{(i,j)}\eta_{ij}\,\mathbb{E}[S(i,j)] .
\end{equation}

\begin{lemma}[Error function cancellation]\label{lem:erf}
 The coefficient multiplying the term $g_1\mathrm{erf}_1$ in $\mathbb{E}[t]$ equals
$(\ell_1-\ell_2)\,\eta_{20}/q_1$, and that of $g_2\mathrm{erf}_2$ equals
$(\ell_2-\ell_1)\,\eta_{10}/q_2$.  Consequently, if the two barrier
compartments have equal widths ($\ell_1=\ell_2$), all error functions
disappear from $ \mathbb{E}[t]$ regardless of $\ell_3$.
\end{lemma}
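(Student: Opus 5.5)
Only the barrier-compartment formulas contain error functions, so the plan is to collect those terms in $\mathbb{E}[t]$ and then use the linear relations (\ref{system_eta}). By (\ref{eq:t3}), the middle compartment contributes only $g_1,g_2$ with no $\mathrm{erf}$. By (\ref{eq:tii}) and (\ref{eq:ti0}), the term $g_1\mathrm{erf}_1$ arises from three places:
\begin{itemize}
\item the entrance $(1,1)$, with weight $\eta_{11}\ell_1/D_1$;
\item the entrance $(1,0)$, with weight $\eta_{10}(1-p)\ell_1/D_1$, from the second summand of (\ref{eq:ti0});
\item the entrance $(2,0)$, with weight $-\eta_{20}\ell_2/q_1$, from the first summand of (\ref{eq:ti0}) with $i=2$, $\bar\imath=1$.
\end{itemize}
The coefficient of $g_1\mathrm{erf}_1$ is therefore
\begin{equation*}
\frac{\ell_1}{D_1}\bigl[\eta_{11}+(1-p)\eta_{10}\bigr]-\frac{\ell_2}{q_1}\eta_{20}.
\end{equation*}

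It remains to show that $\bigl[\eta_{11}+(1-p)\eta_{10}\bigr]/D_1=\eta_{20}/q_1$. This comes from the fourth equation of (\ref{system_eta}),
\begin{equation*}
\eta_{20}=q_1D_1^{-1}\eta_{11}+b_1q_1\eta_{10},
\end{equation*}
together with $b_1=(1-p)/D_1$, which holds since $D_1=1-(1-p)(1-q_1)$. Dividing by $q_1$ gives exactly the required identity, so the coefficient equals $(\ell_1-\ell_2)\eta_{20}/q_1$.

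The statement for $g_2\mathrm{erf}_2$ follows by the symmetry $1\leftrightarrow2$. It uses the third equation of (\ref{system_eta}), $\eta_{10}=q_2D_2^{-1}\eta_{22}+b_2q_2\eta_{20}$, and yields the coefficient $(\ell_2-\ell_1)\eta_{10}/q_2$. When $\ell_1=\ell_2$ both coefficients vanish, and since $\ell_3$ multiplies only $\mathrm{erf}$-free terms, the consequence follows regardless of $\ell_3$.

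The main care point is the bookkeeping of (\ref{eq:ti0}): one must confirm that its first summand comes with $\ell_i g_{\bar\imath}(1-\mathrm{erf}_{\bar\imath})/q_{\bar\imath}$, so that the $\mathrm{erf}$ term has coefficient $-\ell_i/q_{\bar\imath}$. It should also be checked against the sign of $\mathrm{erf}$ in the integral identity $\mu_i(\mathbf{1}_{\mathsf{R}}/v_n)=g_i\,\mathrm{erf}_i$.
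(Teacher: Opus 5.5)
Your proposal is correct and follows the paper's proof essentially step for step. You collect the same three sources of $g_1\mathrm{erf}_1$ from (\ref{eq:tii}) and (\ref{eq:ti0}) and obtain the same coefficient. You then use the fourth equation of (\ref{system_eta}) with $b_1=(1-p)/D_1$ to get $\bigl[\eta_{11}+(1-p)\eta_{10}\bigr]/D_1=\eta_{20}/q_1$, and finish with the $1\leftrightarrow 2$ symmetry, exactly as the paper does.
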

\begin{proof}
The terms containing $g_1\mathrm{erf}_1$ come from three places:
\begin{enumerate}
  \item $\mathbb{E}[t|(1,1)]$ in \eqref{eq:tii}: the $\mathrm{erf}_1$
        part is $\displaystyle\frac{\ell_1 g_1\mathrm{erf}_1}{D_1}$;
  \item $\mathbb{E}[t|(1,0)]$ in \eqref{eq:ti0}: the second summand
        contributes $\displaystyle\frac{(1-p)\ell_1 g_1\mathrm{erf}_1}{D_1}$;
  \item $\mathbb{E}[t|(2,0)]$ in \eqref{eq:ti0} (with $i=2$,
        $\bar\imath=1$): the first term is
        $\displaystyle\frac{\ell_2 g_1(1-\mathrm{erf}_1)}{q_1}$, giving a
        $\mathrm{erf}_1$ part of $\displaystyle-\frac{\ell_2 g_1\mathrm{erf}_1}{q_1}$.
\end{enumerate}
Multiplying by the respective stationary masses and summing, the
coefficient of $g_1\mathrm{erf}_1$ in $ \mathbb{E}[t]$ is
\[
  \frac{\ell_1}{D_1}\,\eta_{11}
  +\frac{(1-p)\ell_1}{D_1}\,\eta_{10}
  -\frac{\ell_2}{q_1}\,\eta_{20}.
\]
From the entrance chain Equations (\ref{system_eta}) one has
$\eta_{20}= \frac{q_1}{D_1}\bigl[\eta_{11}+(1-p)\eta_{10}\bigr]$,
hence the first two terms combine to
$\frac{\ell_1}{D_1}\bigl[\eta_{11}+(1-p)\eta_{10}\bigr]
 =\frac{\ell_1}{q_1}\,\eta_{20}$.
Therefore the total coefficient is
$\frac{\ell_1-\ell_2}{q_1}\,\eta_{20}$, as claimed.
The second assertion follows by symmetry.
\end{proof}

For the special case $p=\tfrac12$ and equal widths
$\ell_1=\ell_2=\ell_3=:\ell$, the stationary masses  
\eqref{eq:etahalf} give the compact closed form
\begin{equation}\label{eq:thalf}
  \mathbb{E}[t]
  =\frac{3\ell}{2}\;
   \frac{(1+q_2)\,g_1+(1+q_1)\,g_2}{(1+q_1)(1+q_2)} .
\end{equation}

\paragraph{Clausius form of the entropy production.}
Because every thermal collision resamples the velocity from the local
Maxwellian and barrier events preserve the energy, the mean net heat
absorbed by the walls per transition vanishes, and $\mathbb{E}[S]$
can be written as
\begin{equation}\label{eq:clausius}
 \mathbb{E}[S]
  =Q_e\,\Bigl(\frac1{\kappa T_1}-\frac1{\kappa T_2}\Bigr),
\end{equation}
where $Q_e$ is the mean heat deposited into wall~$1$ per transition.
With \begin{equation*}
\Delta:=m_2-m_1=\tfrac{n+1}2\kappa\,(T_2-T_1)
\end{equation*} one finds
\begin{equation}\label{eq:Qe}
  Q_e=\frac{\Delta}{2p}\,\Bigl[\,p-(2p-1)\bigl(\eta_{31}
  +\eta_{32}\bigr)\Bigr]\;-\;C\,J,
\end{equation}
and in particular at $p=\tfrac12$,
$\displaystyle Q_e=\frac{\Delta}{2}-C\,J$.  The heat current thus
consists of a conduction term (present even without circulation) and a
convective term proportional to the net barrier crossings.

We can also consider entropy production and circulation per unit time:
\begin{equation*} 
  \dot e_p=\frac{\mathbb{E}[S]}{\mathbb{E}[t]},
  \qquad
  J_{\mathrm{time}}=\frac{J}{\mathbb{E}[t]}.
\end{equation*}
 For $p=\tfrac12$ and equal widths,
 we obtain the simpler equations
\begin{equation*}
  \dot e_p
  =\frac{\bigl[(1+q_1)(1+q_2)\,\Delta+(q_2-q_1)\,C\bigr]
   \Bigl(\dfrac1{\kappa T_1}-\dfrac1{\kappa T_2}\Bigr)}
   {3\ell\,\bigl[(1+q_2)\,g_1+(1+q_1)\,g_2\bigr]},
   \end{equation*}
   and  
\begin{equation*}
J_{\mathrm{time}}
  =\frac{q_1-q_2}{3\ell\,\bigl[(1+q_2)\,g_1+(1+q_1)\,g_2\bigr]}.
\end{equation*}
Similar but more complicated equations, giving the explicit dependence of entropy production and circulation on system parameters, can be obtained when thermal accommodation $\alpha$ is not $1$. We do not give those here.

 \end{document}